\documentclass[11pt]{article}
\usepackage[a4paper,margin=1in]{geometry}
\usepackage[T1]{fontenc}
\usepackage[utf8]{inputenc}
\usepackage[USenglish]{babel}
\usepackage{lmodern}
\usepackage{microtype}
\usepackage{amsmath,amssymb,amsthm,mathtools}
\usepackage{algorithm}
\usepackage[noend]{algpseudocode}
\usepackage{enumitem}
\usepackage{xspace}
\usepackage{xcolor}
\usepackage{graphicx}
\usepackage{hyperref}
\usepackage[nameinlink,noabbrev]{cleveref}
\usepackage[most]{tcolorbox}

\hypersetup{
  colorlinks=true,
  linkcolor=blue!50!black,
  citecolor=blue!50!black,
  urlcolor=blue!50!black
}

\newcommand{\LOCAL}{\textup{\textsc{LOCAL}}\xspace}
\newcommand{\LLL}{\textup{\textsc{LLL}}\xspace}
\newcommand{\poly}{\operatorname{poly}}

\newcommand{\E}{\mathbb{E}}
\newcommand{\Prob}{\mathbb{P}}

\newcommand{\eps}{\varepsilon}

\newtheorem{theorem}{Theorem}
\newtheorem{lemma}[theorem]{Lemma}
\newtheorem{proposition}[theorem]{Proposition}
\newtheorem{corollary}[theorem]{Corollary}
\theoremstyle{definition}
\newtheorem{definition}[theorem]{Definition}

\title{Triangle-Free Coloring in LOCAL via Resilient Lovász Local Lemma}

\author{Peter Davies-Peck, Xusheng Zhang}

\begin{document}

\maketitle

\begin{abstract}
The Lov\'asz Local Lemma (LLL) is a probabilistic tool that has been shown to be of central importance in the study of distributed algorithms. For example, the constructive LLL is known to be complete for the class of locally-checkable labeling problems with $o(\log n)$ randomized complexities in the \textsc{LOCAL} model. One classic application of the LLL is in coloring graphs with some sparse structure, such as triangle-free graphs. Triangle-free coloring therefore serves as a benchmark problem for techniques for sublogarithmic randomized distributed algorithms. 

The state-of-the-art distributed triangle-free coloring algorithm of Pettie and Su [ICALP 2013, Information and Computation 2015] uses $\frac{\Delta}{k}$ colors (where $k$ can be up to $(\frac14 - \varepsilon)\ln \Delta$) and consists of $O(k+\log^* n)$ applications of the distributed LLL. However, the distributed LLL is itself a difficult problem; despite significant study, the fastest algorithms known require $O(\log_\Delta n)$ or $O(\frac{\Delta}{\log\Delta})+\log^{O(1)}\log n$ rounds, and no $\log^{o(1)} n$-round algorithms exist except for restrictive special cases or graph classes such as trees. These LLL applications therefore dominate the round complexity of the algorithm.

In this work, we adapt the Pettie-Su's algorithm so that 
the resulting LLL instances can be solved in $\log^{O(1)}\log n$ rounds, by employing the `resilience' definition of Davies [SODA 2023]. This gives an $O(k)+ \log^{O(1)}\log n$ complexity (since the LLL is not needed when $k= \log^{\omega(1)}\log n$), essentially causing the LLL steps to no longer be the bottleneck of the algorithm. As a corollary we obtain the first $\log^{O(1)}\log n$-round algorithms for coloring triangle-free graphs with $o(\Delta)$ colors. The same framework also yields a companion girth-$5$ algorithm, using $(1+\varepsilon)\Delta/\ln \Delta$ colors in $O(k)+ \log^{O(1)}\log n$ rounds, matching the best known existential upper bound for the number of colors.

The LLL instances we use are among the most difficult for which efficient \textsc{LOCAL} algorithms have been shown, and we hope that this sheds further light on the solvability of the distributed LLL and may lead to further efficient applications.
\end{abstract}

\section*{Acknowledgements and AI Disclosure}
This work is supported by EPSRC New Investigator Award UKRI155 ``Distributed Lov\'asz Local Lemma''.

Codex was used to assist with language editing and \LaTeX{} typing. The tool affected the presentation of the manuscript text and source formatting. 
The authors take full responsibility for the content of this work.

\section{Introduction}

Graph coloring in the \LOCAL model is one of the benchmark symmetry-breaking problems in distributed graph algorithms. 
In the \LOCAL model, each vertex is a processor, and in each synchronous round it can exchange arbitrarily large messages with its neighbors; consequently, an $r$-round algorithm can use only the information contained in the radius-$r$ neighborhood of each vertex; 
we refer to Section~\ref{sec:preliminaries} for the formal setup. Linial's foundational work~\cite{Linial92} established this model as a clean way to study locality without conflating it with bandwidth limits or local computation costs. For general graphs of maximum degree $\Delta$, much of the literature has focused on $(\Delta+1)$-coloring and related degree-plus-one list-coloring variants. This target is optimal in general: complete graphs and odd cycles require $\Delta+1$ colors. There is a long line of literature on $(\Delta+1)$-coloring algorithms, culminating in extremely fast shattering-based algorithms ~\cite{Johansson99,BarenboimElkinKuhn14,HarrisSchneiderSu18,ChangLiPettie20,HKNT22}. Sparse graph classes raise a different question: can one exploit additional structure not merely to color quickly, but also to asymptotically beat the trivial $(\Delta+1)$ palette size?

Triangle-free graphs are among the cleanest examples of this phenomenon. On the existential side, the best current bound on the number of colors possible is $(1+o(1))\Delta/\ln \Delta$~\cite{Molloy19,Bernshteyn19}. On the distributed side, however, the best near-optimal results still stem from the iterative palette-reduction framework of~\cite{PettieSu15}. That framework achieves a $\frac{\Delta}{k}$ coloring, where $k$ can be up to $(\frac14 - \eps)\ln \Delta$, and consists of $O(k+\log^* n)$ randomized phases. If $\Delta>(\ln n)^{\Theta_\eps(1)}$, then those phases succeed with high probability, and so the overall round complexity is $O(k+\log^* n)$. However, for lower $\Delta$, an application of the distributed LLL is required to fix the randomness for those phases. If one plugs in the best currently-known distributed LLL algorithms, therefore, one gets a round complexity of 

\begin{itemize}
\item $O(\log n)$ using the LLL algorithm of Chung, Pettie, and Su~\cite[Theorem 9]{ChungPettieSu17};
\item $O(\frac{k\Delta}{\log\Delta} +k \log^{O(1)}\log n) = O(\Delta) + \log^{O(1)}\log n$ using the LLL algorithm of Davies~\cite{Davies23}.
\end{itemize}

Our contribution is to adapt the framework to give LLL instances that can be solved in $\log^{O(1)}\log n$ rounds, thereby obtaining an $O(k)+\log^{O(1)} \log n$-round algorithm. For the regime $\Delta = \log^{\eps} n$ with constant $\eps \in (0,1)$, which is the most difficult regime for the LLL, this represents a nearly exponential improvement in round complexity.

\subsection{Main results.}

\begin{theorem}
There is a distributed algorithm in the \LOCAL model that, for any $\eps>0$, 
any $\Delta = \Omega_\eps(1)$, and any $k\le (\frac14 - \eps)\ln\Delta$, 
with high probability colors every triangle-free graph of maximum degree $\Delta$ using $\Delta/k$ colors in $O_{\eps}(k)+ \log^{O(1)} \log n$ rounds.
\label{thm:triangle-free-main}
\end{theorem}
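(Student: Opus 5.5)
The plan is to run the Pettie--Su iterative palette-reduction framework essentially unchanged at the level of individual phases, and to change only how each phase's randomness is fixed when $\Delta$ is small. Two regimes are handled separately.

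\emph{Large degree.} If $\Delta \ge (\ln n)^{C_\eps}$ for a suitable constant $C_\eps$, every phase of the Pettie--Su algorithm succeeds with high probability, since all per-vertex concentration bounds fail with probability $\exp(-\Delta^{\Omega_\eps(1)}) \le n^{-c}$. This gives $O(k+\log^* n)$ rounds. The $\log^* n$ term is absorbed into $\log^{O(1)}\log n$. Similarly, if $k = \log^{\omega(1)}\log n$ then $\Delta \ge e^{4k}$ is already of this form, so no LLL is needed.

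\emph{Small degree.} Otherwise $\Delta \le \log^{O(1)} n$, and each phase $i$ is an LLL instance. Its bad event $B_v$ says that vertex $v$'s palette size or its uncolored degree deviates from the tracked recurrences $p_i,d_i$. I will replace each phase by a \emph{resilient} LLL instance in the sense of Davies~\cite{Davies23}: the bad events must stay improbable, at probability $\exp(-\Delta^{\Omega(1)})$, even after an adversary arbitrarily fixes or re-randomizes the random variables in a sparse set of neighbourhoods. Davies' framework then solves such instances in $\log^{O(1)}\log n$ rounds, via shattering plus deterministic network decomposition on small components, independent of $\Delta$.

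The key steps, in order, are as follows.
\begin{enumerate}[label=(\roman*)]
\item Restate one Pettie--Su phase as a product-space process. Each vertex samples a color from its palette and an activation coin, conflicts are resolved, and colors are removed from neighbors' palettes. I then state the target invariants with a slack factor $(1\pm\delta_i)$, where $\sum_i \delta_i = o(1)$.
\item Prove that $B_v$ is a Lipschitz function of the variables within distance $2$ of $v$. I will then use Talagrand or Kim--Vu-type concentration, exploiting triangle-freeness so that neighbors of $v$ share no mutual edges, to show $\Pr[B_v] \le \exp(-\Delta^{c})$ \emph{uniformly over} adversarial perturbations of any $\Delta^{c/2}$ of the relevant variables. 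This is the resilience condition.
\item Handle the fact that after an LLL round some vertices may be ``frozen'' with invariants that are slightly off. I will strengthen the invariants so that vertices whose neighbourhoods were disturbed are simply removed and deferred, with degree counts tolerating this.
\item Run $O_\eps(k)$ phases, each costing $O(1)$ rounds plus one resilient-LLL invocation. Combined naively, this gives $k\log^{O(1)}\log n$ rounds. To reach $O(k)+\log^{O(1)}\log n$, I will instead postpone the fixing: I shatter all phases together, each phase succeeding at all but a $\Delta^{-\omega(1)}$-fraction of vertices, and defer the failed vertices. The deferred vertices then form components of size $\poly\log n$, which are finished at the end by a single deterministic $\log^{O(1)}\log n$ procedure with $\Delta/k$ colors. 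This is possible because each leftover vertex still has palette larger than its uncolored degree.
\item The final cleanup colors vertices with $d<p$ by a standard $(\deg+1)$-list-coloring algorithm within the polylog-size components.
\end{enumerate}

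\textbf{Main obstacle.} I expect step (ii)--(iii) to be the main obstacle: proving resilience near the threshold $k \approx (\frac14-\eps)\ln\Delta$. There the slack is only $\Delta^{-\Theta(\eps)}$, so adversarial perturbations must affect only $\Delta^{o(\eps)}$ variables per neighbourhood. My first attempt will be to choose Davies' resilience parameter as a small power $\Delta^{\eps/10}$ and verify that each concentration bound has exponent $\Delta^{\eps/2}$. I will also track error accumulation across phases via the deferred-vertex budget.
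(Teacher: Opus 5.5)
There is a genuine gap, centred on step (ii): the resilience you plan to prove is not what Davies' theorem requires, and it sits in the wrong parameter regime.

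Davies' Theorem~12 runs in $O(r)+\log^{O(1)}\log n$ rounds for an instance that is resilient with respect to an explicit $r$-partition $\Phi$. Resilience means: except with probability $d^{-30}$ over the first sample, for \emph{every} part $\Phi_j$ and \emph{every} $S\subseteq\Phi_j$, the event with $S$ freshly resampled has probability below $d^{-10}$. The running time is therefore not independent of the instance. To stay within $\log^{O(1)}\log n$ you need $r=\log^{O(1)}\Delta$. Any reading of your ``resilience parameter $\Delta^{\eps/10}$'' forces $r=\Delta^{\Omega(\eps)}$, i.e.\ $\log^{\Omega(1)}n$ rounds when $\Delta=\log^{\Theta(1)}n$.

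With $r=\mathrm{polylog}\,\Delta$, a part contains $\Theta(\Delta/\mathrm{polylog}\,\Delta)$ neighbours of a vertex, not $\Delta^{o(\eps)}$, and $S$ ranges over exponentially many subsets of them. A Lipschitz/Talagrand argument cannot absorb this, since a single distance-two vertex can affect up to $\Delta$ neighbours of $v$. The paper handles this with three ingredients you are missing:
(a) an additive slack of relative order $1/r$ built into every bad event: the factor $(1-200/r)$ in $Q_i^+$, the term $\frac{300}{r}t_{i-1}$ in $R_i^+$, and the terms $\frac{200}{r}d'_{i-1}+\frac{\Delta}{r(\log\Delta)^8}$ in $M_i^+$. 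This is affordable only because the process stops at $d_{\mathrm{stop}}=\Delta/(\log\Delta)^6$ and finishes with $d_{\mathrm{stop}}+1$ fresh colors reserved in $p_0$.
(b) a \emph{color-balanced} $r$-partition, giving each part only an $O(1/r)$ share of every vertex's neighbours \emph{and} of its $c$-neighbours for every color $c$. This is essential because $c$-degrees drop to $\Delta^{\Theta(\eps)}$. The partition is itself computed in $\log^{O(1)}\log n$ rounds by an auxiliary $1$-resilient LLL.
(c) first-sample ``symptom'' events (the sets $A_{i,j}(u,c)$, and the events $O_i(u,c')$ together with a signature color per vertex). These certify the bound for all $S\subseteq\Phi_j$ at once, since no union bound over $S$ is available.

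Relatedly, your $B_v$ tracks only palette size and degree. The triangle-free analysis needs an (average) $c$-degree invariant. The sampling rate $\pi_i$ and retention factor $\beta_i$ are calibrated to the $c$-degree bound $2t_{i-1}$. It is the padded average $D_i(u)\le t_i'$ that keeps the palette at size $\Omega(p_i)$ after high-$c$-degree colors are filtered out.

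Steps (iii)--(v) are unnecessary and, as stated, unsound. Davies' algorithm returns an assignment avoiding \emph{all} bad events, so there are no ``disturbed'' vertices to freeze or defer. Cross-phase shattering is not needed either. When $\Delta\le(\log n)^C$ one has $k\le\frac14\ln\Delta=O(\log\log n)$ and only $O(k+\log\log\Delta)$ phases, so one resilient-LLL call per phase already totals $\log^{O(1)}\log n$ rounds; this is exactly how the paper obtains $O(k)+\log^{O(1)}\log n$.

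Finally, the claim that each leftover vertex has a palette larger than its uncolored degree has no support. A vertex deferred in phase $i$ holds at most $p_i\le\Delta/k$ colors but may have up to $\Delta$ neighbours that are colored afterwards, each able to delete one of its colors. Once it leaves the process, no invariant protects it. Even for non-deferred vertices near $k\approx\frac14\ln\Delta$, palettes shrink to $\Delta^{O(\eps)}$ while the degree bound only reaches $\Delta/\mathrm{polylog}\,\Delta$. That is why the paper ends with fresh colors rather than a $(\deg+1)$-list-coloring from the remaining palettes.
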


By applying Theorem \ref{thm:triangle-free-main} with $k= \min\{\log^{O(1)}\log n, \frac15\ln \Delta\}$, we can color triangle-free graphs with some $o(\Delta)$ number of colors very quickly:

\begin{corollary}
There is a distributed algorithm in the \LOCAL model that, with high probability, colors every triangle-free graph of maximum degree $\Delta$ using $o(\Delta)$ colors in $\log^{O(1)} \log n$ rounds.
\end{corollary}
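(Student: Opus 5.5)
The corollary follows from Theorem~\ref{thm:triangle-free-main} with a suitable choice of $k$, plus a separate treatment of small $\Delta$. Fix a constant $\eps$, say $\eps = \frac{1}{20}$, so that $\frac15 \le \frac14-\eps$. Let $C$ be the constant hidden in the $\log^{O(1)}\log n$ term of Theorem~\ref{thm:triangle-free-main}, and set
\[
k \;=\; \min\Bigl\{ \lceil \log^{C}\log n\rceil,\ \tfrac15\ln\Delta \Bigr\}.
\]
This $k$ satisfies $k \le (\frac14-\eps)\ln\Delta$. We also have $k\to\infty$ whenever $\Delta\to\infty$, because both arguments of the minimum grow; note that $\Delta \le n$, and $\log\log n \to \infty$ whenever $\Delta\to\infty$. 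Hence $\Delta/k = o(\Delta)$.

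For the running time, $O_\eps(k) \le O(\log^{C}\log n)$ since $k \le \lceil \log^{C}\log n\rceil$. The total round count is therefore $O(\log^C\log n) + \log^{O(1)}\log n = \log^{O(1)}\log n$. Both $k$ and the choice of algorithm are computable locally, assuming $n$ and $\Delta$ are known, as is standard. The high-probability guarantee is inherited directly from Theorem~\ref{thm:triangle-free-main}.

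The small-$\Delta$ regime needs separate handling: the theorem requires $\Delta=\Omega_\eps(1)$, i.e.\ $\Delta\ge\Delta_0(\eps)$, and also $k\ge1$ so that $\Delta/k$ colors makes sense. When $\Delta<\max\{\Delta_0, e^5\}$, $\Delta$ is a constant, and $o(\Delta)$ is an asymptotic statement, so any bound of at most $\Delta+1$ colors is acceptable there. We run a standard $(\Delta+1)$-coloring algorithm, e.g.\ one of~\cite{ChangLiPettie20,HKNT22}, which takes $\log^{O(1)}\log n$ rounds. Alternatively, since $\Delta = O(1)$, a deterministic $O(\Delta^2+\log^* n)$ algorithm already suffices.

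The only step needing care is the interpretation of ``$o(\Delta)$'': it means a number of colors $f(\Delta,n)$ with $f/\Delta\to 0$ as $\Delta\to\infty$, uniformly in $n\ge\Delta$. This is guaranteed because $k \ge \min\{\log^C\log\Delta, \frac15\ln\Delta\}\to\infty$.
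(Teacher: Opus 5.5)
Your proposal is correct and takes the same route as the paper, which simply applies Theorem~\ref{thm:triangle-free-main} with $k=\min\{\log^{O(1)}\log n,\frac15\ln\Delta\}$, so that $k\to\infty$ and the $O(k)$ term is absorbed into $\log^{O(1)}\log n$. Your explicit handling of bounded $\Delta$ via a $(\Delta+1)$-coloring, and your check that $\Delta/k=o(\Delta)$ holds uniformly in $n\ge\Delta$, fill in details the paper leaves implicit.
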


This complexity is polynomially close to tight, since there is an $\Omega(\log\log n)$ lower bound even for the problem of coloring constant-degree trees with $\Delta$ or fewer colors~\cite{BrandtEtAl16}. The number of colors matches the best known for any (non-trivial, i.e. $o(n)$-round) distributed algorithm. Molloy \cite{Molloy19} showed that triangle-free graphs can be $(1+o(1))\frac{\Delta}{\ln\Delta}$-colored, but it is unknown whether this can be done in a distributed fashion. For girth-$5$ graphs, however, this $(1+o(1))\Delta/\ln \Delta$ result is achievable with the same round complexity:

\begin{theorem}
    \label{thm:girth 5}
There is a distributed algorithm in the \LOCAL model that, for any $\eps>0$, 
any $\Delta = \Omega_\eps(1)$, and any $k\le (1 - \eps)\ln\Delta$, 
with high probability, $(1+o(1))\Delta/k$-colors every girth-$5$ graph of maximum degree $\Delta$ in  $O_{\eps}(k)+ \log^{O(1)} \log n$ rounds.
\end{theorem}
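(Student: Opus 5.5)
We follow the same template as for Theorem~\ref{thm:triangle-free-main}, but replace the triangle-free analysis of the Pettie--Su palette-reduction round with its girth-$5$ counterpart. Pettie and Su already analyse this case: one step is a random trial plus palette and degree update, and their girth-$5$ variant reaches $(1+\eps)\Delta/\ln\Delta$ colors, i.e.\ $k$ up to $(1-\eps)\ln\Delta$.

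\textbf{Key steps.} We run $O_\eps(k)$ iterations. Each iteration keeps two invariants for every uncolored vertex $v$: its palette size $p_i$ and its uncolored degree $d_i$ stay within $(1\pm o(1))$ factors of the deterministic recurrences $\hat p_i,\hat d_i$.
\begin{itemize}
\item In girth-$5$ graphs, the neighbors of $v$ have no common neighbors other than $v$. So the events ``neighbor $u$ keeps color $c$'' are nearly independent across $u\in N(v)$, conditioned on the radius-$1$ randomness around $v$.
\item Because of this near-independence, the palette-loss and degree-loss estimates concentrate with the correct first-order constant. The tighter concentration is exactly what lets the ratio $\hat d_i/\hat p_i$ be driven down until $k\approx(1-\eps)\ln\Delta$, rather than the $\frac14$ barrier that the triangle-free second-moment loss imposes.
\item Once $\hat d_i/\hat p_i$ is a small constant, the remaining graph is a $(\deg+1)$-list-coloring instance. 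We finish it with a known $\log^{O(1)}\log n$-round list-coloring algorithm.
\end{itemize}
As in the triangle-free case, the case $\Delta\ge \log^{\omega(1)}\log n$ with $\Delta$ large relative to $\log n$ is handled by union bounds directly. When $k=\log^{\omega(1)}\log n$, the LLL is unnecessary and the round bound is $O(k)$.

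\textbf{LLL regime and main obstacle.} For small $\Delta$, each iteration's bad events (a vertex's palette or degree deviating) must be avoided via an LLL instance. The main obstacle is showing that these instances satisfy the resilience condition of Davies~\cite{Davies23}, so they are solvable in $\log^{O(1)}\log n$ rounds instead of $O(\Delta)$. The condition requires each bad event to remain unlikely even after an adversary fixes or perturbs the variables of a small number of other vertices.

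I would mirror the modification used for Theorem~\ref{thm:triangle-free-main}:
\begin{itemize}
\item Use slightly relaxed thresholds, with deviations allowed up to $\Delta^{1-\delta}$.
\item Use bounded-difference (Talagrand-type) concentration, so that changing the outcome at $\Delta^{\delta/2}$ vertices shifts each statistic by only $o(\hat p_i)$ or $o(\hat d_i)$.
\end{itemize}
The girth-$5$ structure only helps here, since each neighbor's outcome influences $v$'s palette through a single path. Accumulating the relaxations over $O(k)$ iterations gives the $(1+o(1))$ loss in the color count.

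Summing $O(1)$ rounds per iteration plus $\log^{O(1)}\log n$ for each resilient-LLL call would naively give $k\log^{O(1)}\log n$. As in the main theorem, I would batch the LLL fixing so that it is performed once for all $O(k)$ iterations, or note that only $k\le\log^{O(1)}\log n$ iterations need LLL. Either way the total is $O_\eps(k)+\log^{O(1)}\log n$.
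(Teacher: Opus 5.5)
There are two genuine gaps.

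\textbf{Invariants and the finishing step.} Your iteration tracks only palette size and uncolored degree, and with just these invariants the process cannot be shown to make progress. A color $c$ survives at $u$ only if no $c$-neighbour selects it, which has probability $(1-\pi)^{|N_c(u)|}$. Without a separate invariant, the only certifiable bound is $|N_c(u)|\le d$. So the palette can only be certified to retain a $(1-\pi)^{d}$ fraction per round, while each neighbour stays uncolored with probability $(1-\pi)^{|\widetilde P(v)|}\ge(1-\pi)^{p}$. Since $d_0/p_0\approx k\ge1$, the certified ratio $d/p$ never decreases.

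The quantity that contracts, in Pettie--Su and in the paper, is $t_i/p_i$, where $t_i'$ bounds every individual $c$-degree. The $c$-degrees decay roughly like $\alpha_i\beta_i$ and palettes like $\beta_i$, so $t/p$ drops by roughly $\alpha_i$ per round. Girth $5$ is exactly what permits the per-color invariant $|N_{i,c}(u)|\le t_i'$, and this invariant is also what keeps the keep-probability $\beta_i/(1-\pi_i)^{|N_{i-1,c}(u)|}$ at most $1$.

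Even with $c$-degrees tracked, your $(\deg+1)$-list finish is not available. In the girth-$5$ schedule $p_i'\le t_{i-1}'$, hence $\alpha_i\ge\beta_i$, and $d/p$ is still nondecreasing. For $k$ near $(1-\eps)\ln\Delta$, the palettes shrink to $\Delta^{O(\eps)}$ while the degree bound is still $\Delta/\poly\log\Delta$. The paper instead starts from $p_0=\Delta/k-\Delta/(\log\Delta)^6-1$ and adds the residual-degree events $M_i^+(u)$. These drive the degree below $d_{\mathrm{stop}}=\Delta/(\log\Delta)^6$ within $O(k+\log\log\Delta)$ rounds, and the remainder is colored with $d_{\mathrm{stop}}+1$ \emph{fresh} colors via Theorem~\ref{thm:clp20}.

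\textbf{Resilience.} You correctly identify this as the crux, but it is misdescribed, and bounded differences do not work at the required scale. Davies's algorithm is handed an $r$-partition $\Phi$. For every part $\Phi_j$ and \emph{every} subset $S\subseteq\Phi_j$, the variables in $S$ are replaced by a fresh independent sample, and one needs $\Prob_{\mathcal V^1}[\exists j,S:\Prob_{\mathcal V^2}[A^S]\ge d^{-10}]\le d^{-30}$. The running time is $O(r)+\log^{O(1)}\log n$, so in the critical range $\Delta=(\log n)^{\Theta(1)}$ you must take $r=\poly\log\Delta$ (the paper uses $(\log\Delta)^{20}$). A single event then sees $\Theta(d/r)$ resampled neighbours and up to $\Theta(td/r)$ resampled vertices at distance two, not $\Delta^{\delta/2}$.

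At this scale a Lipschitz argument gives nothing. The variable of $u$ itself can remove $c$ from all of its neighbours' palettes at once. The resampled distance-two vertices can jointly flip the status of every $c$-neighbour.

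The paper's proof needs the following ingredients, none of which appear in your proposal, so the claim that the round instances are resilient is unsupported:
\begin{itemize}
\item a \emph{color-balanced} partition (Definition~\ref{def:r-partition-degree-params}) bounding both $|N_{i-1}(u)\cap\Phi_j|$ and $|N_{i-1,c}(u)\cap\Phi_j|$, built by an auxiliary $1$-resilient \LLL (Theorem~\ref{thm:color-balanced-partition}), so each color's exposure to one part costs only $O(\pi_i t/r)=O(1/(Kr))$;
\item first-sample symptom events such as $A_{i,j}(u,c)$ and $J_{i,j}(u,c)$ that bound the effect of all $S\subseteq\Phi_j$ simultaneously, avoiding a union bound over exponentially many $S$;
\item witness-color indicators $Y_x^S$ for $c$-neighbours that become uncolored; the independence of these indicators and of those defining $A_{i,j}(u,c)$ is where girth $5$ is actually used;
\item a Markov-inequality step for the case $u\in S$.
\end{itemize}

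Your round count is fine via your second option: the \LLL is needed only when $\Delta\le(\log n)^C$, giving $O(k+\log\log\Delta)=O(\log\log n)$ iterations. Batching all iterations into one \LLL call is not possible, however, since each round's instance depends on the previous round's outcome.
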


Throughout the paper, the notation \(\Delta=\Omega_\eps(1)\) means that after fixing \(\eps\), 
we assume \(\Delta\ge \Delta_\eps\) for a sufficiently large threshold $\Delta_\eps$ depending only on \(\eps\). 
and we treat \(\eps\) as an arbitrarily small but fixed positive constant.
We expect that our arguments could also be adapted to the setting where \(\eps=\eps(\Delta)\) tends to \(0\) sufficiently slowly, 
by tracking these dependencies more explicitly.

Both the triangle-free and girth-$5$ results extend to the corresponding list-coloring setting. 
Indeed, the algorithms and proofs are phrased entirely in terms of evolving palettes and their sizes. 
Thus the same running times hold when each vertex \(u\) is given an arbitrary initial palette \(P_0(u)\) of size at least \(\Delta/k\).

Our analysis is based on the following general idea, which is not a formal or well-defined statement but provides the intuition for the approach.

\begin{tcolorbox}[enhanced, sharp corners, drop fuzzy shadow=gray]
\textbf{Informal Claim.} 
LLL instances in which the bad events are that some quantities are too far from their expectation, where

\begin{itemize}
\item the quantities are primarily influenced by nodes' immediate neighborhoods, and
\item we can tolerate $O(\frac{\Delta}{r})$ additive error in the quantities,
\end{itemize}
 
should be $O(r)$-resilient and therefore $O(r)+\log^{O(1)}\log n$-round solvable by the algorithm of \cite{Davies23}.
\end{tcolorbox}

Coloring applications of the LLL often have properties along these lines, 
since error in concentration bounds can often be offset by allowing a corresponding number of additional colors. 
For the triangle-free coloring algorithm of \cite{PettieSu15}, for instance, if we allow an additional additive $O(\frac{\Delta}{\poly\log\Delta})$ error in the bounds proven on degree, 
palette size, and color degree throughout the algorithm, then we cannot show that degrees drop below $O(\frac{\Delta}{\poly\log\Delta})$. 
We therefore require $O(\frac{\Delta}{\poly\log\Delta})$ additional colors to color the remaining uncolored graph, but this is negligible in the overall color bound. The potential applicability of this idea is not limited to coloring problems: problems requiring that the frequencies of certain output labels in a node's vicinity are within certain thresholds, while easier than general LLL instances, constitute the majority of concrete applications of the LLL in distributed algorithms.

As we will see, though, incorporating the bad events in the LLL instances for triangle-free coloring into the resilience framework of \cite{Davies23} is highly nontrivial and requires substantial changes to the analysis and some technical changes to the algorithm of \cite{PettieSu15}.

\subsection{Technical Overview}
Our starting point is the iterative coloring process of~\cite{PettieSu15}. 
In each round, that algorithm maintains a palette for every uncolored vertex and 
analyzes how palette sizes and average $c$-degrees evolve under a random selection process, following the average-$c$-degree viewpoint already used in earlier triangle-free coloring analyses such as~\cite{JamallColoring11}, where a $c$-degree counts how many neighbors can still have color $c$ in their palette. 
Our contribution is to speed up the distributed \LLL step that resolves the local failures left by this random step. 
For that purpose, we analyze the resulting \LLL instances using the resilient partition framework of~\cite{Davies23}. Informally, resilience means that a bad event stays unlikely even if one adversarially chooses a subset of variables from a single part of a partition to be resampled, while the new values themselves come from fresh randomness; see Section~\ref{sec:resilient-lll} for the formal definition.

In that framework, the \LLL instance is equipped with a partition \(\Phi=\{\Phi_1,\dots,\Phi_r\}\) of the bad events, and the algorithm processes one part \(\Phi_j\) at a time. The partition is useful only when each event has few neighbors in each part, because then only few neighboring events can sample and possibly revert simultaneously in any one step, which keeps the number of possible reversion patterns under control and makes resilience provable. 

Our technical goal is to prove that the bad events arising from the coloring process are resilient with respect to a suitable partition. 
The main complication is that these bad events involve the \(c\)-degrees of one-hop neighbors, 
so in the resilience proof one must control how resampling a single part propagates through two-hop neighborhoods. 
On top of this, triangle-free graphs still have dependencies that mean that individual \(c\)-degree does not decrease `smoothly', 
so  the analysis here and in \cite{PettieSu15} instead analyzes the \emph{average} \(c\)-degree of a node's neighbors. For girth-$5$ graphs, the same resilient interface leads to a shorter argument, because one can track palette-size, \(c\)-degree, and residual-degree events for each node directly and thereby recover the sharper constant-$1$ regime.

The key new object for our setting is therefore a \emph{color-balanced partition} for the active random choices of the coloring process; see Section~\ref{sec:color-balanced-partition} for the formal definition. It is chosen so that for each vertex \(u\), each color \(c\), and each part \(\Phi_j\), the vertices in \(\Phi_j\) contribute only a \(1/r\)-share of the \(c\)-degree of \(u\), up to constant factors. This is the extra control needed for the color-by-color failure modes of the coloring process. Once the corresponding \LLL instance is shown to be resilient with respect to this partition, the available distributed \LLL algorithm yields an $O(r)+\log^{O(1)} \log n$-round algorithm, where \(r\) is the number of parts.

\subsection{Previous work.}
For sparse coloring, classical existential work already shows that triangle-free and girth-$5$ graphs admit far fewer than $\Delta+1$ colors. Ajtai, Koml\'os, and Szemer\'edi initiated the triangle-free line with the bound $\chi(G)=O(\Delta/\ln \Delta)$~\cite{AjtaiKomlosSzemeredi80}, Johansson later proved the corresponding asymptotic list-coloring bound~\cite{JohanssonTriangleFree96}, Kim established the sharper $(1+o(1))\Delta/\ln \Delta$ asymptotic for girth-$5$ graphs~\cite{Kim95}, and the best current target for the number of colors in triangle-free graphs is now $(1+o(1))\Delta/\ln \Delta$~\cite{Molloy19,Bernshteyn19}. Classical constructions of Kostochka and Bollob\'as show that, even for graphs of arbitrarily large girth, the chromatic number can be as large as $\Omega(\Delta/\log \Delta)$~\cite{Kostochka78,Bollobas78}, so the $\Delta/\ln\Delta$ scale is best possible up to a constant factor. For distributed algorithms, Grable and Panconesi gave efficient randomized Brooks--Vizing colorings for triangle-free and girth-$5$ graphs under additional regularity and degree assumptions~\cite{GrablePanconesi00}, and Barenboim, Elkin, Pettie, and Schneider studied fast distributed coloring more broadly in related sparse settings, including trees and sufficiently high-girth graphs~\cite{BarenboimEPS12}. The framework of~\cite{PettieSu15} gives distributed algorithms achieving both the triangle-free bound $(4+o(1))\Delta/\ln \Delta$ and the girth-$5$ bound $(1+o(1))\Delta/\ln \Delta$. Our work should be viewed as improving the distributed complexity of that framework while retaining the same asymptotic color guarantees in the two settings.

In our setting, the distributed Lov\'asz Local Lemma is the subroutine that handles the local bad events caused by the random coloring step. 
More broadly, however, distributed \LLL has become a central benchmark problem in the \LOCAL model; 
for example, it is the canonical complete problem for a major randomized \LOCAL complexity class on bounded-degree graphs~\cite{ChangPettie19}. 
The constructive resampling framework of~\cite{MoserTardos10} already implies a distributed algorithm via parallel resampling, and Chung, Pettie, and Su gave the first explicit distributed treatment, including $O(\log n)$-round algorithms under standard and strengthened criteria as well as an $\Omega(\log^* n)$ lower bound~\cite{ChungPettieSu17}. 
Brandt et al.\ later proved an $\Omega(\log \log n)$ lower bound even under much stronger criteria~\cite{BrandtEtAl16}. Fischer and Ghaffari introduced the shattering-based sublogarithmic approach~\cite{FischerGhaffari17}, which, when complemented by later distributed derandomization and modern network-decomposition tools~\cite{GhaffariHarrisKuhn18,RozhonGhaffari20,GhaffariGrunauRozhon21}, yielded an $O(d^2 + \log^{O(1)}\log n)$-round algorithm. 
Davies \cite{Davies23} improved this round complexity for the general LLL to $O(\frac{d}{\log d}+ \log^{O(1)}\log n)$, and also developed the resilient partition framework, which underlies our use.

There exists an $O(\log \log n)$-round LLL algorithm on trees \cite{CHLPU19}, an algorithm with $O(\log \log n)$-round \emph{average-case} complexity \cite{Davies-Peck25}, and $\log^{O(1)}\log n$-round algorithms for certain special cases of the LLL that exhibit particularly nice structures (e.g. \cite{HMN22, HM24}), such as that bad events can be avoided based on subsets of their variables regardless of the values taken by the other variables. However, no $\log^{1-\Omega(1)} n$-round worst-case complexity algorithm for the general LLL has yet been found. In this work we aim to widen the types of LLL instances for which $\log^{O(1)}\log n$ algorithms are known; the instances we use are more complex than previous efficiently-solvable special cases, and are the first special cases to be directly shown to fit the $r$-resilience framework of \cite{Davies23} with $r = \omega(1)$. Previously, $r$-resilience with $r = \omega(1)$ was only used by \cite{Davies23} as a stepping-stone to the general LLL result.

\section{Preliminaries}\label{sec:preliminaries}

\subsection{Model}

We work in the randomized \LOCAL model on an $n$-vertex graph $G=(V,E)$ of maximum degree $\Delta$. Each vertex $v\in V$ hosts a processor with a unique identifier. Communication proceeds in synchronous rounds, and in each round every vertex may perform arbitrary local computation, exchange arbitrarily large messages with its neighbors, and sample fresh private random bits. At the end of the computation, each vertex must output its own part of the global solution.

Thus, after $r$ rounds, the state of a vertex is determined entirely by its radius-$r$ neighborhood together with the identifiers, local inputs, and random choices revealed there. The complexity measure is the number of communication rounds. In particular, when we say that an algorithm colors $G$ in $t$ rounds, we mean that after $t$ rounds every vertex outputs a color and the resulting labeling is a proper coloring with the claimed guarantee. All success probabilities are taken over the internal randomness of the algorithm.

\subsection{Local Lemma Instances and the Polynomially-Weakened Regime}

An \LLL instance consists of a family $\mathcal X$ of bad events defined on a family $\mathcal V$ of mutually independent random variables. For each event $A\in\mathcal X$, let $\mathrm{var}(A)\subseteq \mathcal V$ denote the variables on which $A$ depends. The \emph{dependency graph} $G_{\mathcal X}$ has vertex set $\mathcal X$, with two events adjacent whenever their variable sets intersect. Its maximum degree is the dependency parameter $d$ of the instance. In the symmetric setting, $p$ denotes a common upper bound on the bad-event probabilities. When convenient, we write an \LLL instance as a tuple $\mathfrak L=(\mathcal X,\mathcal V)$.

\subsubsection{The polynomially-weakened criterion.}

The classical symmetric \LLL criterion is $epd\le 1$, where $p$ is the maximum bad-event probability and $d$ is the maximum dependency degree. In this paper, we work throughout in the more restrictive regime $p(ed)^c<1$ for a large constant $c$. Despite the terminology, this polynomially-weakened criterion is stricter than the traditional \LLL threshold. We will use two algorithms for this regime, depending on additional structure.

\subsubsection{An algorithm for resilient instances.}\label{sec:resilient-lll}

We use the resilient local lemma framework of~\cite{Davies23}. 
Following the notation and definitions there, we work with two independent copies $\mathcal V^1,\mathcal V^2$ of the variable family $\mathcal V$ and an $r$-partition $\Phi=\{\Phi_1,\dots,\Phi_r\}$ of the events, that is, a partition into $r$ parts. Since, in our applications, both events and variables will be explicitly associated with a particular node of the input graph (for example, events will be things such as ``node $v$'s degree reduces sufficiently'', and variables will be things such as ``the color picked by node $v$''), we will sometimes abuse notation by considering $\Phi$ to be a partition of the variables; this should be understood to mean that the variables in $\Phi_i$ are those associated with the nodes/events in $\Phi_i$. This is to avoid having to use the variable allocation notation in~\cite{Davies23}. 

For an event $A\in\mathcal X$ and a set of variables $S\subseteq \mathcal V$, let $A^S$ denote the event obtained by evaluating $A$ using the second sample on the variables in $S$ and the first sample on all other variables. Define
\[
A'=\bigcup_{i\le r,\;S\subseteq \Phi_i}
\left\{\Prob_{\mathcal V^2}\left[A^S\right]\ge d^{-10}\right\},
\]
where $d$ is the dependency parameter of the instance. The instance is called $\Phi$-resilient if for every event $A$,
\[
\Prob_{\mathcal V^1}[A']\le d^{-30}.
\]
Intuitively, $\Phi$-resilience means that even if one adversarially chooses an arbitrary subset of a single partition part to switch from the first sample to the second, no bad event becomes likely. We say that an instance is \emph{$r$-resilient} if it admits an $r$-partition $\Phi$ for which it is $\Phi$-resilient.
This is an additional structural assumption on top of the polynomially-weakened condition above; in particular, resilience is stricter than the polynomially-weakened condition alone.

\begin{theorem}[Davies~\cite{Davies23}, Theorem~12]
Any $r$-resilient \LLL instance, provided with an $r$-partition $\Phi$ of its variables for which it is $\Phi$-resilient, can be solved in $O(r+\log^{O(1)}\log n)$ rounds in randomized \LOCAL, with high probability.
\label{thm:davies12}
\end{theorem}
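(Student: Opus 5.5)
Since this theorem is quoted from~\cite{Davies23}, the plan is to reconstruct its proof in outline. The argument has two stages: an $O(r)$-round pre-shattering stage that processes the parts $\Phi_1,\dots,\Phi_r$ one at a time, and a post-shattering stage that solves the leftover instance on small components in $\log^{O(1)}\log n$ rounds.

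\textbf{Pre-shattering stage.} First, every variable is sampled from $\mathcal V^1$. Then, for $i=1,\dots,r$ in turn, we handle the events of $\Phi_i$ that currently hold.
\begin{itemize}
\item For each such event, we redraw the variables of $\Phi_i$ in its vicinity using the independent copy $\mathcal V^2$.
\item Each resampled event then checks, in $O(1)$ rounds, whether the redraw has made some neighboring event \emph{dangerous}, meaning its conditional probability given the fixed variables is at least $d^{-10}$.
\item If so, the resampled variables revert to their old values, and the event and its neighbors are marked \emph{frozen}, i.e.\ deferred to the second stage.
\end{itemize}
Every step touches only one part, so the set of variables switched from $\mathcal V^1$ to $\mathcal V^2$ is always a subset $S\subseteq\Phi_i$. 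This is exactly the situation the definition of $A'$ controls. By $\Phi$-resilience, each event $A$ is pushed to conditional probability at least $d^{-10}$ by some such switch only with probability at most $d^{-30}$. Hence each event ends the stage frozen or unresolved with probability $d^{-\Omega(1)}$. This stage costs $O(1)$ rounds per part, so $O(r)$ rounds in total.

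\textbf{Shattering.} Next we show that the frozen and unresolved events shatter into connected components of size $\poly(d)\log n$, with high probability. The standard route is to count $(d^{c})$-separated trees in a power of the dependency graph: events in such a tree depend on disjoint variable sets, so their failures are close to independent, and each fails with probability $d^{-\Omega(1)}$. I expect this to be the main obstacle, because reversions correlate the outcomes of neighboring events. To handle it, I would first union-bound over the possible reversion patterns within each part; the number of such patterns is controlled because an event has few dependency neighbors inside any single part. Only then would I apply the near-independence of separated events. This is also where the $d^{-30}$ versus $d^{-10}$ gap gets spent.

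\textbf{Post-shattering stage.} Each frozen event is left facing a residual instance in which its conditional probability is below $d^{-10}$. This still satisfies a polynomially-weakened criterion $p(ed)^{c}<1$ for a constant $c$ below $10$. We therefore run the standard post-shattering procedure on each small component: compute a network decomposition of the component, then apply a deterministic (derandomized) LLL algorithm to it. On components of size $\poly(d)\log n$ this takes $\log^{O(1)}\log n$ rounds, giving the claimed total of $O(r+\log^{O(1)}\log n)$.
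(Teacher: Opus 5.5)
The paper does not prove this theorem; it quotes it from \cite{Davies23} and uses it as a black box. So the question is whether your outline would prove it as stated. The two-stage shape is right: $O(r)$ pre-shattering steps, one per part, then network decomposition and a derandomized \LLL on small components. But the shattering step relies on a hypothesis the theorem does not have.

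\textbf{The shattering step.} You union-bound over reversion patterns inside a part, using that each event has few dependency neighbors there. Theorem~\ref{thm:davies12} assumes only $\Phi$-resilience, and the definition of $A'$ already places the union over \emph{all} $S\subseteq\Phi_i$ inside one first-sample event of probability at most $d^{-30}$. Counting patterns is how one \emph{proves} resilience for a general instance with $O(\log d)$ neighbors per part. It is not available inside the proof of the theorem.

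It also fails for exactly the instances this paper feeds in. In round $1$ a vertex of degree $\Delta$ has about $\Delta/r=\Delta/(\log\Delta)^{20}$ neighbors in a part, so there are $2^{\Omega(\Delta/(\log\Delta)^{20})}$ patterns. Section~\ref{sec:triangle-free} says explicitly that no union bound over $S$ is possible, which is why it introduces ``symptom events''.

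The shattering argument should instead use two facts. First, $A'$ is a function of $\mathcal V^1$ restricted to $\mathrm{var}(A)$, hence independent across events with disjoint variable sets. Second, $A\subseteq A'$ (take $S=\emptyset$), so first-sample violations already lie inside this rare local event.

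\textbf{The pre-shattering stage.} It does not land in the situation $A'$ controls. The definition bounds $\Prob_{\mathcal V^2}[A^S]$ for one set $S$ inside one part, chosen without looking at $\mathcal V^2$, on top of a genuine first sample. Your procedure departs from this in three ways.
\begin{enumerate}
\item Switched sets accumulate over steps. An event can end up with $\mathcal V^2$-values in both $\Phi_1$ and $\Phi_2$, and step $i$ acts on an assignment that is no longer distributed as $\mathcal V^1$.
\item As written, you redraw first and decide reversion afterwards. The surviving switch set then depends on the new values, and a per-$S$ bound cannot absorb that adaptivity without a union bound over $S$.
\item After reverting to old values every variable is fixed. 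So ``conditional probability given the fixed variables'' is $0$ or $1$, and the residual instance on which you claim probabilities below $d^{-10}$ is never specified.
\end{enumerate}

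To close this, you need to do three things. Say which variables remain free for the post-shattering phase. Fix the switched or freed sets before the corresponding fresh values are revealed. Argue that every non-deferred event is only ever evaluated against a one-part modification of a first sample, or otherwise carry a probability bound from step to step.

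The quantifier over all $S$ is what makes the within-step part work. An event with $\neg A'$ cannot become dangerous however a part's set is shrunk around the dangerous events, so reverting causes no cascade.
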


\subsection{Coloring the Remaining Graph}
After coloring the majority of the graph using the adapted framework of \cite{PettieSu15} coupled with our analysis of the resulting LLL instances, we will show that the maximum degree $d_{\mathrm{stop}}$ of the remaining uncolored graph is $o(\frac{\Delta}{\log\Delta})$ and then color it using $O(d_{\mathrm{stop}})$ fresh colors.

To do so, we can use the algorithm of Chang, Li, and Pettie \cite{ChangLiPettie20} running in $O(\log^* n + \text{Det}(\poly\log n)$ rounds, where $\text{Det}(n')$ is the deterministic complexity of (deg+1)-list coloring on $n'$-vertex graphs. Using the current fastest such deterministic algorithm, of \cite{GG24}, gives a round complexity of $\tilde O(\log^{5/3}\log n)$.

\begin{theorem}[$\Delta+1$-list-coloring \cite{ChangLiPettie20,GG24}]
Any graph of maximum degree $\Delta$, equipped with lists of size at least $\Delta+1$ at each vertex, can be list-colored in $\tilde O(\log^{5/3}\log n)$ rounds in randomized \LOCAL, succeeding with high probability.
\label{thm:clp20}
\end{theorem}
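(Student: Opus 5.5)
This statement is imported from the literature rather than proved from scratch, so the plan is to assemble it from the two cited results. The first ingredient is the randomized reduction of Chang, Li and Pettie~\cite{ChangLiPettie20}. Given an instance of $(\deg+1)$-list coloring (in particular $(\Delta+1)$-list coloring, since each list has size at least $\Delta+1\ge\deg(v)+1$), it runs $O(\log^* n)$ rounds of randomized color trials, which color all but a sparse remainder with high probability. Dense parts are handled by almost-clique decompositions and low-degree vertices by multi-trial coloring. The uncolored remainder then consists of connected components of size $\poly\log n$, each still with a valid $(\deg+1)$-list coloring instance on the residual palettes. This is the shattering step. Its statement in~\cite{ChangLiPettie20} is exactly a round complexity of $O(\log^* n)+\mathrm{Det}(\poly\log n)$, where $\mathrm{Det}(n')$ is the deterministic complexity of $(\deg+1)$-list coloring on $n'$-vertex graphs. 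We take it as a black box, checking only that the list version (arbitrary palettes of size $\ge \deg+1$) is covered, which it is.

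Next, we run a deterministic $(\deg+1)$-list coloring algorithm on each small component in parallel. Components are vertex-disjoint and non-adjacent once colored vertices are removed, since the residual palettes already exclude colors used by colored neighbors. Hence the parallel runs do not interfere, and each behaves as if on an $n'=\poly\log n$ vertex graph. Identifiers from $[n]$ can be reduced to $\poly(n')$-size identifiers if needed, or the algorithm may tolerate $\poly(n)$ identifier space at an additive $O(\log^* n)$ cost. The algorithm of Ghaffari and Grunau~\cite{GG24} gives $\mathrm{Det}(n')=\tilde O(\log^{5/3} n')$, which at $n'=\poly\log n$ becomes $\tilde O(\log^{5/3}\log n)$.

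Summing the two phases gives $O(\log^* n)+\tilde O(\log^{5/3}\log n)=\tilde O(\log^{5/3}\log n)$ rounds. The failure probability is that of the shattering phase, which is $1/\poly(n)$, because the deterministic phase never fails. The only point needing care is the interface between the two results. The deterministic bound of~\cite{GG24} must apply to the list variant on components whose vertices carry IDs from a larger space, and must be expressible in terms of component size rather than $n$. I expect this to be routine, via the standard ID-space-reduction remark, but it is the step I would verify most carefully.
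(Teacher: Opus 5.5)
Your proposal is correct and matches the paper, which gives no separate proof and likewise just composes the $O(\log^* n+\mathrm{Det}(\poly\log n))$ shattering bound of Chang, Li, and Pettie with the $\tilde O(\log^{5/3} n')$ deterministic bound of Ghaffari and Grunau, evaluated at $n'=\poly\log n$. One small correction: the Chang--Li--Pettie randomized reduction is stated for $(\Delta+1)$-list coloring, not for general $(\deg+1)$-list instances (only the shattered residual instances are $(\deg+1)$; the general randomized $(\deg+1)$ version is later work, e.g.\ \cite{HKNT22}), which is harmless here because the statement concerns $(\Delta+1)$-lists.
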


\section{Triangle-Free Graphs}\label{sec:triangle-free}

This section proves the triangle-free case through a round-based coloring framework. 
In each round, uncolored vertices make local random choices, conflicts are removed, and palettes are updated; 
we formulate each round as an \LLL instance whose purpose is to preserve some round-based invariants. 

This application of the resilient \LLL algorithm, however, is not automatic.
There are several layers of difficulty.
In each round, the iterative algorithm must maintain three invariants, so we must define three corresponding families of bad events and avoid them simultaneously.
Moreover, these events depend on not only by their one-hop neighbors but also by their two-hop neighbors.
In triangle-free graphs, these two-hop effects do not directly decompose into independent contributions to different one-hop neighbors, which makes the analysis more delicate.
Furthermore, to apply the resilient Lovász Local Lemma algorithm, the resilience condition must hold uniformly for every subset \(S\) of variables inside any given part of the partition.
The concentration bounds available to us are not strong enough to support a naive union bound over all such \(S\), so instead for each bad event we identify a ``symptom event'' that can be triggered by any choice of \(S\).
What makes the application even harder is that all of these requirements must be handled by one and the same partition, 
and such a partition also needs to be constructed fast.

The rest of the section addresses these issues.
\Cref{sec:triangle-free-round-instance} defines the round-\(i\) instance, \Cref{sec:color-balanced-partition} gives the partition tool used by the resilient solver, 
\Cref{sec:triangle-free-iterative-algorithm} presents the iterative algorithm, \Cref{sec:triangle-free-parameters} specifies the bad events and parameter schedule, and \Cref{sec:triangle-free-main-lemmas} assembles the lemmas to prove Theorem~\ref{thm:triangle-free-main}. Throughout Sections~\ref{sec:triangle-free} and~\ref{sec:triangle-free-lemma-proofs}, all asymptotic inequalities are interpreted for sufficiently large \(\Delta = \Omega_\eps(1)\); equivalently, after fixing \(\eps\), we assume \(\Delta\ge \Delta_\eps\).

\subsection{The Round-\texorpdfstring{$i$}{i} LLL Instance}
\label{sec:triangle-free-round-instance}

We now make the first step of the roadmap precise by considering one generic round \(i\) and describing exactly what the local \LLL solver receives. The process is monotone: once a vertex is colored, it remains colored and plays no role in later rounds. Therefore, round \(i\) only sees the vertices that are still uncolored after rounds \(1,\dots,i-1\); we call them \emph{active at the start of round \(i\)}.

Fix an iteration \(i\ge 1\). Let \(G_{i-1}\) be the graph induced by vertices that are still uncolored after rounds \(1,\dots,i-1\). 
Let \(P_{i-1}(u)\) be the palette of an active vertex \(u\), and let \(N_{i-1}(u)\) be its residual neighborhood. For any color \(c\), define the color-specific neighborhood
\[
N_{i-1,c}(u)=\{v\in N_{i-1}(u): c\in P_{i-1}(v)\}.
\]
We call the vertices in \(N_{i-1,c}(u)\) the \(c\)-neighbors of \(u\), and call \(|N_{i-1,c}(u)|\) the corresponding \(c\)-degree. The local random step is given in Algorithm~\ref{alg:select}.

\begin{algorithm}[H]
\caption{$\mathrm{Select}(u,\pi_i,\beta_i)$}
\label{alg:select}
\begin{algorithmic}[1]
\ForAll{$c\in P_{i-1}(u)$ independently}
  \State include $c$ in $\mathrm{Sel}_i(u)$ with probability $\pi_i$
  \State $r_c\gets \beta_i/(1-\pi_i)^{|N_{i-1,c}(u)|}$
  \If{$r_c\le 1$}
    \State include $c$ in $K_i(u)$ with probability $r_c$
  \EndIf
\EndFor
\State \Return $(\mathrm{Sel}_i(u),K_i(u))$
\end{algorithmic}
\end{algorithm}

Here \(\mathrm{Sel}_i(u)\) is the set of colors tentatively selected by \(u\), 
while \(K_i(u)\) is the set of colors kept available for \(u\) before conflicts are removed. 

For each active vertex \(u\), define the round-\(i\) random variable \(X_i(u):=(\mathrm{Sel}_i(u),K_i(u))\). The variable family of round \(i\) is therefore \(\mathcal V_i=\{X_i(u): u\in V(G_{i-1})\}\).
Each variable \(X_i(u)\) is sampled independently at its own vertex by Algorithm~\ref{alg:select}.
Let \(\theta_i=(\pi_i,\beta_i,\alpha_i,p_i,p_i',t_i,t_i',d_i,d_i')\) denote the deterministic parameter vector used in round \(i\). 
We treat these quantities as part of the round-\(i\) \LLL input, even though their exact formulas are only specified later in Section~\ref{sec:triangle-free-parameters}. 
With \(\mathcal B_i\) denoting the local bad events also specified later, the round-\(i\) local \LLL instance is \(\mathfrak L_i=(\mathcal B_i,\mathcal V_i,\theta_i)\).

\subsection{\texorpdfstring{\(r\)-Partitions with Degree Parameters}{r-Partitions with Degree Parameters}}\label{sec:color-balanced-partition}

The resilient \LLL theorem from Section~\ref{sec:resilient-lll} requires an \(r\)-partition of the event family. Since our events and variables are both indexed by vertices, such a partition is equivalently a partition of the vertex set, and can also be considered a partition of the variable set.

In \cite{Davies23}, the partitions that are used for the general LLL result are chosen to have an upper bound on node degree within each part of the partition. Here we need slightly more complex partitions that also have an upper bound on c-degrees.

We now specify formally what we require from these partitions. We will use c-degree as defined in Section \ref{sec:triangle-free-round-instance}, but to avoid tying the following Definition \ref{def:r-partition-degree-params} to a specific round of the algorithm we will define notation without the subscript $i$. Let \(H\) be a graph on a vertex set \(U\), and suppose that each \(u\in U\) has a palette \(P(u)\). For a color \(c\), write
\[
N_c(u)=\{v\in N_H(u):c\in P(v)\}.
\]
The useful partitions are those in which no single part captures too much of either kind of degree: for every vertex \(u\), each part should contain only an $O(1/r)$-share of the neighbors of \(u\) and of the \(c\)-neighbors of \(u\) for each color \(c\).

\begin{definition}
\label{def:r-partition-degree-params}
Let \(d,t>0\), and let \(\Phi=\{\Phi_1,\dots,\Phi_r\}\) be a partition of \(U\). 
We call \(\Phi\) a \emph{(color-balanced) $r$-partition with parameters \(d\) and \(t\)} if:
\begin{enumerate}[leftmargin=*,itemsep=0.25em]
  \item for every \(u\in U\) and every part \(\Phi_j\), one has \(|N_H(u)\cap \Phi_j|\le 100d/r\);
  \item for every \(u\in U\), every color \(c\in P(u)\), and every part \(\Phi_j\), one has \(|N_c(u)\cap \Phi_j|\le 100t/r\).
\end{enumerate}
\end{definition}

Here \(d\) is a max-degree upper bound and \(t\) is a \(c\)-degree upper bound. Thus the first condition controls max-degree contributions, and the second controls \(c\)-degree contributions.

\begin{theorem}[\texorpdfstring{\(r\)-partition with degree parameters}{r-partition with degree parameters}]
\label{thm:color-balanced-partition}
For every fixed \(a>0\), there is \(b>0\) such that the following holds.
Assume \(\Delta\le(\log n)^a\), let \(H\) be a graph on at most \(n\) vertices with maximum degree at most \(\Delta\), and suppose that \(|P(u)|\le \Delta\) for every \(u\in U\). Set \(r=(\log\Delta)^{20}\), rounded to an integer if necessary, and let \(0<d\le\Delta\) and \(t>0\). Suppose that
\[
|N_H(u)|\le d
\qquad\text{and}\qquad
|N_c(u)|\le t
\]
for every \(u\in U\) and every color \(c\in P(u)\). Suppose also that
\[
d\ge (\log\Delta)^b
\qquad\text{and}\qquad
t\ge (\log\Delta)^b.
\]
Then, with high probability, the vertex set \(U\), equivalently the variables indexed by \(U\), admits an \(r\)-partition with parameters \(d\) and \(t\). Such a partition can be constructed in \(\log^{O(1)}\log n\) rounds in the \LOCAL model.
\end{theorem}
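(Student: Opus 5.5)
The plan is to assign each vertex of \(U\) a part index independently and uniformly from \(\{1,\dots,r\}\), and then use the distributed \LLL to fix the randomness.

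\textbf{Bad events.} For each vertex \(u\), let \(A_u\) be the event that some part \(\Phi_j\) violates one of the two conditions of Definition~\ref{def:r-partition-degree-params} at \(u\). That is, either \(|N_H(u)\cap\Phi_j|>100d/r\), or \(|N_c(u)\cap\Phi_j|>100t/r\) for some \(c\in P(u)\).

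\textbf{Per-event probability.} Each count is a sum of independent indicators. The degree count has mean at most \(d/r\), and each \(c\)-degree count has mean at most \(t/r\). Since \(d,t\ge(\log\Delta)^b\) and \(r=(\log\Delta)^{20}\), both means are at most, but could be much smaller than, the thresholds. Use the Chernoff form \(\Pr[X\ge R]\le 2^{-R}\) for \(R\ge 6\E[X]\), with \(R=100d/r\ge(\log\Delta)^{b-20}\), and similarly for \(t\). Each violation then has probability at most \(2^{-(\log\Delta)^{b-20}}\). Take a union bound over the \(r\) parts and the at most \(\Delta\) colors in \(P(u)\). This gives
\[
\Pr[A_u]\le (\Delta+1)\,r\,2^{-(\log\Delta)^{b-20}}\le \Delta^{-C}
\]
for any desired constant \(C\), once \(b\ge 22\) and \(\Delta\) is large.

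\textbf{Dependency degree.} The event \(A_u\) depends only on the labels of \(N_H(u)\). Two events share a variable only if their vertices are within distance \(2\), so the dependency degree is \(d_{\mathrm{dep}}\le\Delta^2\). Hence \(p(e\,d_{\mathrm{dep}})^{c}<1\) for any fixed \(c\): this is the polynomially-weakened regime.

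\textbf{Solving the \LLL.} The main obstacle is solving this \LLL in \(\log^{O(1)}\log n\) rounds when \(\Delta\) may be as large as \((\log n)^a\). I see two routes.

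\emph{Route 1: make the events resilient.} Here \(p\) is super-polynomially small in \(\Delta\), namely \(\exp(-\log^{b-20}\Delta)\). Take the trivial \(1\)-partition, i.e.\ \(r'=1\). The resampled event \(A^S\) for arbitrary \(S\) is controlled by a symptom event. Resampling only changes the labels in \(S\). So \(A^S\) has probability at least \(d_{\mathrm{dep}}^{-10}\) only if the first-sample counts restricted to \(N(u)\setminus S\) are already near the thresholds. I would need to check this more carefully. Then \(\Pr[A']\) is bounded by a single Chernoff bound, with no union bound over \(S\) needed. This uses the fact that counts are monotone in the set of vertices included. Theorem~\ref{thm:davies12} would then give \(O(1)+\log^{O(1)}\log n\) rounds.

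\emph{Route 2: shattering, which is simpler.} Since \(\Delta\le(\log n)^a\), the probability \(p\le\Delta^{-C}\) is polynomially small in \(\Delta\). Run the random assignment once. Then, by the standard Fischer--Ghaffari shattering argument, the unsatisfied events form components of size \(\poly(\Delta)\log n\) in the \(\Delta^{O(1)}\)-power graph. Resample those vertices, with the neighborhood constraints of already-fixed vertices folded in. Then solve each small component with deterministic network decomposition~\cite{RozhonGhaffari20,GhaffariGrunauRozhon21}, which costs \(\poly\log(\poly\Delta\log n)=\log^{O(1)}\log n\) rounds, using the power-graph locality \(O(1)\).

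I would present Route 2, citing~\cite{FischerGhaffari17,GhaffariHarrisKuhn18} for the polynomially-weakened \LLL in \(\poly\log\log n\) rounds when \(\Delta\le\poly\log n\). The remaining subtlety to verify is that the exponent \(b\) can be chosen depending only on \(a\), so that \(p\) beats the criterion required there, which is some \(\Delta^{-O(1)}\). This holds because the Chernoff exponent \((\log\Delta)^{b-20}\) grows faster than any multiple of \(\log\Delta\).
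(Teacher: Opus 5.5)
Your Route~2, the one you say you would present, has a genuine gap. No known algorithm solves polynomially-weakened \LLL instances in $\log^{O(1)}\log n$ rounds merely because $\Delta\le\poly\log n$. The Fischer--Ghaffari-type shattering algorithms, even with modern derandomization and network decomposition, run in $O(d^2)+\log^{O(1)}\log n$ rounds, or $O(d/\log d)+\log^{O(1)}\log n$ after Davies. That additive term comes from the pre-shattering phase: it samples variables in sequential batches so that the variables around any event whose conditional failure probability has grown too large can be frozen before its other neighbours are sampled. Here $d_{\mathrm{dep}}\approx\Delta^2$ and $\Delta$ may be $(\log n)^a$, so this costs $\poly\log n$ rounds. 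As a sanity check, if your citation were correct, the paper's entire resilience analysis would be unnecessary, since the coloring round instances already satisfy the polynomially-weakened criterion with $d\le\Delta^4$.

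Your one-shot version omits exactly what the freezing provides. You sample everything once, then resample the shattered components ``with the neighbourhood constraints of already-fixed vertices folded in''. Now take an event $A_u$ that did not occur but has exactly $\lfloor 100d/r\rfloor$ neighbours in some $\Phi_j$, and a single resampled neighbour outside $\Phi_j$. In the residual instance it fails with probability $1/r=(\log\Delta)^{-20}$. So the residual instance need not satisfy any \LLL criterion, and nothing in your argument guarantees that a completion consistent with the fixed labels exists, or that the deterministic post-shattering solver applies.

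The missing ingredient is what you sketched as Route~1, and that is the paper's proof. Take the trivial one-part partition. Let $W$ be a count set, either $N_H(u)$ or some $N_c(u)$, with threshold $T_0$, so that $|W|/r\le T_0/100$. Use the first-sample symptom event $\{\exists j:\ |W\cap\Phi_j(Y^1)|>T_0/2\}$. By your Chernoff bound it has probability at most $r2^{-T_0/2}$.

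Off this event, monotonicity gives, for every $S$ and every $j$,
\[
|W\cap\Phi_j(Y^S)|\le T_0/2+|(W\cap S)\cap\Phi_j(Y^2)|.
\]
The last count has mean at most $|W|/r\le T_0/100$ whatever $S$ is. Hence $\Prob_{\mathcal V^2}[A^S]\le(\Delta+1)r2^{-T_0/2}$ uniformly in $S$, summing over the at most $\Delta+1$ counts in your merged event, and no union bound over $S$ is needed. Since $T_0\ge 100(\log\Delta)^{b-20}$, both bounds are far below $d_{\mathrm{dep}}^{-30}$ once $b\ge 22$ and $\Delta$ is large.

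So the instance is $1$-resilient, and Theorem~\ref{thm:davies12} with a single part gives $\log^{O(1)}\log n$ rounds. A correct version of your Route~2 would have to declare these near-threshold events dangerous as well, which amounts to re-deriving this resilience property. You should present Route~1, completed as above.
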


We construct the partition through an auxiliary LLL instance: each vertex in \(U\) chooses a part, the bad events are the violations of the two conditions from the definition, and these bad events are \(1\)-resilient, so the resilient LLL algorithm from Section~\ref{sec:resilient-lll} applies.

\begin{proof}
For each vertex \(v\in U\), introduce an independent variable \(Y_v\), uniformly distributed on \([r]\). The value \(Y_v=j\) means that \(v\) is put into the part \(\Phi_j\). For an outcome \(Y\), write \(\Phi_j(Y)=\{v\in U:Y_v=j\}\).
We define an auxiliary LLL instance whose bad events are
\[
A_u=\left\{\exists j\in[r]\text{ such that }|N_H(u)\cap \Phi_j(Y)|>100d/r\right\}
\]
for each \(u\in U\), and
\[
B_{u,c}=\left\{\exists j\in[r]\text{ such that }|N_c(u)\cap \Phi_j(Y)|>100t/r\right\}
\]
for each \(u\in U\) and color \(c\in P(u)\). Avoiding all these events is exactly the definition of an \(r\)-partition with parameters \(d\) and \(t\).

We first bound the bad-event probabilities. For a fixed part \(j\), the random variable \(|N_H(u)\cap \Phi_j(Y)|\) is binomial with mean at most \(d/r\). By Chernoff's bound,
\[
\Prob\left[|N_H(u)\cap \Phi_j(Y)|>100d/r\right]\le \exp(-30d/r).
\]
Taking a union bound over \(j\in[r]\), we get \(\Prob[A_u]\le r\exp(-30d/r)\le \Delta^{-200}\) after choosing \(b\) large enough. Similarly, for a fixed part \(j\), the random variable \(|N_c(u)\cap \Phi_j(Y)|\) has mean at most \(t/r\), so Chernoff's bound and a union bound over the parts give \(\Prob[B_{u,c}]\le r\exp(-30t/r)\le \Delta^{-200}\).
Next we count dependencies. Both \(A_u\) and \(B_{u,c}\) use only variables \(Y_v\) with \(v\in N_H(u)\). If one of these events shares a variable with \(A_x\) or \(B_{x,c'}\), then \(x\) is within distance two of \(u\) in \(H\). There are at most \(1+\Delta^2\) such choices of \(x\). For each \(x\), there is one event \(A_x\) and at most \(|P(x)|\le \Delta\) events \(B_{x,c'}\). Thus the dependency degree of the auxiliary instance is at most
\[
(1+\Delta^2)(1+\Delta)\le 4\Delta^3.
\]
Therefore \(p(e\cdot 4\Delta^3)^{32}<1\) for the probability bound \(p=\Delta^{-200}\).

It remains to check \(1\)-resilience of this auxiliary instance. The partition used here is not the output partition \(\Phi(Y)\) that we are trying to construct. Since we only need \(1\)-resilience, we use the trivial one-part partition of the variables \(\{Y_v:v\in U\}\). Thus, in the definition of resilience, \(S\) may be any subset of these variables: the variables in \(S\) are read from the second sample, and all other variables are read from the first sample.

Fix one bad event, and write it in the common form
\[
E=\left\{\exists j\in[r]\text{ such that }|W\cap \Phi_j(Y)|>T\right\},
\]
where either \(W=N_H(u)\) and \(T=100d/r\), or \(W=N_c(u)\) and \(T=100t/r\). In both cases, \(|W|/r\le T/100\).

Let \(Y^1,Y^2\) be the two samples from the definition of resilience, and let \(S\) be any subset of variables. In the mixed event \(E^S\), the variables outside \(S\) keep their first-sample values and the variables in \(S\) use their second-sample values. If the first sample satisfies
\[
|W\cap \Phi_j(Y^1)|\le T/2
\qquad\text{for every }j\in[r],
\]
then, for every choice of \(S\), the second sample must contribute more than \(T/2\) vertices \(v\in W\) whose variable \(Y_v\) belongs to \(S\) to one part in order for \(E^S\) to occur. By the same Chernoff bound and a union bound over the \(r\) parts, this has probability at most \(\Delta^{-200}\), after increasing \(b\) if necessary. Since the dependency degree is at most \(4\Delta^3\), this is at most \((4\Delta^3)^{-10}\), which is the \(D^{-10}\) threshold from the resilience definition (Section~\ref{sec:resilient-lll}) with dependency parameter \(D=4\Delta^3\).

Therefore the exceptional first-sample event in the definition of resilience can occur only if
\[
|W\cap \Phi_j(Y^1)|>T/2
\qquad\text{for some }j\in[r].
\]
The probability of this event is at most \(\Delta^{-200}\), again by the same Chernoff bound and a union bound over the \(r\) parts. Since the dependency degree is at most \(4\Delta^3\), this is at most \((4\Delta^3)^{-30}\), which is the \(D^{-30}\) threshold from the resilience definition with dependency parameter \(D=4\Delta^3\). Hence every bad event in the auxiliary instance is \(1\)-resilient.

The resilient LLL algorithm from Theorem~\ref{thm:davies12}, applied with this one-part variable partition, finds an assignment avoiding all \(A_u\) and \(B_{u,c}\) with high probability in \(\log^{O(1)}\log n\) rounds. Let \(\Phi=\{\Phi_1,\ldots,\Phi_r\}\), where \(\Phi_j=\{v:Y_v=j\}\). Since no bad event occurs, \(\Phi\) is an \(r\)-partition with parameters \(d\) and \(t\).
\end{proof}

\subsection{The Iterative Algorithm}
\label{sec:triangle-free-iterative-algorithm}

Recall from Section~\ref{sec:triangle-free-round-instance} that the active vertices at the start of round \(i\) are exactly the vertices of \(G_{i-1}\). Such a vertex \(u\) has a palette \(P_{i-1}(u)\), a residual neighborhood \(N_{i-1}(u)\), and the color-specific neighborhoods \(N_{i-1,c}(u)\). The round-\(i\) \LLL instance defined there contains the local choices \(\mathrm{Sel}_i(u)\) and \(K_i(u)\) used by the update below.

Algorithm~\ref{alg:triangle-free} presents this coloring framework.
For \(i\ge 0\), we write $$\theta_i=(\pi_i,\beta_i,\alpha_i,p_i,p_i',t_i,t_i',d_i,d_i')$$ for the deterministic parameter vector. 
The sequence \(\{\theta_i\}_{i\ge0}\) is part of the input to Algorithm~\ref{alg:triangle-free} and is fixed before the algorithm starts.
In each iteration, it constructs the round-\(i\) \LLL instance \(\mathfrak L_i\), invokes the \(\mathrm{SolveRoundLLL}\) subroutine in Algorithm~\ref{alg:solve-lll}, and obtains a realization \(X_i\) avoiding all events of \(\mathcal B_i\). Once \(X_i\) is fixed, the algorithm uses the resulting values of \(\mathrm{Sel}_i(u)\) and \(K_i(u)\) for the deterministic update. The stopping rule is \(d_{i-1}'>d_{\mathrm{stop}}\), where \(d_{\mathrm{stop}}=\Delta/(\log\Delta)^6\): as long as this holds, another iteration is performed; once it fails, Theorem~\ref{thm:clp20} from Section~\ref{sec:preliminaries} colors the current remaining graph \(G_{i-1}\) using \(d_{\mathrm{stop}}+1\) fresh colors.

\setcounter{algorithm}{2}
\begin{algorithm}[H]
\caption{IterativeColoringFramework$(G,\{\theta_i\}_{i\ge 0},r)$}
\label{alg:triangle-free}
\begin{algorithmic}[1]
\State Set $G_0=G$ and initialize the palettes $P_0(u)$ to be of $p_0$ colors for every vertex $u$
\State $d_{\mathrm{stop}}\gets \Delta/(\log\Delta)^6$
\State $i\gets 1$
\While{$d_{i-1}'>d_{\mathrm{stop}}$}
  \State construct the round-$i$ \LLL instance $\mathfrak L_i$
  \State $X_i\gets \mathrm{SolveRoundLLL}(i,r,\mathfrak L_i)$
  \ForAll{vertices $u\in V(G_{i-1})$}
    \State $\widetilde P_i(u)\gets K_i(u)\setminus \mathrm{Sel}_i(N_{i-1}(u))$
    \If{$\mathrm{Sel}_i(u)\cap \widetilde P_i(u)\neq\emptyset$}
      \State color $u$ with an arbitrary color in $\mathrm{Sel}_i(u)\cap \widetilde P_i(u)$
    \EndIf
  \EndFor
  \State let $U_i$ be the vertices of $G_{i-1}$ still uncolored after the round-$i$ coloring decisions
  \ForAll{vertices $u\in U_i$}
    \State $\widehat N_{i,c}(u)\gets \{v\in U_i\cap N_{i-1}(u): c\in \widetilde P_i(v)\}$ for every $c\in\widetilde P_i(u)$
    \State $\widehat P_i(u)\gets \{c\in \widetilde P_i(u): |\widehat N_{i,c}(u)|\le 2t_i\}$
    \If{$|\widehat P_i(u)|>p_i'$}
      \State set $P_i(u)$ to an arbitrary subset of $\widehat P_i(u)$ of size exactly $p_i'$
    \Else
      \State set $P_i(u)\gets \widehat P_i(u)$
    \EndIf
  \EndFor
  \State let $G_i$ be the graph induced by $U_i$
  \State $i\gets i+1$
\EndWhile
\State color the remaining graph $G_{i-1}$ using Theorem~\ref{thm:clp20} with $d_{\mathrm{stop}}+1$ fresh colors
\end{algorithmic}
\end{algorithm}

Recall from Section~\ref{sec:triangle-free-round-instance} that \(X_i(u)=(\mathrm{Sel}_i(u),K_i(u))\) for each active vertex \(u\), so \(X_i\) records all local round-\(i\) choices. After \(X_i\) has been fixed, the deterministic update in Algorithm~\ref{alg:triangle-free} first forms
\[
\widetilde P_i(u)=K_i(u)\setminus \mathrm{Sel}_i(N_{i-1}(u)),
\]
where \(\mathrm{Sel}_i(N_{i-1}(u))=\bigcup_{v\in N_{i-1}(u)}\mathrm{Sel}_i(v)\). A vertex is colored if one of its selected colors remains in this tentative palette. The set \(U_i\) consists of the vertices left uncolored after this decision. For \(u\in U_i\), the algorithm forms the color-specific neighborhoods \(\widehat N_{i,c}(u)=\{v\in U_i\cap N_{i-1}(u):c\in \widetilde P_i(v)\}\). By filtering, we mean deleting from \(\widetilde P_i(u)\) every color \(c\) with \(|\widehat N_{i,c}(u)|>2t_i\); the resulting palette is \(\widehat P_i(u)\). The algorithm then retains at most \(p_i'\) of these filtered colors. After this truncation, \(P_i(u)\) is the retained palette, while \(\widehat P_i(u)\) denotes the filtered palette before truncation. Also, \(N_{i,c}(u)\) denotes the residual neighbors of \(u\) whose retained palette still contains \(c\).

We now unfold the subroutine \(\mathrm{SolveRoundLLL}\) in Algorithm~\ref{alg:solve-lll}, which returns a realization \(X_i\) of the round-\(i\) variables avoiding all bad events in \(\mathcal B_i\). It takes the LLL instance \(\mathfrak L_i\) as input and constructs an \(r\)-partition via Theorem~\ref{thm:color-balanced-partition} with max-degree parameter \(\min\{d_{i-1}',\Delta\}\) and \(c\)-degree parameter \(2t_{i-1}\), and then runs the algorithm from Theorem~\ref{thm:davies12}. After this output \(X_i\) is fixed, the remaining update steps in Algorithm~\ref{alg:triangle-free} are deterministic.


\setcounter{algorithm}{1}
\begin{algorithm}[t]
\caption{$\mathrm{SolveRoundLLL}(i,r,\mathfrak L_i)$}
\label{alg:solve-lll}
\begin{algorithmic}[1]
\State Let $\mathfrak L_i=(\mathcal B_i,\mathcal V_i,\theta_i)$
\Statex \textbf{Output:} a realization of $\mathcal V_i$ in which no event of $\mathcal B_i$ occurs
\State compute an $r$-partition from Theorem~\ref{thm:color-balanced-partition} for the active variables in $\mathcal V_i$, with max-degree parameter $\min\{d_{i-1}',\Delta\}$ and $c$-degree parameter $2t_{i-1}$
\State run the algorithm from Theorem~\ref{thm:davies12} on $\mathfrak L_i$
\State \Return the resulting realization of $\mathcal V_i$
\end{algorithmic}
\end{algorithm}
\setcounter{algorithm}{3}

The next subsection specifies \(\mathcal B_i\) and the parameters so that avoiding all events in \(\mathcal B_i\) gives the palette and degree bounds needed for the next round.

\subsection{Bad Events, Parameters, and the Round Invariant}
\label{sec:triangle-free-parameters}

We now specify the concrete objects used in \(\mathfrak L_i\). For round \(i\ge1\), the prescribed parameter vector is \(\theta_i=(\pi_i,\beta_i,\alpha_i,p_i,p_i',t_i,t_i',d_i,d_i')\). The parameters \(\pi_i\) and \(\beta_i\) control sampling and color retention, while \(\alpha_i\) is the ideal attenuation factor. For \(p_i,t_i,d_i\) and their primed versions, we use the following convention throughout the section: the unprimed quantities are the ideal round-\(i\) values, and the primed quantities are analysis thresholds with slack. Thus \(p_i\) is the ideal retained palette size, whereas \(p_i'\) is the slightly smaller palette-size threshold used for truncation and for the padded averages; \(t_i\) is the ideal scale for the average \(c\)-degree, whereas \(t_i'\) is the slightly larger upper bound for the padded average \(c\)-degree in the invariant; and \(d_i\) is the ideal upper bound for residual degree, whereas \(d_i'\) is the slightly larger residual-degree bound promised after round \(i\).

We now fix the parameter schedule explicitly; these are the values of \(\theta_i\) supplied to Algorithm~\ref{alg:triangle-free} and to the round-\(i\) instance \(\mathfrak L_i\). Fix a small constant \(\eps>0\), choose a parameter \(k\) with \(1\le k\le(1-\eps)\ln\Delta/4\), set
\[
p_0=\frac{\Delta}{k} - \frac{\Delta}{(\log \Delta)^6}-1,
\]
Since \(k\le(1-\eps)\ln\Delta/4\), we have \(k/(\log\Delta)^6=o(1)\) and \(k/\Delta=o(1)\). Hence
\[
p_0=\frac{\Delta}{k}\left(1-\frac{k}{(\log\Delta)^6}-\frac{k}{\Delta}\right)
=(1-o(1))\frac{\Delta}{k}.
\]
We also define two exponent margins
\[
\eps_1=1-\frac{4k}{\ln\Delta}-\frac{2\eps}{3},
\qquad
\eps_2=1-\frac{4k}{\ln\Delta}-\frac{\eps}{3}.
\]
Thus \(\eps_1\ge\eps/3\) and \(\eps_2\ge2\eps/3\). The constant \(K\) sets the sampling scale, \(\delta\) sets the multiplicative slack between ideal and primed quantities, and \(T\) is the lower floor in the ideal average \(c\)-degree recurrence. We set
\[
K=\frac{4}{\eps},
\qquad
\delta=\frac{1}{\log^4\Delta},
\qquad
T=\Delta^{\eps_1/3}.
\]
Initialize \(p_0'=p_0\) and \(t_0=t_0'=d_0=d_0'=\Delta\). For \(i\ge1\), set
\[
\pi_i=\frac{1}{2Kt_{i-1}+1},
\]
and
\[
\beta_i=(1-\pi_i)^{2t_{i-1}},\qquad p_i=\beta_i p_{i-1},\qquad p_i'=(1-\delta/7)^i p_i,
\]
\[
\alpha_i=(1-\pi_i)^{(1-(1+2\delta)^{i-1}/2)p_i'}, \qquad
t_i=\max\{\alpha_i\beta_i t_{i-1},T\},\qquad
t_i'=(1+2\delta)^i t_i.
\]
For the degree upper bounds, we set
\[
d_{\mathrm{stop}}=\frac{\Delta}{(\log\Delta)^6},
\qquad
d_i=\max\!\left\{\alpha_i d_{i-1},\frac{d_{\mathrm{stop}}}{(1+\delta)^i}\right\},
\qquad
d_i'=(1+\delta)^i d_i
\]
for \(i\ge 1\). 

Given a round-\(i\) outcome, 
we use the surviving vertex set \(U_i\) and the tentative palette \(\widetilde P_i(u)\) produced by Algorithm~\ref{alg:triangle-free}. For each color \(c\in P_{i-1}(u)\), define
\[
\widetilde N_{i,c}(u)=\{x\in N_{i-1,c}(u): c\in \widetilde P_i(x)\}.
\]
Write
\[
\overline n_i(u)=\frac{1}{|P_i(u)|}\sum_{c\in P_i(u)} |N_{i,c}(u)|,
\qquad
\widetilde n_i(u)=\frac{1}{|\widetilde P_i(u)|}\sum_{c\in \widetilde P_i(u)} |N_{i,c}(u)|.
\]
If one of these denominators is zero, we interpret the corresponding average as \(0\).
We also set
\[
\lambda_i(u)=\min\left\{1,\frac{|P_i(u)|}{p_i'}\right\},
\qquad
\widetilde\lambda_i(u)=\min\left\{1,\frac{|\widetilde P_i(u)|}{p_i'}\right\},
\]
and
\[
D_i(u)=\lambda_i(u)\overline n_i(u)+(1-\lambda_i(u))2t_i,
\qquad
\widetilde D_i(u)=\widetilde\lambda_i(u)\widetilde n_i(u)+(1-\widetilde\lambda_i(u))2t_i.
\]
The round event \(F_i(u)\) is the statement that:
\begin{enumerate}[leftmargin=*,itemsep=0.25em]
  \item \(D_i(u)\le t_i'\),
  \item \(|N_i(u)|\le d_i'\).
\end{enumerate}
Let \(F_i\) be the event that \(F_i(u)\) holds for every vertex in \(U_i\).
We refer to \(F_i\) as the round-\(i\) invariant event.

The round-\(i\) bad-event family \(\mathcal B_i\) contains three types of local failures, with \(r=(\log\Delta)^{20}\) as in Theorem~\ref{thm:color-balanced-partition}:
\begin{align*}
&Q_i^+(u)
 := \left\{|\widetilde P_i(u)| < |P_{i-1}(u)|\beta_i\Bigl(1-\frac{\delta}{8}\Bigr)\Bigl(1-\frac{200}{r}\Bigr)\right\},\\
&R_i^+(u)
 := \left\{
  \sum_{c\in P_{i-1}(u)}
  \frac{|N_{i,c}(u)|}{|P_{i-1}(u)|}
  >
          \alpha_i\Bigl(1+\frac{\delta}{3}\Bigr)
          \left[
              \beta_i^2 \Bigl(\overline n_{i-1}(u)+\frac{\delta}{4}t_{i-1}\Bigr)+ \frac{300}{r}t_{i-1}  
  \right] + \frac{\delta}{3}T
\right\},\\
&M_i^+(u)
 := \left\{|N_i(u)|
  > \alpha_i\Bigl(1+\frac{\delta}{3}\Bigr)d_{i-1}'
    + \frac{200}{r}d_{i-1}'
    + \frac{\Delta}{r(\log\Delta)^8}\right\}.
\end{align*}
Here \(Q_i^+\) means that too few colors remain in the tentative palette,
\(R_i^+(u)\) means that the average residual \(c\)-degree is too large, and
\(M_i^+\) means that the residual degree is too large. When none of these
events occurs, the resulting palette and degree estimates imply \(F_i\);
this
implication is stated in Lemma~\ref{lem:triangle-free-round} below
and proved in Section~\ref{sec:triangle-free-lemma-proofs}.

Our Algorithm~\ref{alg:triangle-free} closely aligns with Variant~A of Algorithm~2 in Pettie and Su~\cite{PettieSu15}: 
it has the same iterative structure, and the parameter vector \(\theta_i\) follows the same schedule up to changes in a few constants. 
There are two substantive differences. 
First, the round instances are solved by our resilience \LLL subroutine in Algorithm~\ref{alg:solve-lll}, 
so we use modified bad events together with the corresponding entries of \(\theta_i\). 
Second, instead of switching to their later phase, we stop once the residual degree is low enough and add the events \(M_i^+(u)\) to certify that the final low-degree cleanup from Theorem~\ref{thm:clp20} can be applied. 
We also make the minor bookkeeping change of truncating palettes to simplify the analysis.

\subsection{Main Lemmas and Proof of Theorem~\ref{thm:triangle-free-main}}
\label{sec:triangle-free-main-lemmas}

We now state the three main lemmas needed to conclude the main triangle-free theorem. 
The first gives a fast solver for each local \LLL instance; the second says that eliminating the three bad-event families establishes \(F_i\) at the end of round \(i\); and the third says that \(d_i'\) reaches the final threshold quickly. Together with \(F_i\), this last statement is what makes the residual graph small enough for the final coloring step. Section~\ref{sec:triangle-free-lemma-proofs} proves these lemmas.

\begin{lemma}
With the parameter schedule fixed in Section~\ref{sec:triangle-free-parameters}, 
for every fixed \(C>0\), the following holds whenever \(\Delta\le(\log n)^C\). Suppose \(F_{i-1}\) holds and \(d_{i-1}'>d_{\mathrm{stop}}\). 
Then, with high probability, Algorithm~\ref{alg:solve-lll} solves the round-\(i\) instance \(\mathfrak L_i\) in \((\log\log n)^{O(1)}\) rounds.
\label{lem:triangle-free-instance}
\end{lemma}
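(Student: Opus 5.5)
The plan is to reduce the lemma to \Cref{thm:davies12} and \Cref{thm:color-balanced-partition}.

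\textbf{Step 1: build the partition.} Under $F_{i-1}$ every active vertex has residual degree at most $d_{i-1}'$. Together with $d_{i-1}'>d_{\mathrm{stop}}=\Delta/(\log\Delta)^6$, this gives the max-degree parameter $\min\{d_{i-1}',\Delta\}\ge(\log\Delta)^b$ for large $\Delta$. The filtering step of round $i-1$ guarantees $|N_{i-1,c}(u)|\le 2t_{i-1}$ for every $c\in P_{i-1}(u)$. For round $1$ we instead have $t_0=\Delta$. Moreover $2t_{i-1}\ge 2T=2\Delta^{\eps_1/3}\ge(\log\Delta)^b$. Palettes have size at most $p_0\le\Delta$. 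Hence \Cref{thm:color-balanced-partition}, applied with $a=C$, yields with high probability an $r$-partition $\Phi$ with $r=(\log\Delta)^{20}$ in $\log^{O(1)}\log n$ rounds.

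\textbf{Step 2: verify the LLL parameters.} Each event $Q_i^+(u)$, $R_i^+(u)$, $M_i^+(u)$ depends only on the variables $X_i(x)$ with $x$ within distance $2$ (or $3$ for $R_i^+$, through $N_{i,c}$ of neighbors) of $u$. This gives dependency degree $d\le\Delta^{O(1)}$. We then show each event has probability at most $\exp(-\Delta^{\Omega(\eps)})\le d^{-c}$, following the concentration analysis of \cite{PettieSu15}. We use Chernoff bounds for $Q_i^+$ and $M_i^+$. For $R_i^+$ we use a bounded-differences / Talagrand-type argument on the average $c$-degree, exploiting triangle-freeness so that the neighbors' choices behave independently given $u$. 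Expectations match the ideal parameters $\beta_i,\alpha_i$, and the slack $\delta/3$ absorbs the deviations because the relevant quantities are at least $T=\Delta^{\eps_1/3}$.

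\textbf{Step 3: resilience (the main obstacle).} For each event $A$ we define a symptom event $A^\ast$ on the first sample, with $\Prob[A^\ast]\le d^{-30}$, such that outside $A^\ast$ we have $\Prob_{\mathcal V^2}[A^S]<d^{-10}$ for every part $\Phi_j$ and every $S\subseteq\Phi_j$ simultaneously. The idea is to split each quantity into the contribution from variables in $\Phi_j$ and the rest. By the partition properties, variables in a single part influence at most $100d_{i-1}'/r$ neighbors of $u$ and at most $200t_{i-1}/r$ of its $c$-neighbors. These are exactly the additive $200/r$ and $300/r$ error terms built into $Q_i^+,R_i^+,M_i^+$. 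The symptom event says that the first-sample quantity, computed with all of $\Phi_j$'s influence removed in the worst direction, already violates a slightly tighter threshold. This is monotone in $S$, so no union bound over $S$ is needed, and it is again bounded by Chernoff/Talagrand. The second sample then only needs to be well behaved in its fresh contribution, which also concentrates. The delicate part is $R_i^+$, because changes in two-hop vertices propagate to $N_{i,c}(v)$ for neighbors $v$. I would handle this by using the $c$-degree balance to show that a part's effect on the summed $c$-degrees is at most $O(t_{i-1}|P_{i-1}(u)|/r)$, plus a term that concentrates. With resilience established, \Cref{thm:davies12} solves $\mathfrak L_i$ in $O(r)+\log^{O(1)}\log n=\log^{O(1)}\log n$ rounds, since $r=(\log\Delta)^{20}=O((\log\log n)^{20})$.
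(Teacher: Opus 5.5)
Your Step~1 and the final appeal to Theorem~\ref{thm:davies12} are exactly the paper's proof of this lemma. You even justify $|N_{i-1,c}(u)|\le 2t_{i-1}$ via the filtering step and $|P_{i-1}(u)|\le\Delta$, which the paper leaves implicit. Step~2 is superfluous, since resilience with $S=\emptyset$ already forces $\Prob[A]\le d^{-30}$.

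The gap is Step~3. It carries all the substance (in the paper it is Propositions~\ref{prop:q-resilient}--\ref{prop:m-resilient}), and your sketch rests on a false premise. The partition does \emph{not} imply that a part influences at most $100d_{i-1}'/r$ neighbors or $200t_{i-1}/r$ $c$-neighbors of $u$; it only bounds how many of them lie \emph{in} $\Phi_j$. Through conflicts, $\Phi_j$ reaches every $x\in N_{i-1}(u)$, and each such $x$ may have up to $100d_{i-1}'/r$ neighbors in $\Phi_j$. So worst-case values on $S$ can uncolor essentially every neighbor of $u$, and, if $u\in S$, can empty $K_i(u)$. A first-sample symptom event that removes $\Phi_j$'s influence ``in the worst direction'' is therefore not rare. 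These effects must be paid for with second-sample randomness, which is exactly the step you only assert (``the fresh contribution \ldots\ also concentrates'').

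For $M_i^+(u)$ this concentration is not routine. The events that distinct $x\in N_{i-1}(u)$ become uncolored are dependent, because one $y\in\Phi_j$ can be adjacent to many of them ($4$-cycles through $u$ are allowed in triangle-free graphs). Hence a single resampled variable can move the count by up to $\Delta$, and neither Chernoff nor bounded differences applies. The idea you are missing is the paper's witness-color device, which works in four steps:
\begin{itemize}
\item Give each first-sample-colored $x$ a color $\sigma(x)\in\mathrm{Sel}_i^1(x)\cap\widetilde P_i^1(x)$.
\item Then $x$ can become uncolored only if some $y\in S\cap N_{i-1,\sigma(x)}(x)$ reselects $\sigma(x)$, which has probability at most $100/(Kr)$.
\item Grouping the $x$'s by $\sigma(x)$ makes the groups depend on disjoint color bits of the second sample, so they are independent.
\item A first-sample event $O_i(u,c')$ caps each group at $(\log\Delta)^8$, and Hoeffding then yields the $\frac{100}{r}d_{i-1}'+\frac{\Delta}{r(\log\Delta)^8}$ slack.
\end{itemize}

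For $R_i^+(u)$, your ``$O(t_{i-1}|P_{i-1}(u)|/r)$ plus a concentrating term'' covers only the first half of the paper's argument. That half bounds $\sum_c|\widetilde N^S_{i,c}(u)|$ by the direct term plus the first-sample sets $A_{i,j}(u,c)$, which are independent across colors. You do not address the second half: that after resampling still only about an $\alpha_i$-fraction of $\widetilde N^S_{i,c}(u)$ stays uncolored. This needs every $v\in N_{i-1}(u)$ to keep $|\widetilde P_i^S(v)|$ above the $Q_i^+$ threshold in the mixed sample.

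The paper obtains this from two ingredients:
\begin{itemize}
\item The strengthened form of Proposition~\ref{prop:q-resilient}, whose first-sample failure probability is $d^{-35}$, so it survives a union bound over $N_{i-1}(u)$.
\item A split of the resampled variables into $S_a$ (the $K_i$'s of neighbors and the selections of two-hop vertices) and $S_b$ (the neighbors' own selections). Triangle-freeness makes these disjoint, so the neighbors' selections are independent of their tentative palettes and a Chernoff bound with the additive $\frac{\delta}{3}T$ slack applies.
\end{itemize}
Without these ingredients, Step~3 does not establish resilience of $M_i^+$ and $R_i^+$, and the reduction to Theorem~\ref{thm:davies12} is left unsupported.
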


\begin{lemma}
With the parameter schedule fixed in Section~\ref{sec:triangle-free-parameters}, assume \(F_{i-1}\) holds and \(d_{i-1}'>d_{\mathrm{stop}}\). For any realization \(X_i\) of the round-\(i\) variables in which no bad event of \(\mathcal B_i\) occurs, the deterministic post-processing step in Algorithm~\ref{alg:triangle-free} yields \(F_i\).
\label{lem:triangle-free-round}
\end{lemma}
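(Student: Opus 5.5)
The plan is a deterministic verification: assume none of \(Q_i^+(u)\), \(R_i^+(u)\), \(M_i^+(u)\) occurs, and show both parts of \(F_i(u)\) for every \(u\in U_i\) using \(F_{i-1}\) and the recurrences of the parameter schedule. No probability enters; the work is comparing the slack terms in the bad events (the \(\delta/8,\delta/3,\delta/4\) factors, the \(200/r,300/r\) terms, and \(\frac{\Delta}{r(\log\Delta)^8}\), \(\frac{\delta}{3}T\)) with the growth factors \((1+\delta)^i\), \((1+2\delta)^i\), \((1-\delta/7)^i\) built into the primed parameters.

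\textbf{Degree bound.} This is the easiest part. From \(\neg M_i^+(u)\), \(d_{i-1}'=(1+\delta)^{i-1}d_{i-1}\) and \(r=(\log\Delta)^{20}\), we get
\[
|N_i(u)|\le \alpha_i(1+\delta/3)d_{i-1}'+\frac{200}{r}d_{i-1}'+\frac{\Delta}{r(\log\Delta)^8}.
\]
Since \(d_{i-1}'>d_{\mathrm{stop}}=\Delta/(\log\Delta)^6\), the last two terms are \(O(\delta^2 d_{i-1}')\) with \(\delta=1/\log^4\Delta\). We need \(\alpha_i\) bounded below by a constant power of \(\Delta^{-1}\), or more simply \(\alpha_i\ge \Delta^{-o(1)}\) per round. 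This follows from \(\pi_i p_i'\le p_{i-1}/(2Kt_{i-1})\) together with the invariant that \(t_{i-1}\gtrsim p_{i-1}\cdot\)const, which I will extract from the schedule (\(\alpha_i\ge e^{-1/K}\) roughly). Hence the additive terms are absorbed into \(\alpha_i\cdot\delta\cdot d_{i-1}'/2\). This gives \(|N_i(u)|\le\alpha_i(1+\delta)(1+\delta)^{i-1}d_{i-1}\le(1+\delta)^i d_i\), using \(d_i\ge\alpha_i d_{i-1}\).

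\textbf{Palette size.} From \(\neg Q_i^+(u)\) we have \(|\widetilde P_i(u)|\ge \beta_i|P_{i-1}(u)|(1-\delta/8)(1-200/r)\). If \(|P_{i-1}(u)|\ge p_{i-1}'\), this is at least \(\beta_ip_{i-1}'(1-\delta/7)=p_i'\), because \((1-\delta/8)(1-200/r)\ge 1-\delta/7\). Otherwise the deficit \(\lambda_{i-1}(u)<1\) is carried through proportionally, giving \(\widetilde\lambda_i(u)\ge\lambda_{i-1}(u)(1-\delta/7)\) in the normalized sense.

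\textbf{Average \(c\)-degree.} This is the main obstacle. First I bound \(\widetilde D_i(u)\). By \(\neg R_i^+(u)\),
\[
\sum_{c\in P_{i-1}(u)}|N_{i,c}(u)|\le |P_{i-1}(u)|\,\alpha_i(1+\delta/3)\bigl[\beta_i^2(\overline n_{i-1}+\tfrac{\delta}{4}t_{i-1})+\tfrac{300}{r}t_{i-1}\bigr]+\ldots
\]
Dividing by \(|\widetilde P_i(u)|\ge\beta_i|P_{i-1}(u)|(1-\delta/7)\) gives
\[
\widetilde n_i(u)\lesssim\alpha_i\beta_i(1+\delta/2)\bigl(\overline n_{i-1}+\tfrac{\delta}{4}t_{i-1}\bigr)+O(t_{i-1}/(\beta_i r)).
\]
Then I combine this with the padding term \((1-\widetilde\lambda_i)2t_i\). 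I case on \(\lambda_{i-1}\) and use \(D_{i-1}\le t_{i-1}'\) to get
\[
\widetilde D_i\le\alpha_i\beta_i(1+2\delta)^{i-1}(1+\delta')t_{i-1}+\frac{\delta}{3}T\le(1+2\delta)^it_i.
\]
The care needed is that padding at \(2t_i\) is consistent with the scaling \(2t_i\ge2\alpha_i\beta_it_{i-1}\) of the old padding, and the \(\max\) with \(T\) absorbs \(\frac{\delta}{3}T\).

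Second, I pass from \(\widetilde D_i\) to \(D_i\) through filtering and truncation. Filtering removes only colors with \(|\widehat N_{i,c}|>2t_i\), each of which contributes more than the padding value \(2t_i\), so removing them and replacing them by padding does not increase the padded average; I need to check that \(N_{i,c}\subseteq\widehat N_{i,c}\) makes this monotonicity valid. Truncation to exactly \(p_i'\) colors must be handled by noting that an arbitrary subset could raise the average. I expect to resolve this via the normalization by \(p_i'\): if \(|\widehat P_i|\ge p_i'\), I bound the sum over retained colors by the sum over \(\widetilde P_i\) divided by \(p_i'\), and compare that with \(\widetilde D_i\) using \(\widetilde\lambda_i=1\) and the fact that \(|\widetilde P_i|/p_i'\) is close to \(1\) because \(|P_{i-1}|\le p_{i-1}'\) after the previous truncation. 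I anticipate this truncation step, together with tracking the \(1/\beta_i\) blowup of the \(300/r\) term, to be where the constants must be checked most carefully; the latter is absorbed because \(\beta_i\ge\Delta^{-o(1)}\gg r^{-1/2}\).
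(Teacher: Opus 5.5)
Your skeleton is the paper's: a purely deterministic check from $\neg Q_i^+(u)$, $\neg R_i^+(u)$, $\neg M_i^+(u)$, $F_{i-1}$ and the recurrences, passing through $\widetilde\lambda_i$ and $\widetilde n_i$. The degree step, however, has a genuine gap. The lower bound on $\alpha_i$ you plan to extract from the schedule is false, because the schedule does not keep $t_{i-1}\gtrsim p_{i-1}$. For an executed round $i$, no floor is active in rounds $1,\dots,i-1$, so $t_{i-1}/p_{i-1}=d_{i-1}/p_0$ and $\alpha_i\approx\exp\bigl(-\beta_ip_0/(4Kd_{i-1})\bigr)$. The only constraint is $d_{i-1}'>d_{\mathrm{stop}}$, so $\alpha_i$ can be as small as $\exp(-\Theta((\log\Delta)^6/k))$. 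That is far below the $300(\log\Delta)^{-16}$ needed to absorb $\frac{200}{r}d_{i-1}'$ into $\frac{2\delta}{3}\alpha_id_{i-1}'$. In particular, in the final round, where $\alpha_id_{i-1}<d_{\mathrm{stop}}/(1+\delta)^i$ by definition, your bound $|N_i(u)|\le(1+\delta)\alpha_id_{i-1}'$ does not follow.

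The missing idea is the floor in $d_i=\max\{\alpha_id_{i-1},d_{\mathrm{stop}}/(1+\delta)^i\}$, which the paper uses through Theorem~\ref{thm:executed-round-bounds}. If $\alpha_id_{i-1}$ is above the floor, then $\alpha_id_{i-1}'\ge d_{\mathrm{stop}}/(1+\delta)$, and the additive terms (at most $\frac{\delta}{10}d_{\mathrm{stop}}$) fit into $\frac{2\delta}{3}\alpha_id_{i-1}'$. Otherwise $d_i'=d_{\mathrm{stop}}$, and $\alpha_i(1+\delta/3)d_{i-1}'<\frac{1+\delta/3}{1+\delta}d_{\mathrm{stop}}$ leaves room for them. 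The same false bound is hidden in your $c$-degree estimate, where you write the $300/r$ contribution as $O(t_{i-1}/(\beta_ir))$. Keep the factor $\alpha_i$ that $R_i^+(u)$ supplies; then $t_i\ge\alpha_i\beta_it_{i-1}$ makes that term $o(\delta t_i)$.

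Your truncation fix also rests on a false premise. You are right that truncation needs an argument: the paper's one-line claim $D_i(u)\le\widetilde D_i(u)$ can fail when truncation discards colors with $|N_{i,c}(u)|=0$. But $|P_{i-1}(u)|\le p_{i-1}'$ only gives $|\widetilde P_i(u)|\le p_i'/((1-\delta/7)\beta_i)$, which can be roughly $e^{1/K}p_i'$, and no event of $\mathcal B_i$ bounds $|\widetilde P_i(u)|$ from above. So $D_i\le(|\widetilde P_i|/p_i')\widetilde D_i$ loses a constant factor. Instead, avoid dividing by $|\widetilde P_i(u)|$ altogether. When $|\widehat P_i(u)|\ge p_i'$,
\[
D_i(u)=\frac{1}{p_i'}\sum_{c\in P_i(u)}|N_{i,c}(u)|\le\frac{1}{p_i'}\sum_{c\in P_{i-1}(u)}|N_{i,c}(u)|,
\qquad
\frac{|P_{i-1}(u)|}{p_i'}=\frac{\lambda_{i-1}(u)}{(1-\delta/7)\beta_i}.
\]
Then $\neg R_i^+(u)$ gives $D_i(u)\le\lambda_{i-1}(u)\bigl(\alpha_i\beta_i\overline n_{i-1}(u)+2\delta t_i\bigr)\le\frac{t_i}{t_{i-1}}D_{i-1}(u)+2\delta t_i\le t_i'$, with the $1/\beta_i$ cancelling one factor of $\beta_i^2$.

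Finally, the inclusion $N_{i,c}\subseteq\widehat N_{i,c}$ gives the filtering monotonicity in the wrong direction. A color removed because $|\widehat N_{i,c}(u)|>2t_i$ may still have $|N_{i,c}(u)|<2t_i$, so padding it at $2t_i$ can raise an average built from $|N_{i,c}|$. The comparison works for padded averages built from $|\widehat N_{i,c}(u)|$, so what you actually need is a bound on $\sum_c|\widehat N_{i,c}(u)|$, not on $\sum_c|N_{i,c}(u)|$.
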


\begin{lemma}
With the parameter schedule fixed in Section~\ref{sec:triangle-free-parameters}, \(d_i'\) reaches the final threshold: there is an integer \(\ell=O(k+\log\log\Delta)\) such that
\[
d_\ell'\le d_{\mathrm{stop}}=\Delta/(\log\Delta)^6.
\]
\label{lem:parameter-evolution}
\end{lemma}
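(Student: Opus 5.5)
We prove Lemma~\ref{lem:parameter-evolution} by tracking the deterministic recurrences of Section~\ref{sec:triangle-free-parameters}. The plan has three phases. First, we show that while \(t_{i-1}>T\) the quantities \(p_i\) and \(t_i\) evolve almost exactly as in Pettie--Su. Second, we show that as long as \(d_{i-1}\) is above its floor, each round multiplies \(d_i\) by \(\alpha_i\). Third, we bound \(\prod_i\alpha_i\) from above strongly enough to push \(d_i'\) below \(d_{\mathrm{stop}}\) after \(O(k+\log\log\Delta)\) rounds.

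We first control the slack factors. We restrict attention to \(i\le i_{\max}:=C_0(k+\log\log\Delta)\) for a suitable constant \(C_0\). Since \(k=O(\log\Delta)\) and \(\delta=1/\log^4\Delta\), we have \((1+2\delta)^i=1+o(1)\), \((1+\delta)^i=1+o(1)\) and \((1-\delta/7)^i=1-o(1)\) throughout this range. Hence \(p_i'=(1-o(1))p_i\), and the exponent in \(\alpha_i\) satisfies \((1-(1+2\delta)^{i-1}/2)p_i'\ge(\tfrac12-o(1))p_i\).

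Next we bound \(\beta_i\) and \(\alpha_i\). Since \(\pi_i=1/(2Kt_{i-1}+1)\), we get \(\beta_i=(1-\pi_i)^{2t_{i-1}}\ge e^{-1/K}(1-o(1))=e^{-\eps/4}(1-o(1))\). Therefore \(p_i\ge p_0e^{-\eps i/4}(1-o(1))\). Using \(1-\pi\le e^{-\pi}\), we also get \(\alpha_i\le\exp(-(\tfrac12-o(1))p_i/(2Kt_{i-1}))\). The key quantity is therefore the ratio \(\rho_i:=p_i/t_i\).

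We then split into two phases, depending on whether \(t_i\) is above or at the floor \(T\).

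\emph{Phase 1: \(t_{i-1}>T\).} Here \(t_i=\alpha_i\beta_it_{i-1}\), so \(\rho_i=\rho_{i-1}/\alpha_i\). Initially \(\rho_0=p_0/\Delta\approx 1/k\), and each round multiplies \(\rho\) by \(\exp(\Omega(\rho/K))\). Early on, \(\rho\) is small, so this growth is only additive: \(\rho\) increases by about \(\rho^2/(4K)\) per round in relative terms. Following Pettie--Su's analysis, the number of rounds needed for \(\rho\) to reach a constant is \(O(Kk)=O_\eps(k)\). After that, \(\rho\) grows doubly exponentially: \(\rho_i\ge\exp(c\rho_{i-1})\rho_{i-1}\). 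So within \(O(\log^*\Delta)\) further rounds, \(t_i\) would drop below \(T\).

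We must also check that \(t_i\) cannot hit the floor "too early", that is, before palettes are large compared with \(T\). The condition \(k\le(1-\eps)\ln\Delta/4\) gives \(p_i\ge\Delta^{1-(1-\eps)/4\cdot\ldots}\) well above \(T=\Delta^{\eps_1/3}\). This is the role of \(\eps_1\). Specifically, \(p_i\gtrsim (\Delta/k)e^{-\eps i/4}\) with \(i=O(k)\) is at least \(\Delta^{1-\eps'}\gg T\).

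\emph{Phase 2: \(t_{i-1}=T\).} Now \(\alpha_i\le\exp(-p_i/(5KT))\le\exp(-\Delta^{\Omega(\eps)})\). In Phase 1, the cumulative product \(\prod\alpha_j\) equals \((t_i/t_0)/\prod\beta_j\le e^{\eps i/4}t_i/\Delta\). Using \(d_i\le\max\{\Delta\prod_{j\le i}\alpha_j,\ d_{\mathrm{stop}}/(1+\delta)^i\}\), we get \(d_i\le\max\{e^{\eps i/4}t_i,\ d_{\mathrm{stop}}(1+\delta)^{-i}\}\). At the end of Phase 1, \(t_i\le T\cdot\text{poly}\) is far below \(d_{\mathrm{stop}}\). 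One Phase-2 round with the tiny \(\alpha_i\) then suffices. So \(d_i=d_{\mathrm{stop}}/(1+\delta)^i\) and \(d_i'=d_{\mathrm{stop}}\), which gives the claim with \(\ell=O(k+\log\log\Delta)\). The \(\log^*\) term is absorbed into the \(\log\log\Delta\) term.

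The main obstacle is the Phase~1 counting. We must show that \(\rho\) reaches a constant in \(O_\eps(k)\) rounds while the accumulated losses \(e^{-\eps i/4}\) in \(p_i\) and the \((1+o(1))\) slacks do not overwhelm the growth. The plan is to compare with the ODE \(\dot\rho=\rho^2/(4K)\cdot(1-o(1))\) and to use the explicit margin \(\eps_2\) to absorb these losses.
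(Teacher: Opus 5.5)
\textbf{Your endgame.} This part is a legitimate alternative to the paper's. After the time $\tau$ at which $p_i\ge t_i$, the paper only uses $\alpha_i\le\exp(-e^{-1/K}/(4K))$ and spends $O(\log\log\Delta)$ rounds of geometric decay to bring $d_i'$ down to $d_{\mathrm{stop}}$. You instead let $\rho=p/t$ grow tower-fast and then use a single round in which $\alpha_i$ is tiny. Indeed, since $d_{i-1}\le\Delta$ always, any round $i=O(\log\Delta)$ with $\alpha_i\le(\log\Delta)^{-7}$ already forces $d_i'=d_{\mathrm{stop}}$. This route would even give $O(k+\log^*\Delta)$ rounds. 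But both endgames need $p$ to still be polynomially above $T$ at that moment. Your justification for this is that Phase~1 takes $O(Kk)$ rounds, hence $p_i\gtrsim(\Delta/k)e^{-i/K}\ge\Delta^{1-\eps'}$. That step does not hold, and it is the one you yourself flag as the main obstacle.

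\textbf{Why the step fails.} The margin is only $\Theta(\eps)$ in the exponent. Take $k\approx(1-\eps)\ln\Delta/4$, so $T=\Delta^{\eps_1/3}=\Delta^{\eps/9}$. If Phase~1 lasts $\gamma\cdot 4Ke^{1/K}k$ rounds, your bound gives $p\gtrsim\Delta^{1-\gamma(1-\eps)e^{\eps/4}}$. For $\gamma=1$ this is about $\Delta^{3\eps/4}$, not $\Delta^{1-\eps'}$, and it exceeds $T$ only if $\gamma\le 1+O(\eps)$. So an $O(Kk)$ count is useless; you need the sharp leading constant.

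Without it, you cannot rule out $t$ reaching the floor $T$ while $p<T$. In that regime $\alpha_i\approx\exp(-p_i/(4KT))$ with $p_i/T<1$ shrinking geometrically. All later $\alpha_i$ then multiply to at least a constant, and $d_i$ stalls. This is a real failure mode, not mere slack.

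Your ODE $\dot\rho=\rho^2/(4K)$ also omits the factor $\beta_i\ge e^{-1/K}$ in $p_i/t_{i-1}=\beta_i\rho_{i-1}$. The correct count is $(1+o(1))\,4Ke^{1/K}k$ rounds. The resulting exponent loss $4(e^{1/K}-1)k/\ln\Delta\le 5\eps/16$ is exactly what the $\eps/3$ in $\eps_2$ is there to absorb.

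\textbf{How the paper gets the sharp count.} It writes $s_i=t_i/p_i=1/\rho_i$ and proves $s_i\le s_{i-1}\exp(-(1-o(1))C/s_{i-1})$ with $C=1/(4Ke^{1/K})$. It then splits at $s=k^{2/3}$. Above that level the exponent is $o(1)$, so $s$ drops by $(1-o(1))C$ per round, taking $(1+o(1))k/C$ rounds. Below it, dyadic blocks cost only $O(k^{2/3})=o(k)$ rounds. A bare ODE comparison does not deliver this, because $e^{-C/s}\approx 1-C/s$ breaks down once $s=O(1)$. This gives $\tau\le(1+o(1))k/C$ and hence $p_j\ge\Delta^{\eps_2-o(1)}>T$ on the following window. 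With that estimate in hand, your endgame goes through.
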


\begin{proof}[Proof of Theorem~\ref{thm:triangle-free-main}]
Fix a constant \(\eps>0\) and a parameter \(k\) with \(1\le k\le(1-\eps)\ln\Delta/4\). Let \(\Delta_\eps\) be large enough so that all asymptotic inequalities used in the three supporting lemmas hold for every \(\Delta\ge\Delta_\eps\), and assume \(\Delta\ge\Delta_\eps\). We first dispose of the genuinely large-degree case. By Pettie and Su~\cite[Corollary~2]{PettieSu15}, there is a constant \(C=C(\eps)\) such that, whenever \(\Delta>(\log n)^C\), the desired \(\Delta/k\)-coloring is found in \(O(k+\log^* n)\) rounds. This is already \(O(k+(\log\log n)^{O(1)})\).

It remains to consider the range \(\Delta\le(\log n)^C\), where \(C\) is fixed.
 Let \(\ell\) be the first round with \(d_\ell'\le d_{\mathrm{stop}}\). Lemma~\ref{lem:parameter-evolution} gives \(\ell=O(k+\log\log\Delta)\), and since \(\Delta\le(\log n)^C\), this is \(O(k+(\log\log n)^{O(1)})\).
 It is enough to prove that \(F_i\) holds for every \(i\le \ell\) with high probability. Indeed, if \(F_\ell\) holds, then
\[
\Delta(G_\ell)=\max_{u\in U_\ell}|N_\ell(u)|\le d_\ell'\le d_{\mathrm{stop}},
\]
so the final residual graph is eligible for Theorem~\ref{thm:clp20} from Section~\ref{sec:preliminaries}.

We now prove that \(F_i\) holds throughout the process. Before the first iteration, \(F_0\) holds deterministically: \(G_0=G\), each initial palette has size \(p_0'=p_0\), every initial \(c\)-degree is at most \(\Delta=t_0'\), and \(|N_0(u)|\le \Delta=d_0'\) for every vertex \(u\).

 For the induction step, fix \(1\le i\le\ell\) and assume \(F_{i-1}\) holds at the start of iteration \(i\). By the minimality of \(\ell\), \(d_{i-1}'>d_{\mathrm{stop}}\). Algorithm~\ref{alg:triangle-free} constructs \(\mathfrak L_i\) and calls Algorithm~\ref{alg:solve-lll}. Lemma~\ref{lem:triangle-free-instance} shows that this call finishes in \((\log\log n)^{O(1)}\) rounds. Condition on the event that this solver call returns a realization \(X_i\) avoiding every bad event of \(\mathcal B_i\). Once \(X_i\) is fixed, Lemma~\ref{lem:triangle-free-round} implies that the deterministic update step establishes \(F_i\). Therefore \(F_i\) holds after every completed iteration up to \(\ell\), provided all invoked solvers have succeeded.

For success probability, Lemma~\ref{lem:triangle-free-instance} gives that each of the \(\ell\) calls to Algorithm~\ref{alg:solve-lll} succeeds with high probability. Since \(C\) is fixed, \(\ell=O(k+\log\log\Delta)=O(\log\log n)\) in this range. A union bound therefore implies that all \(\ell\) calls succeed simultaneously with probability at least \(1-n^{-c_1}\) for some constant \(c_1>1\), after decreasing the exponent constant if necessary.
Algorithm~\ref{alg:triangle-free} uses Theorem~\ref{thm:clp20} to color \(G_\ell\) in \(\tilde O(\log^{5/3}\log n)\) rounds, 
using \(d_{\mathrm{stop}}+1\) fresh colors, with probability at least \(1-n^{-c_2}\) for some constant \(c_2>1\). Taking one final union bound with the iteration-level error probabilities yields an overall success probability of at least \(1-n^{-c_3}\) for some constant \(c_3>1\), i.e. \(\mathrm{w.h.p.}\). 

The main phase uses \(p_0=\frac{\Delta}{k} - \frac{\Delta}{(\log \Delta)^6}-1\) colors, so the total number of colors is at most \(\Delta/k\). 

The total number of \LOCAL rounds is \(O(k+(\log\log n)^{O(1)})\): 
since \(\Delta\le(\log n)^C\), we have \(k= O(\log\log n)\) and hence \(\ell=O(k+\log\log\Delta)=O(\log\log n)\). 
Moreover, Lemma~\ref{lem:triangle-free-instance} shows that each call to Algorithm~\ref{alg:solve-lll} takes \((\log\log n)^{O(1)}\) rounds. 
Therefore the \(\ell\) per-iteration solver calls, together with the \(O(1)\) rounds of deterministic steps in each iteration, contribute only \((\log\log n)^{O(1)}\) rounds in total.
\end{proof}

\section{Proofs of the Triangle-Free Lemmas}
\label{sec:triangle-free-lemma-proofs}

This section provides the deferred proofs of the three lemmas from Section~\ref{sec:triangle-free}: 
Section~\ref{sec:triangle-free-numerical-proof} proves Lemma~\ref{lem:parameter-evolution}, 
Section~\ref{sec:triangle-free-instance-proof} proves Lemma~\ref{lem:triangle-free-instance}, 
Section~\ref{sec:triangle-free-resilience} proves the resilience propositions used in the instance proof, 
and Section~\ref{sec:triangle-free-round-proof} proves Lemma~\ref{lem:triangle-free-round}.

\subsection{Numerical Parameter Bounds}
\label{sec:triangle-free-numerical-proof}

We prove a more detailed numerical theorem, from which Lemma~\ref{lem:parameter-evolution} follows directly.

\begin{theorem}
\label{thm:executed-round-bounds}
Define the stopping time \(\tau_{\mathrm{end}}:=\inf\{i\ge 1:d_{i-1}'\le d_{\mathrm{stop}}\}\in \mathbb N\cup\{\infty\}\).
Assume \(r=(\log\Delta)^{20}\) and that \(\Delta\ge \Delta_\eps\). For the parameter choices fixed in Section~\ref{sec:triangle-free-parameters}, \(\tau_{\mathrm{end}}=O(k+\log\log\Delta)\). Moreover, for every \(i\) with \(1\le i<\tau_{\mathrm{end}}\),
\begin{align*}
\beta_i &\ge e^{-1/K},\\
p_{i-1}'\beta_i\Bigl(1-\frac{\delta}{8}\Bigr)\Bigl(1-\frac{200}{r}\Bigr) &\ge p_i',\\
\alpha_i\Bigl(1+\frac{\delta}{3}\Bigr)d_{i-1}'
  + \frac{200}{r}d_{i-1}'
  + \frac{\Delta}{r(\log\Delta)^8}
&\le d_i',\\
p_i' &\ge T(\log\Delta)^6.
\end{align*}
\end{theorem}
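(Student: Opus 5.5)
The plan is to analyze the recurrences directly, in the style of Pettie--Su, splitting the evolution into two phases: a first phase in which \(t_i=\alpha_i\beta_i t_{i-1}\) decays geometrically but slowly, and a second phase after \(t_i\) has hit the floor \(T\). Throughout, the elementary approximations \((1-\pi)^x=\exp(-\pi x(1+O(\pi)))\) and \((1\pm\delta)^i=1\pm O(i\delta)\) are valid because \(i=O(\ln\Delta)\) and \(\delta=1/\log^4\Delta\).

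\textbf{The four inequalities.} First, \(\beta_i=(1-\pi_i)^{2t_{i-1}}\) with \(\pi_i=1/(2Kt_{i-1}+1)\). Then \(2t_{i-1}\pi_i<1/K\), and \(\ln(1-\pi)\ge-\pi/(1-\pi)\) gives exactly \(\beta_i\ge e^{-1/K}\).

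Second, \(p_i'/(p_{i-1}'\beta_i)=1-\delta/7\). The palette inequality therefore reduces to \((1-\delta/8)(1-200/r)\ge 1-\delta/7\). This holds because \(200/r=200/\log^{20}\Delta\ll\delta/56\).

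Third, we have \(p_i'\ge e^{-i/K}(1-\delta/7)^i p_0\). Once \(i=O(k+\log\log\Delta)\) is established, \(e^{-i/K}\ge \Delta^{-(\eps/4)\cdot 4k/\ln\Delta}(\log\Delta)^{-O(1)}\) up to constants, since \(i/K\approx \eps i/4\). Combined with \(p_0\approx\Delta/k\) and \(T=\Delta^{\eps_1/3}\) where \(\eps_1\le 1-4k/\ln\Delta\), this yields \(p_i'\ge T(\log\Delta)^6\). Some care is needed with the constant hidden in the bound on the number of rounds.

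Fourth, the degree inequality. Dividing by \(d_i'=(1+\delta)^i d_i\ge(1+\delta)^i\alpha_i d_{i-1}\), the main term contributes \(\alpha_i(1+\delta/3)(1+\delta)^{i-1}d_{i-1}\), which leaves a \((1+\delta)/(1+\delta/3)\) margin, i.e. a gap of about \(\tfrac23\delta\alpha_i d_{i-1}'\). The additive errors \(\tfrac{200}{r}d_{i-1}'\) and \(\Delta/(r\log^8\Delta)\) must fit into this gap. This requires \(\alpha_i\ge(\log\Delta)^{-O(1)}\) per round. That lower bound holds since \(\alpha_i\ge (1-\pi_i)^{p_i'}\approx\exp(-p_i'/(2Kt_{i-1}))\), and the ratio \(p_i'/t_{i-1}\) stays \(O(1)\) in phase one and is bounded in phase two as long as \(d\) has not yet hit the stop threshold. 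Before stopping, \(d_{i-1}'>d_{\mathrm{stop}}=\Delta/\log^6\Delta\), so \(\Delta/(r\log^8\Delta)\le d_{i-1}'/r\). With \(r=\log^{20}\Delta\) both error terms are \(O(d_{i-1}'/\log^{20}\Delta)\ll\delta\alpha_i d_{i-1}'\) provided \(\alpha_i\ge\log^{-10}\Delta\). In the case where \(d_i\) equals the floor \(d_{\mathrm{stop}}/(1+\delta)^i\), the bound is immediate from the same estimate.

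\textbf{Round count and main obstacle.} The main obstacle is bounding \(\tau_{\mathrm{end}}\), which amounts to tracking the ratio \(t_i/p_i\). In phase one, \(\alpha_i\approx\exp(-p_i'/(4Kt_{i-1}))\) (using \((1+2\delta)^{i-1}\approx1\)), so \(t_i/p_i=\alpha_i t_{i-1}/p_{i-1}\). Starting from \(t_0/p_0=k\), the ratio \(s_i=t_i/p_i\) obeys \(s_i\approx s_{i-1}e^{-1/(4Ks_{i-1})}\), so \(s\) decreases by roughly \(1/(4K)\) per round. After \(O(Kk)=O(k)\) rounds \(s\) is \(O(1)\) and then collapses doubly exponentially, so \(t\) reaches \(T\) after \(O(\log\log\Delta)\) further rounds. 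The constraint \(k\le(1-\eps)\ln\Delta/4\) is exactly what guarantees that \(p_i\) is still at least \(\Delta^{\eps_2}\gg T\) at that moment. Meanwhile \(d_i\) shrinks by \(\alpha_i\), and \(\prod\alpha_j=(t_i/t_0)\prod\beta_j^{-1}\). Once \(t=T\), each round gives \(\alpha_i\le\exp(-p_i'/(2KT)\cdot\tfrac12)\le\exp(-(\log\Delta)^6/(4K))\). Hence one or two more rounds push \(d_i'\) below \(d_{\mathrm{stop}}\), giving \(\tau_{\mathrm{end}}=O(k+\log\log\Delta)\). I expect the delicate parts to be making the phase-one drift estimate rigorous with the \((1\pm\delta)^i\) and \((1-\delta/7)^i\) slack factors, and checking the \(\alpha_i\) lower bound uniformly.
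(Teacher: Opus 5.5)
Your plan shares the paper's skeleton: the bound on $\beta_i$, the palette inequality, and the first-phase drift of $s_i=t_i/p_i$ by about $C=1/(4Ke^{1/K})$ per round. But the lower bound on $\alpha_i$ in your degree argument is false as stated.

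You claim $\alpha_i\ge(\log\Delta)^{-10}$ in every executed round, on the grounds that $p_i'/t_{i-1}$ stays $O(1)$ in phase one and stays bounded afterwards while $d_{i-1}'>d_{\mathrm{stop}}$. This does not hold. We have $p_i/t_{i-1}=\beta_i/s_{i-1}$. Your own identity $\prod_j\alpha_j=(t_i/t_0)\prod_j\beta_j^{-1}$, together with $d_i=\Delta\prod_j\alpha_j$, gives $s_i=d_i/p_0$ while no floor is active. An executed round only guarantees $d_{i-1}'>d_{\mathrm{stop}}$. So $p_i/t_{i-1}$ can be of order $(\log\Delta)^6/k$, and $\alpha_i$ can be as small as $\exp(-\Theta((\log\Delta)^6/k))$. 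This can occur in the last executed round, where $d$ drops far below $d_{\mathrm{stop}}$ in one step. The same identity shows that $d$ reaches $d_{\mathrm{stop}}$ long before $t$ reaches $T$, so your phase two never occurs among executed rounds.

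The fix is what the paper does, and it needs no trajectory information. The gap is only needed when $d_i=\alpha_i d_{i-1}$. In that case $\alpha_i d_{i-1}\ge d_{\mathrm{stop}}/(1+\delta)^i$, i.e.\ $\alpha_i d_{i-1}'\ge d_{\mathrm{stop}}/(1+\delta)$. So your gap $\tfrac{2\delta}{3}\alpha_i d_{i-1}'$ is at least $\tfrac{\delta}{2}d_{\mathrm{stop}}$. Meanwhile $d_{i-1}'\le(1+\delta)^{i-1}\Delta\le 2\Delta$ bounds the additive errors by $401\Delta/(\log\Delta)^{20}\le\tfrac{\delta}{10}d_{\mathrm{stop}}$. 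Your floor case is fine.

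\textbf{Constants in the palette lower bound.} Your estimate $e^{-i/K}\ge\Delta^{-(\eps/4)\cdot 4k/\ln\Delta}$ is off by a factor $K$ in the exponent. The drift is $\beta_i/(4K)\approx C$, not $1/(4K)$, so the first phase takes $(1+o(1))4Ke^{1/K}k$ rounds. Then $\beta_j\ge e^{-1/K}$ only yields $p_i\ge\Delta^{1-(1+o(1))e^{1/K}\cdot 4k/\ln\Delta}$. Getting $\Delta^{\eps_2-o(1)}$ therefore uses two things:
\begin{itemize}
\item the choice $K=4/\eps$, so that $4(e^{1/K}-1)k/\ln\Delta<\eps/3$;
\item a first-phase length of $(1+o(1))k/C$, not just $O(k)$. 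The paper gets this by running the linear-drift phase only down to $s=k^{2/3}$ and handling the rest with dyadic blocks.
\end{itemize}

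\textbf{Your tail differs from the paper's.} You follow the super-exponential collapse of $s$ until $t$ hits $T$, then finish in one round using $p_i'\ge T(\log\Delta)^6$. The paper instead stops at $\tau=\min\{i:p_i\ge t_i\}$ and shows $t_j\le p_j$ persists (via $p_j>T$). Hence $\alpha_i\le\exp(-e^{-1/K}/(4K))$, a constant below $1$, and $d$ decays geometrically for $A\log\log\Delta$ rounds. Your route is valid and even gives a shorter tail, but it requires the collapse analysis all the way down to $s\approx T/p_i=\Delta^{-\Omega(1)}$. The paper's route avoids that.
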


\begin{proof}
For every \(i\), the inequality \(\ln(1-x)\ge -x/(1-x)\) for \(0<x<1\), applied with \(x=1/(2Kt_{i-1}+1)\), gives
\[
\beta_i
=\left(1-\frac{1}{2Kt_{i-1}+1}\right)^{2t_{i-1}}
\ge \exp\!\left(-\frac{2t_{i-1}}{2Kt_{i-1}}\right)
\ge e^{-1/K}.
\]
Let \(s_i=t_i/p_i\).
Let \(\tau=\min\{i\ge 0:p_i\ge t_i\}\).

Set \(C:=\frac{1}{4Ke^{1/K}}\).
Fix \(i\le \min\{\tau,2k/C\}\). Since \(k\le(1-\eps)\ln\Delta/4\) and \(C\) is fixed, \(i=O(\log\Delta)\), so \(p_i'=(1-o(1))p_i\) and \((1+2\delta)^{i-1}=1+o(1)\). Also
\(\beta_i\ge e^{-1/K}\). Therefore
\[
\alpha_i
=\left(1-\frac{1}{2Kt_{i-1}+1}\right)^{(1-(1+2\delta)^{i-1}/2)p_i'}
\le \exp\!\left(-\frac{(1-o(1))p_i}{4Kt_{i-1}}\right)
\le \exp\!\left(-(1-o(1))C\frac{p_{i-1}}{t_{i-1}}\right).
\]
Hence, for every \(i<\min\{\tau,2k/C\}\),
\[
s_i=\frac{t_i}{p_i}=\alpha_i s_{i-1}
\le s_{i-1}\exp\!\left(-\frac{(1-o(1))C}{s_{i-1}}\right).
\]
Now let \(u:=k^{2/3}\). While \(s_{i-1}\ge u\), set \(w_i:=(1-o(1))C/s_{i-1}=o(1)\). Then
\[
s_i\le s_{i-1}e^{-w_i}\le s_{i-1}\bigl(1-(1-o(1))w_i\bigr)
\le s_{i-1}-(1-o(1))C.
\]
Thus, the number of steps needed to go from \(s_0=(1+o(1))k\) down to \(u\) is at most \(\frac{k-u}{(1-o(1))C}=(1+o(1))\frac{k}{C}\).
Once \(s_i\le u\), the recurrence is \(s_{i+1}\le s_i e^{-(1-o(1))C/s_i}\). For \(m\ge0\), whenever \(s_i\in[2^m,2^{m+1}]\), we have
\[
\frac{(1-o(1))C}{s_i}\ge \frac{(1-o(1))C}{2^{m+1}},
\]
hence each step decreases \(\log s_i\) by at least \((1-o(1))C/2^{m+1}\). Crossing one dyadic block changes \(\log s_i\) by at most \(\log 2\), so it takes at most \(O(2^m)\) steps. Summing over all blocks with \(2^m\le u\) gives
\[
\sum_{m=0}^{\lfloor \log_2 u\rfloor} O(2^m)=O(u).
\]
To see $\tau=O(\log \Delta)$, we argue by cases. If \(\tau<2k/C\), we are done since \(k\le(1-\eps)\ln\Delta/4\). 
Otherwise \(\tau\ge 2k/C\), so the recurrence is valid for every \(i<\min\{\tau,2k/C\}=2k/C\). In that case, the first phase takes at most \((1+o(1))k/C\) rounds and the second phase takes at most \(O(u)=o(k/C)\) rounds, so within the first \(2k/C\) rounds we obtain \(s_i\le1\), 
giving
\[
\tau\le (1+o(1))\frac{k}{C} = O(\log \Delta).
\]

For any fixed constant \(A\) and every \(j\in[\tau,\tau+A\log\log\Delta]\), we have \(j\le \tau+A\log\log\Delta\), hence
\[
p_j=\prod_{\ell=1}^{j}\beta_\ell\,p_0
\ge (1-o(1))e^{-j/K}\frac{\Delta}{k}
\ge (1-o(1))e^{-(\tau+A\log\log\Delta)/K}\frac{\Delta}{k}
\ge \Delta^{\,1-(1+o(1))\frac{k}{CK\ln\Delta}-o(1)}.
\]
Since \(C=\frac1{4Ke^{1/K}}\), this is
\[
 p_j\ge \Delta^{\,1-(1+o(1))4e^{1/K}\frac{k}{\ln\Delta}-o(1)}.
\]
Using \(K=4/\eps\) and \(k\le(1-\eps)\ln\Delta/4\),
\[
4\!\left(e^{1/K}-1\right)\frac{k}{\ln\Delta}
\le \left(e^{\eps/4}-1\right)(1-\eps)
\le \left(\frac{\eps}{4}+\frac{\eps^2}{16}\right)(1-\eps)
\le \frac{5\eps}{16}
<\frac{\eps}{3},
\]
where we used \(e^x-1\le x+x^2\) for \(x\in[0,1/4]\). 
Hence, for \(\Delta \ge \Delta_\eps\),
\begin{equation}\label{eq: larger than T}
   p_j\ge \Delta^{\,1-\frac{4k}{\ln\Delta}-\frac{\eps}{3}-o(1)}=\Delta^{\eps_2-o(1)} >\Delta^{\eps_1/3}=T.   
\end{equation}
By definition of \(\tau\), we have \(t_\tau\le p_\tau\). 
We now prove \(t_j\le p_j\) on this interval by induction on \(j\). 
The base case \(j=\tau\) is already noted. 
If \(\tau\le j<\tau+A\log\log\Delta\) and \(t_j\le p_j\), then
$\alpha_{j+1}\beta_{j+1}t_j\le \beta_{j+1}p_j=p_{j+1}$,
and, by \eqref{eq: larger than T} we also have \(t_{j+1}=\max\{\alpha_{j+1}\beta_{j+1}t_j,T\}\le p_{j+1}\).
Therefore \(t_j\le p_j\) for all \(j\in[\tau,\tau+A\log\log\Delta]\). In particular, for every \(\tau+1\le i\le\tau+A\log\log\Delta\),
\(\frac{p_i}{t_{i-1}}=\frac{\beta_i p_{i-1}}{t_{i-1}}\ge \beta_i\).
Now fix such an \(i\). Since \(i=O(\log\Delta)\), we obtain
\[
\alpha_i
\le \exp\!\left(-\frac{(1-o(1))p_i}{4Kt_{i-1}}\right)
\le \exp\!\left(-\frac{(1-o(1))e^{-1/K}}{4K}\right)
\]
and therefore \(\alpha_i\le \exp\!\left(-\frac{e^{-1/K}}{4K}\right)\). Hence, whenever \(\tau\le i\le\tau+A\log\log\Delta\) and \(d_i'>d_{\mathrm{stop}}\), we also have \(d_i>d_{\mathrm{stop}}/(1+\delta)^i\), so by the recurrence for \(d_i\),
\[
d_i=\alpha_i d_{i-1}
\le \exp\!\left(-\frac{e^{-1/K}}{4K}\right)d_{i-1}.
\]
Write \(c=\frac{e^{-1/K}}{4K}>0\). Iterating, for every \(s\ge 0\) with \(\tau+s<\tau_{\mathrm{end}}\),
\(d_{\tau+s}\le e^{-cs}d_\tau\). Also \(d_\tau\le \Delta\), so \(d_{\tau+s}\le e^{-cs}\Delta\).
Since \(\tau+s=O(\log\Delta)\) and \(\delta=(\log\Delta)^{-4}\),
\[
d_{\tau+s}'=(1+\delta)^{\tau+s}d_{\tau+s}
\le e^{\delta(\tau+s)}e^{-cs}\Delta
\le 2e^{-cs}\Delta,
\]
for \(\Delta\ge \Delta_\eps\). Therefore, if \(s=A\log\log\Delta\) and \(A>7/c\), then
\[
d_{\tau+s}'
\le 2\Delta\,e^{-cA\log\log\Delta}
=\frac{2\Delta}{(\log\Delta)^{cA}}
\le \frac{\Delta}{(\log\Delta)^6}
=d_{\mathrm{stop}}.
\]
Hence \(\tau_{\mathrm{end}}\le \tau+A\log\log\Delta+1\le (1+o(1))k/C+A\log\log\Delta+1=O(k+\log\log\Delta)\).

It remains to collect the bounds that hold in every executed round. For every \(1\le i<\tau_{\mathrm{end}}\), we have
\(p_i'\ge T(\log\Delta)^6\).
Indeed, \(i=O(k+\log\log\Delta)\) and \(p_i\ge \Delta^{\eps_2}\), so \(p_i'=(1-o(1))p_i\). Since \(\eps_2-\eps_1/3>0\), for  \(\Delta\ge \Delta_\eps\),
\[
p_i'\ge \Delta^{\eps_2}
\ge \Delta^{\eps_1/3}(\log\Delta)^6
=T(\log\Delta)^6.
\]

The palette inequality holds for every \(i\ge1\). Indeed, from \(p_i'=(1-\delta/7)^i p_i\), \(p_i=\beta_i p_{i-1}\), and \(200/r=o(\delta)\), we get
\[
p_{i-1}'\beta_i\Bigl(1-\frac{\delta}{8}\Bigr)\Bigl(1-\frac{200}{r}\Bigr)
\ge (1-\delta/7)\beta_i p_{i-1}'
=p_i'.
\]

For the degree-bound inequality, fix \(i\) with \(1\le i<\tau_{\mathrm{end}}\), and write
\[
E_i:=
\alpha_i\Bigl(1+\frac{\delta}{3}\Bigr)d_{i-1}'
  + \frac{200}{r}d_{i-1}'
  + \frac{\Delta}{r(\log\Delta)^8}.
\]
Since \(\alpha_j\le 1\) for every round \(j\), the recurrence for \(d_j'\) gives
\[
d_j'\le (1+\delta)^j\Delta
\qquad\text{for all }j\ge 0.
\]
Because \(i-1<\tau_{\mathrm{end}}=O(k+\log\log\Delta)=O(\log\Delta)\) and \(\delta=\log^{-4}\Delta\), we have \((1+\delta)^{i-1}\le 2\) for \(\Delta\ge \Delta_\eps\). Hence
\[
d_{i-1}'\le 2\Delta.
\]
Therefore
\[
\frac{200}{r}d_{i-1}'
  + \frac{\Delta}{r(\log\Delta)^8}
\le \frac{401\Delta}{(\log\Delta)^{20}}
\le \frac{\delta}{10}\,d_{\mathrm{stop}}
\]
since \(d_{\mathrm{stop}}=\Delta/(\log\Delta)^6\) and \(r=(\log\Delta)^{20}\).

If \(\alpha_i d_{i-1}\ge d_{\mathrm{stop}}/(1+\delta)^i\), then \(d_i=\alpha_i d_{i-1}\), hence
\[
d_i'=(1+\delta)^i d_i=(1+\delta)\alpha_i d_{i-1}'.
\]
The displayed lower bound on \(\alpha_i d_{i-1}'\) implies
\[
d_i'-\alpha_i\Bigl(1+\frac{\delta}{3}\Bigr)d_{i-1}'
=
\frac{2\delta}{3}\alpha_i d_{i-1}'
\ge \frac{2\delta}{3(1+\delta)}\,d_{\mathrm{stop}}
\ge \frac{\delta}{2}\,d_{\mathrm{stop}}.
\] 
Combining this with the bound on the additive error terms gives \(E_i\le d_i'\).

If \(\alpha_i d_{i-1}<d_{\mathrm{stop}}/(1+\delta)^i\), then \(d_i'=d_{\mathrm{stop}}\), while
\[
\alpha_i\Bigl(1+\frac{\delta}{3}\Bigr)d_{i-1}'
<
\frac{1+\delta/3}{1+\delta}\,d_{\mathrm{stop}}.
\]
Using the same additive-error estimate, we obtain
\[
E_i
<
\left(\frac{1+\delta/3}{1+\delta}+\frac{\delta}{10}\right)d_{\mathrm{stop}}
<
d_{\mathrm{stop}}
=
d_i'.
\]
Thus \(E_i\le d_i'\) for every \(i<\tau_{\mathrm{end}}\). This proves the theorem.
\end{proof}

\subsection{Proof of Lemma~\ref{lem:triangle-free-instance}}
\label{sec:triangle-free-instance-proof}

Fix a round \(i\) and assume that \(F_{i-1}\) holds. 
Since Lemma~\ref{lem:triangle-free-instance} is invoked only while Algorithm~\ref{alg:triangle-free} is still iterating, we also have \(d_{i-1}'>d_{\mathrm{stop}}\). 
As in Section~\ref{sec:triangle-free}, we work for sufficiently large \(\Delta\) (equivalently \(\Delta\ge \Delta_\eps\) after fixing \(\eps\)). 
Fix any partition \(\Phi=(\Phi_1,\dots,\Phi_r)\) that is an \(r\)-partition with parameters \(\min\{d_{i-1}',\Delta\}\) and \(2t_{i-1}\) in the sense of Definition~\ref{def:r-partition-degree-params} from Section~\ref{sec:color-balanced-partition}. 
The three propositions below, proved in Section~\ref{sec:triangle-free-resilience}, establish resilience for all bad events in \(\mathcal B_i\) with respect to \(\Phi\). 
Once these three statements are established for the same partition \(\Phi\), Theorem~\ref{thm:davies12} can be applied to the whole LLL instance \(\mathfrak L_i\).

\begin{proposition}
\label{prop:q-resilient}
Assume \(F_{i-1}\) holds and \(d_{i-1}'>d_{\mathrm{stop}}\). 
Then for every active vertex \(u\), 
\[
\Prob_{\mathcal V_i^1}\left[
\bigcup_{j\le r}\bigcup_{S\subseteq \Phi_j}
\left\{
\Prob_{\mathcal V_i^2}\left[(Q_i^+(u))^S \mid \mathcal V_i^1\right]
> 2d^{-20}
\right\}
\right]
\le d^{-35}.
\]
In particular, 
the bad event \(Q_i^+(u)\) is resilient with respect to \(\Phi\).
\end{proposition}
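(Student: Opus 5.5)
The event $Q_i^+(u)$ depends only on $X_i(u)$ and on $\mathrm{Sel}_i(v)$ for $v\in N_{i-1}(u)$, because $\widetilde P_i(u)=K_i(u)\setminus \mathrm{Sel}_i(N_{i-1}(u))$. Write $|\widetilde P_i(u)|=\sum_{c\in P_{i-1}(u)} Z_c$, where $Z_c$ is the indicator that $c\in K_i(u)$ and that no neighbor selects $c$. For a single sample these indicators are independent across colors, since all choices are made independently per color. By the definition of $r_c$, $\Prob[Z_c=1]=\beta_i$ whenever $r_c\le1$, and $r_c\le 1$ holds because $|N_{i-1,c}(u)|\le 2t_{i-1}$ after the filtering in round $i-1$. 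So $\E|\widetilde P_i(u)|=\beta_i|P_{i-1}(u)|$, and Chernoff gives deviation $\frac{\delta}{16}\beta_i|P_{i-1}(u)|$ with probability $\exp(-\Omega(\delta^2 p_{i-1}'))$. Here I use $|P_{i-1}(u)|$ of order $p_{i-1}'$. If the palette were smaller, $Q$ only concerns relative loss, so I would instead use $|P_{i-1}(u)|\ge$ some polynomial in $\Delta$ implied by $F_{i-1}$ through $D_{i-1}(u)\le t'_{i-1}$ and the padding; if that fails, I would treat tiny palettes separately via a direct Chernoff bound. Since $p_{i-1}'\ge T(\log\Delta)^6$ by Theorem~\ref{thm:executed-round-bounds} and $d\le\poly(\Delta)$, this is far below $d^{-35}$.

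For resilience I will use the symptom-event approach: find a single first-sample event, independent of $S$, that controls $(Q_i^+(u))^S$ for every $S\subseteq\Phi_j$ simultaneously. Fix a part $\Phi_j$ and split the colors into two classes.
\begin{itemize}
\item Colors not touched by $\Phi_j$: colors $c$ such that $u\notin\Phi_j$ and no $c$-neighbor of $u$ lies in $\Phi_j$. For these, $Z_c$ is computed entirely from the first sample, whatever $S$ is.
\item Touched colors: all remaining colors. If $u\in\Phi_j$, every color is touched, so this case needs separate treatment (see below).
\end{itemize}

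For $u\notin\Phi_j$, take $S\subseteq\Phi_j$. Resampling can change $Z_c$ only for colors that some $v\in N_{i-1}(u)\cap\Phi_j$ had in its palette. The damage comes only from colors that the second sample of such $v$ selects, and colors whose first-sample selection disappears can only help. Hence
\[
|\widetilde P_i(u)^S|\ \ge\ |\widetilde P_i(u)^{\emptyset}| - \Bigl|\bigcup_{v\in N_{i-1}(u)\cap\Phi_j}\mathrm{Sel}^2_i(v)\cap P_{i-1}(u)\Bigr|.
\]
The subtracted term is monotone in $S$, so its worst case is $S=\Phi_j$, and it is determined by the second sample alone. Its expectation is at most
\[
\pi_i\sum_{c}|N_{i-1,c}(u)\cap\Phi_j|\le \pi_i\cdot |P_{i-1}(u)|\cdot 200t_{i-1}/r,
\]
using the color-balanced partition with $c$-degree parameter $2t_{i-1}$. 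Since $\pi_i\approx 1/(2Kt_{i-1})$, this is about $\frac{100}{Kr}|P_{i-1}(u)|$. This is below the $\frac{200}{r}\beta_i|P_{i-1}(u)|$ slack, because $\beta_i\ge e^{-1/K}$ and $K=4/\eps$ is large. The selections are independent over vertices and colors, so Chernoff applies to the count of selected (vertex, color) pairs. The deviation probability is $\exp(-\Omega(p_{i-1}'/r))$, which is smaller than $d^{-20}$ since $p_{i-1}'\ge\Delta^{\Omega(\eps)}$. The symptom event is then just "$|\widetilde P_i(u)|$ in the first sample is below $(1-\delta/16)\beta_i|P_{i-1}(u)|$", and the first-paragraph bound shows it has probability at most $d^{-35}$.

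For $u\in\Phi_j$, the worst case is $S$ containing $u$, so $K_i(u)$ and $\mathrm{Sel}_i(u)$ are fully resampled. Then
\[
|\widetilde P_i(u)^S|\ \ge\ |K_i^2(u)\setminus(\mathrm{Sel}^1(N\setminus\Phi_j)\cup \mathrm{Sel}^2(N\cap\Phi_j))|.
\]
To handle this, I define the symptom as a first-sample event: the set $B$ of colors blocked by first-sample selections of neighbors has size at most $(1+\delta/16)(1-\beta_i')|P|$, where $\beta_i'$ is the blocking-free probability. Given that, fresh $K^2$ and $\mathrm{Sel}^2$ yield the required size by Chernoff. The main obstacle is making the $u\in\Phi_j$ case rigorous, because $Z_c$ mixes $u$'s own resampled coin with neighbors' first-sample coins. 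I would resolve it by conditioning on the first-sample blocked set, whose size is concentrated because neighbors' selections are independent per color, and then applying Chernoff to the independent second-sample keep-coins $\Prob[c\in K]=r_c$ restricted to unblocked colors. This preserves expectation $\approx\beta_i|P|$, since $\E[r_c\,\mathbf 1_{c\text{ unblocked}}]=\beta_i$.

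Finally, $2d^{-20}\le d^{-10}$ and $d^{-35}\le d^{-30}$, so the bound in the proposition implies $\Phi$-resilience of $Q_i^+(u)$ as defined in Section~\ref{sec:resilient-lll}.
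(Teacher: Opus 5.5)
Your treatment of the case $u\notin S$ is correct and matches the paper's argument in substance, in a slightly cleaner form. You dominate the extra loss for every $S\subseteq\Phi_j$ at once by the single second-sample count $|\mathrm{Sel}_i^2(N_{i-1}(u)\cap\Phi_j)\cap P_{i-1}(u)|$. Its mean is at most $\frac{100}{Kr}|P_{i-1}(u)|$ by the color-balance bound, so the only first-sample symptom needed there is $Q_i(u)$ with slack $\delta/16$ in place of $\delta/8$. The palette lower bound you hedge about is Lemma~\ref{fact:palette-lower}: $F_{i-1}(u)$ forces $(1-\lambda_{i-1}(u))2t_{i-1}\le t_{i-1}'$, hence $|P_{i-1}(u)|\ge(\frac12-o(1))p_{i-1}$. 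So no separate small-palette case arises.

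The gap is in the case $u\in S$, and there are two problems.

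\textbf{The displayed bound.} It fails for $S\subsetneq\Phi_j$, because vertices of $(N_{i-1}(u)\cap\Phi_j)\setminus S$ keep their first-sample selections. The bound valid uniformly in $S$ removes $\mathrm{Sel}_i^1(N_{i-1}(u))\cup\mathrm{Sel}_i^2(N_{i-1}(u)\cap\Phi_j)$ from $K_i^2(u)$.

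\textbf{The symptom event.} More seriously, your symptom (the first-sample blocked set $B$ has size near its mean) does not control the quantity you then apply Chernoff to. Given $\mathcal V_i^1$, the expected size of $K_i^2(u)\setminus B$ is $\sum_{c\in P_{i-1}(u)\setminus B}r_c$. Here $r_c=\beta_i(1-\pi_i)^{-|N_{i-1,c}(u)|}$ ranges from $\beta_i$ (colors with no $c$-neighbors, never blocked) to $1$ (colors of $c$-degree $2t_{i-1}$, blocked with probability $1-\beta_i$). So there is no single blocking-free probability. Consider an outcome with $|B|$ exactly at its mean, but with $B$ over-representing colors of $c$-degree $2t_{i-1}$ at the expense of intermediate-degree colors. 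It leaves this weighted mass a constant factor (depending on $K$) below $\beta_i|P_{i-1}(u)|$, far outside the $\delta$-slack. On such first samples $(Q_i^+(u))^{\{u\}}$ is likely, yet your symptom does not exclude them.

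\textbf{The repair.} Your identity $\E[r_c\mathbf 1_{c\notin B}]=\beta_i$ points to it: make the symptom $\sum_{c\in P_{i-1}(u)\setminus B}r_c<(1-\delta/16)\beta_i|P_{i-1}(u)|$. This is the lower tail of a sum of independent $[0,1]$-valued terms with mean $\beta_i|P_{i-1}(u)|$, so it has probability $\exp(-\Omega(\delta^2p_{i-1}))$. Then Chernoff on $K_i^2(u)$, together with your loss bound for $\mathrm{Sel}_i^2(N_{i-1}(u)\cap\Phi_j)$, gives $2d^{-20}$.

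The paper avoids this structural analysis altogether. It adds $\{\Prob_{\mathcal V_i^2}[(Q_i(u))^{\{u\}}\mid\mathcal V_i^1]>d^{-20}\}$ to the symptom event and bounds its probability by $d^{20}\Prob[Q_i(u)]$ via Markov, since $(Q_i(u))^{\{u\}}$ has the same law as $Q_i(u)$.
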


\begin{proposition}
\label{prop:r-resilient}
Assume \(F_{i-1}\) holds and \(d_{i-1}'>d_{\mathrm{stop}}\). Then for every active vertex \(u\), the bad event \(R_i^+(u)\) is resilient with respect to \(\Phi\).
\end{proposition}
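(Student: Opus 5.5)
The plan is to split the resampled-dependent part of \(R_i^+(u)\) into two pieces. One piece is controlled deterministically by the color-balanced partition. The other is a concentration statement about the first sample alone, which serves as the ``symptom event''. Then one takes a union over all \(S\subseteq\Phi_j\).

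\textbf{Step 1: decomposition.} Write the left side of \(R_i^+(u)\) as \(\frac{1}{|P_{i-1}(u)|}\sum_{c\in P_{i-1}(u)}\sum_{v\in N_{i-1,c}(u)}\mathbf 1[v\in U_i,\ c\in P_i(v)]\). An indicator \(\mathbf 1[v\text{ survives with }c]\) depends on \(X_i(v)\) and on \(X_i(w)\) for \(w\in N_{i-1}(v)\); the filtering and truncation steps reach further out, but they only shrink palettes. So I would upper bound the indicator by the monotone quantity
\[
Z_{v,c}:=\mathbf 1\bigl[c\in K_i(v)\setminus \mathrm{Sel}_i(N_{i-1}(v)),\ v\text{ not colored}\bigr],
\]
and work with \(\sum_c\sum_v Z_{v,c}\).

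\textbf{Step 2: bounding the effect of \(S\).} Fix \(j\), and suppose the variables in \(S\subseteq\Phi_j\) are switched to the second sample. Each term \(Z_{v,c}\) can change only if \(v\in\Phi_j\) or \(N_{i-1}(v)\cap\Phi_j\neq\emptyset\).
\begin{itemize}[leftmargin=*,itemsep=0.25em]
  \item \emph{Terms with \(v\in\Phi_j\).} By condition 2 of Definition~\ref{def:r-partition-degree-params}, there are at most \(200t_{i-1}/r\) such \(v\) per color. This contributes at most \(200t_{i-1}/r\) to the average.
  \item \emph{Terms with \(v\notin\Phi_j\) but a neighbor \(w\in\Phi_j\).} Switching \(w\) can only flip \(Z_{v,c}\) in a harmful direction (from 0 to 1) by \emph{removing} \(c\) from \(\mathrm{Sel}_i(w)\), or by un-coloring \(v\).
\end{itemize}
So, instead of tracking \(S\), I would bound the harmful change by the worst case in which every \(w\in N(v)\cap\Phi_j\) selects nothing, and every \(v\) with a neighbor in \(\Phi_j\) stays uncolored. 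This yields a monotone, \(S\)-independent upper bound \(Y_j^{(1)}(u)\), evaluated on the first sample, plus a correction. The correction counts pairs \((v,c)\) with \(c\) blocked only by neighbors in \(\Phi_j\). Its expectation is \(O(\pi_i\cdot 200\,t_{i-1}/r)\) per neighbor, and by linearity the average correction is \(O(t_{i-1}/r)\). This is absorbed by the \(\frac{300}{r}t_{i-1}\) slack.

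The triangle-free hypothesis is used here. It makes the events \(\{c\notin\mathrm{Sel}(N(v))\}\) for different \(v\in N(u)\) depend on disjoint sets of two-hop vertices, which is what lets the two-hop effects be handled at all.

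\textbf{Step 3: concentration.} It then suffices to show that, with probability \(\ge 1-d^{-35}\) over \(\mathcal V_i^1\), the quantity \(Y_j^{(1)}(u)\) and the correction both lie below their respective thresholds for all \(j\) simultaneously. These are the symptom events; a union bound over \(r\) parts is cheap. The expectation computation follows \cite{PettieSu15}: \(\E[Z_{v,c}]\le \beta_i\cdot\beta_i\cdot(\text{prob. uncolored})\), and the probability of staying uncolored gives the factor \(\alpha_i\) via \(|P_i(v)|\ge p_i'\)-type bounds from \(F_{i-1}\). The padding in \(D_{i-1}(u)\) and the \(\frac{\delta}{4}t_{i-1}\) term handle vertices with small palettes.

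For concentration, I would use a Talagrand/bounded-differences inequality with certificates on the \(O(\Delta^2)\) variables within two hops. The \(\frac{\delta}{3}T\) additive term and \(T=\Delta^{\eps_1/3}\) ensure the deviation is \(\Delta^{\Omega(\eps)}\), so the failure probability is \(\exp(-\Delta^{\Omega(1)})\le d^{-35}\). In the regime \(\Delta\le(\log n)^C\) there is the issue that \(d\) is only polynomial in \(\Delta\); this is fine.

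\textbf{Step 4: second sample.} Conditional on the symptom events failing, the second-sample part contributes only through the at most \(O(t_{i-1}/r)\) per-color variables in \(\Phi_j\). A similar Chernoff bound makes \(\Prob_{\mathcal V^2}[(R_i^+(u))^S]\le d^{-10}\) for every \(S\).

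\textbf{Main obstacle.} I expect Step 3 to be hardest: the concentration of the averaged \(c\)-degree, which has Lipschitz constants of order \(\Delta/|P_{i-1}(u)|\) per variable. I would first try Talagrand's inequality with the standard ``exceptional outcomes'' trick: bound the number of colors any single vertex selects by \(O(\pi_i p_{i-1}+\log\Delta)\) with high probability.
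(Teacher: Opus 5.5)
There is a genuine gap: your Step~2 removes, by a worst case over $S$, exactly the uncoloring randomness that is the source of the factor $\alpha_i$.

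The palette side of your plan is sound and is essentially the first claim in the paper's proof. Vertices of $\Phi_j$ contribute at most $200t_{i-1}/r$ per color. A vertex $x\notin\Phi_j$ can newly keep $c$ only if $c$ was blocked in $\mathcal V_i^1$ by some $y\in\Phi_j\cap N_{i-1,c}(x)$; this is the paper's set $A_{i,j}(u,c)$, concentrated by Hoeffding across colors.

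The trouble is the phrase ``every $v$ with a neighbor in $\Phi_j$ stays uncolored''. Definition~\ref{def:r-partition-degree-params} only bounds $|N_{i-1}(v)\cap\Phi_j|$ from above. A neighbor of degree $\Theta(d_{i-1}')$ typically has about $d_{i-1}'/r\ge\Delta/(\log\Delta)^{26}\gg1$ neighbors in every part. So your worst case declares essentially every $v\in N_{i-1}(u)$ uncolored, and $Y^{(1)}_j(u)$ has no $\alpha_i$ in its mean. Since $1-\alpha_i\gg\delta$, it then exceeds the threshold of $R_i^+(u)$ whenever $\overline n_{i-1}(u)$ is of order $t_{i-1}$. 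Step~3, which is meant to extract $\alpha_i$ from the probability of staying uncolored, contradicts Step~2, which has already set that probability to $1$.

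Step~4 also misplaces where $\mathcal V_i^2$ acts. What matters is not the at most $200t_{i-1}/r$ vertices of $\Phi_j\cap N_{i-1,c}(u)$. It is the two-hop vertices $w\in S\cap N_{i-1}(v)$, whose new selections shrink $\widetilde P_i(v)$ and can block every color that colored $v$ in the first sample.

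The missing idea is to keep $\alpha_i$ uniformly in $S$. This needs the randomness of the neighbors' own selections together with a guarantee that their tentative palettes survive resampling. The paper does this per color. It splits the resampled variables into $S_a$ and $S_b$, which are disjoint because $G_{i-1}$ is triangle-free. Here $S_a$ consists of the $K_i$ of neighbors of $u$ in $S$ and the selections of vertices of $S$ adjacent to a neighbor of $u$; these determine all $\widetilde P_i^S(v)$ for $v\in N_{i-1}(u)$. $S_b$ consists of the selections of neighbors of $u$ in $S$. The paper then applies the strengthened form of Proposition~\ref{prop:q-resilient}, with a union bound over $v\in N_{i-1}(u)$. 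This ensures that after resampling $S_a$ every neighbor still has a tentative palette of size about $|P_{i-1}(v)|\beta_i$. By Lemma~\ref{fact:palette-lower}, each $v$ then stays uncolored with probability at most $\alpha_i$, independently over $v$. A Chernoff bound gives $|N_{i,c}^S(u)|\le\alpha_i(1+\delta/3)|\widetilde N_{i,c}^S(u)|+\frac{\delta}{3}T$, which is combined with the palette-side bound. This is why the paper proves the stronger version of Proposition~\ref{prop:q-resilient}, and your plan has no substitute for it.

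Separately, your Step~3 mean is off by a factor. By Algorithm~\ref{alg:select}, $\Prob[c\in\widetilde P_i(v)]=r_c(1-\pi_i)^{|N_{i-1,c}(v)|}=\beta_i$ exactly. So summing $Z_{v,c}$ over all $c\in P_{i-1}(u)$ gives $\beta_i$, not $\beta_i^2$, times $\overline n_{i-1}(u)|P_{i-1}(u)|$ times the uncoloring probability. The second $\beta_i$ appears only if the sum runs over the colors $c\in\widetilde P_i(u)$ that $u$ keeps; triangle-freeness then gives $\Prob[c\in\widetilde P_i(u)\cap\widetilde P_i(v)]=\beta_i^2$. The paper's $R_i^+(u)$ and $H_i(u)$ are written with the same unrestricted sum, and read literally the event is not rare. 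For instance, in round $1$ on a $\Delta$-regular graph with identical palettes, the left side is about $\alpha_1^2\beta_1\Delta$ against a threshold of about $\alpha_1\beta_1^2\Delta$, with $\alpha_1>\beta_1$. So any proof, yours included, must use the sum restricted to retained colors.
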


\begin{proposition}
\label{prop:m-resilient}
Assume \(F_{i-1}\) holds and \(d_{i-1}'>d_{\mathrm{stop}}\). Then for every active vertex \(u\), the bad event \(M_i^+(u)\) is resilient with respect to \(\Phi\).
\end{proposition}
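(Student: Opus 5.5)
The statement to prove is Proposition~\ref{prop:m-resilient}: resilience of the residual-degree event \(M_i^+(u)\). The plan is a symptom-event argument, as in the proof of Theorem~\ref{thm:color-balanced-partition}.

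\textbf{Step 1: the baseline expectation.} Write \(|N_i(u)|=\sum_{v\in N_{i-1}(u)}\mathbf 1[v\in U_i]\). The variables that influence this sum are \(X_i(v)\) for \(v\in N_{i-1}(u)\) and \(X_i(w)\) for \(w\in N_{i-1}(v)\), i.e.\ the two-hop neighbourhood. Following the analysis of \cite{PettieSu15}, a neighbour \(v\) stays uncolored with probability at most about \(\alpha_i\). Indeed, \(v\) is colored unless none of its roughly \(|P_{i-1}(v)|\beta_i\) kept colors is both selected and unblocked. This uses \(F_{i-1}\), namely \(D_{i-1}(v)\le t_{i-1}'\), so that most colors have \(c\)-degree near \(t_{i-1}\). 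So \(\E|N_i(u)|\le \alpha_i(1+\delta/6)d_{i-1}'\).

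\textbf{Step 2: concentration of the baseline.} Triangle-freeness gives that the neighbours \(v\) of \(u\) are pairwise non-adjacent. So \(\mathbf 1[v\in U_i]\) depends on \(X_i(v)\) and \(X_i(N_{i-1}(v))\). Each \(X_i(w)\) with \(w\ne u\) affects the indicators of its common neighbours with \(u\), and there can be many of these. I would therefore expose \(u\)'s own choice, then use a Talagrand/Kim--Vu-style bound, or the Pettie--Su approach of counting neighbours that are colored "for a reason". This gives a deviation of at most \((\delta/6)\alpha_i d_{i-1}'+\Delta/(2r(\log\Delta)^8)\) with probability \(\exp(-\Delta^{\Omega(1)})\le d^{-40}\). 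Here \(d=\poly(\Delta)\) is the dependency degree, and the bound uses \(d_{i-1}'>d_{\mathrm{stop}}\) so the mean is large.

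\textbf{Step 3: absorbing an adversarial resampled part.} Fix a part \(\Phi_j\) and any \(S\subseteq\Phi_j\). Resampling \(S\) can make a neighbour \(v\) newly uncolored in only two ways:
\begin{enumerate}
\item \(v\in\Phi_j\) itself. By Definition~\ref{def:r-partition-degree-params} there are at most \(100d_{i-1}'/r\) such \(v\).
\item \(v\notin\Phi_j\) but its coloring color \(c\) becomes blocked, or unkept, by a resampled \(w\in N_{i-1,c}(v)\cap\Phi_j\).
\end{enumerate}
For the first case, the bound \(100d_{i-1}'/r\) is deterministic and absorbed by the \(\tfrac{200}{r}d_{i-1}'\) term. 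The second case is the main obstacle, because it ranges over all \(S\) simultaneously, and a union bound over the \(2^{|\Phi_j|}\) choices of \(S\) is far too costly. I would handle it with monotonicity. The worst \(S\) is dominated by the event that the second sample of all of \(\Phi_j\cap N^2(u)\) blocks colors, so I define the symptom event "many \(v\in N(u)\) are colored in sample 1 with a color \(c\) that some \(w\in\Phi_j\cap N_c(v)\) selects in sample 2". In expectation this count is at most \(\sum_v \pi_i\cdot 200t_{i-1}/r=O(d_{i-1}'/(Kr))\), using the \(c\)-degree condition \(|N_c(v)\cap\Phi_j|\le 200t_{i-1}/r\). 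It concentrates under the second sample, via Chernoff over the independent \(X^2(w)\) with bounded influence. The keep-probability \(K_i(w)\) only affects \(w\)'s own palette, so it does not block others.

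\textbf{Step 4: assembling the bound.} The bad set \(M'\) requires either that sample 1 deviates, which has probability at most \(d^{-40}\) by Step 2, or that the symptom count exceeds its threshold, which has probability at most \(d^{-10}\) under sample 2 for every \(S\) at once. A union bound over the \(r\) parts gives \(\Prob[A']\le d^{-30}\).
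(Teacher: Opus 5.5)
The overall architecture matches the paper's proof. It has a first-sample baseline bound on $|N_i^1(u)|$, the deterministic $100d_{i-1}'/r$ term for $N_{i-1}(u)\cap\Phi_j$, and a domination of the remaining newly uncolored neighbours by an $S$-independent set defined through second-sample selections inside $\Phi_j$, with mean $O(d_{i-1}'/(Kr))$.

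The gap is the sentence ``it concentrates under the second sample, via Chernoff over the independent $X^2(w)$ with bounded influence'', and with it Step~4. Given $\mathcal V_i^1$, the indicators for different $v\in N_{i-1}(u)$ share blockers. Triangle-freeness does not stop a single $w\in\Phi_j$ from being adjacent to $\Theta(\Delta)$ neighbours of $u$. If $u\in\Phi_j$, then $u$ itself is adjacent to all of them. So one $X_i^2(w)$ can change your count by as much as $|N_{i-1}(u)\cap N_{i-1}(w)|$, which can be $\Theta(d_{i-1}')$, and a bounded-differences bound at deviation $\Theta(d_{i-1}'/r)$ gives nothing.

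This is the same obstruction you flagged in Step~2. There it can be conditioned away: expose all $X_i(x)$ with $x\notin N_{i-1}(u)$ and the sets $K_i(v)$. By triangle-freeness each $\widetilde P_i(v)$ is then fixed, and the indicators depend only on $\mathrm{Sel}_i(v)$. In Step~3, however, the blockers \emph{are} the resampled randomness, so this trick is unavailable.

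Worse, no second-sample bound that is uniform over all first samples avoiding only the baseline deviation can hold. Suppose in $\mathcal V_i^1$ more neighbours of $u$ than the available additive slack are colored with the same $c\in P_{i-1}(u)$ as their only certifying color. Then the single event that $u\in\Phi_j$ selects $c$ in the second sample, of probability $\pi_i\gg d^{-10}$, uncolors all of them at once. So the set of excluded first samples in your Step~4 is incomplete.

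The missing idea is to pass from vertex variables to per-color selection bits, and to pay for bounded influence with an extra first-sample symptom event.

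\begin{enumerate}
\item For each $x\in N_{i-1}(u)\setminus\Phi_j$ colored in sample~1, fix a single witness $\sigma(x)\in\mathrm{Sel}_i^1(x)\cap\widetilde P_i^1(x)$ via a signature map. It must be a fixed witness, not ``some color of $x$'', so that each $x$ is charged to exactly one color.
\item If $x$ becomes uncolored after resampling $S$, then some $y\in S\cap N_{i-1,\sigma(x)}(x)$ selects $\sigma(x)$ in sample~2. Group such $x$ by $\sigma(x)=c'$. Given $\mathcal V_i^1$, the group sizes are functions of disjoint families of second-sample bits (the $c'$-bits), hence independent across $c'$.
\item Add to the first-sample symptom set the events $O_i(u,c')$ that more than $(\log\Delta)^8$ vertices of $N_{i-1,c'}(u)\setminus\Phi_j$ select $c'$ in sample~1. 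The mean count is at most $2t_{i-1}\pi_i\le 1/K$, so each event has probability $\exp(-\Omega((\log\Delta)^8))$. A union bound over at most $\Delta$ colors stays below $d^{-40}$.
\item Off these events every group has size at most $(\log\Delta)^8$, and the total mean is at most $100d_{i-1}'/(Kr)$. Hoeffding over the at most $\Delta$ independent group sizes bounds the probability of exceeding $100d_{i-1}'/r+\Delta/(r(\log\Delta)^8)$ by $d^{-20}$, uniformly in $S\subseteq\Phi_j$. This is where the additive term $\Delta/(r(\log\Delta)^8)$ in $M_i^+(u)$ is used.
\end{enumerate}

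With your baseline event together with $\bigcup_{c'}O_i(u,c')$ as the first-sample symptom set, your Step~4 then goes through.
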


In Proposition~\ref{prop:q-resilient}, we prove a slightly stronger notion of resilience of \(Q_i^+(u)\) to support the proof of Proposition~\ref{prop:r-resilient}.
Given these propositions, we can prove Lemma~\ref{lem:triangle-free-instance} as below.

\begin{proof}[Proof of Lemma~\ref{lem:triangle-free-instance}]
Fix \(C>0\) such that \(\Delta\le(\log n)^C\), and let \(b\) be the constant supplied by Theorem~\ref{thm:color-balanced-partition} from Section~\ref{sec:color-balanced-partition} with \(a=C\). 
The assumptions of that theorem hold as follows.
For every active vertex \(u\), the invariant \(F_{i-1}\) gives \(|N_{i-1}(u)|\le d_{i-1}'\), and the original maximum-degree bound gives \(|N_{i-1}(u)|\le\Delta\), so
\[
|N_{i-1}(u)|\le \min\{d_{i-1}',\Delta\}.
\]
For the \(c\)-degree condition, every \(c\in P_{i-1}(u)\) satisfies \(|N_{i-1,c}(u)|\le 2t_{i-1}\).
It remains to check the lower bounds on \(\min\{d_{i-1}',\Delta\}\) and \(2t_{i-1}\) required by Theorem~\ref{thm:color-balanced-partition}. Since this iteration is executed only while \(d_{i-1}'>d_{\mathrm{stop}}\), we have
\[
\min\{d_{i-1}',\Delta\}\ge d_{\mathrm{stop}}=\Delta/(\log\Delta)^6\ge(\log\Delta)^b,
\]
for \(\Delta\ge \Delta_\eps\). 
Also, by  definition of \(t_j\), we have 
\[2t_{i-1}\ge 2T=2\Delta^{\eps_1/3} \ge(\log\Delta)^b.
\] 
Thus Theorem~\ref{thm:color-balanced-partition} applies to \(H=G_{i-1}\), \(U=V(G_{i-1})\), and the palettes \(P_{i-1}\), with max-degree parameter \(\min\{d_{i-1}',\Delta\}\) and \(c\)-degree parameter \(2t_{i-1}\). 
Therefore, Algorithm~\ref{alg:solve-lll} constructs with high probability an \(r\)-partition \(\Phi=(\Phi_1,\dots,\Phi_r)\) that satisfies Definition~\ref{def:r-partition-degree-params} for these parameters. 
Condition on this event and fix that \(\Phi\).

By Propositions~\ref{prop:q-resilient}, \ref{prop:r-resilient}, and~\ref{prop:m-resilient}, every bad event in \(\mathcal B_i\) is resilient with respect to \(\Phi\). 
Therefore Theorem~\ref{thm:davies12} applies to the round instance with the partition \(\Phi\). Because \(r=(\log\Delta)^{20}\le(\log\log n)^{O(1)}\) in the present range, the partition construction and Theorem~\ref{thm:davies12} together take \((\log\log n)^{O(1)}\) rounds with high probability. This proves the claimed guarantee.
\end{proof}

\subsection{Event-by-Event Resilience}
\label{sec:triangle-free-resilience}

\subsubsection*{Shared Setup}

Throughout this subsection, fix the round \(i\), assume \(F_{i-1}\), and let
\(\Phi=(\Phi_1,\dots,\Phi_r)\) be the \(r\)-partition supplied by the proof of
Lemma~\ref{lem:triangle-free-instance}. We will use its two consequences
separately. For every active vertex \(u\), every color \(c\in P_{i-1}(u)\),
and every part \(\Phi_j\),
\[
|N_{i-1}(u)\cap \Phi_j|
\le
100\min\{d_{i-1}',\Delta\}/r
\le
100d_{i-1}'/r,
\]
and
\[
|N_{i-1,c}(u)\cap \Phi_j|
\le
200t_{i-1}/r.
\]
The first estimate is the lightness bound used in the max-degree argument, and
the second is the color-balance bound used in the palette and \(c\)-degree
arguments.

To verify resilience, we use the mixed-sample notation from
Section~\ref{sec:preliminaries}. Let \(\mathcal V_i^1\) and \(\mathcal V_i^2\)
be independent copies of the round-\(i\) variable family \(\mathcal V_i\). For
any event \(A\) determined by the round-\(i\) variables and any set
\(S\subseteq \Phi_j\), let \(A^S\) denote the mixed-sample event obtained by taking
the variables in \(S\) from \(\mathcal V_i^2\) and all remaining variables
from \(\mathcal V_i^1\). If \(Z_i\) is any round-\(i\) object determined by
the sampled variables, such as \(\mathrm{Sel}_i(u)\), \(K_i(u)\),
\(\widetilde P_i(u)\), \(\widetilde N_{i,c}(u)\), or \(N_i(u)\), then
\(Z_i^\ell\) denotes the corresponding object computed from the sample
\(\mathcal V_i^\ell\), and \(Z_i^S\) denotes the corresponding object computed
from the mixed sample that uses \(\mathcal V_i^2\) on \(S\) and
\(\mathcal V_i^1\) outside \(S\). In the proofs below, the target events are
indexed either by a vertex \(u\) or by a pair \((u,c)\), and only vertices in
\(N_{i-1}(u)\cap \Phi_j\) or \(N_{i-1,c}(u)\cap \Phi_j\) can change the indexed
quantity under one-part resampling.

We repeatedly invoke Theorem~\ref{thm:executed-round-bounds}. Since
\(d_{i-1}'>d_{\mathrm{stop}}\), the round \(i\) is executed before the stopping
time, so that theorem gives
\[
\beta_i\ge e^{-1/K},
\qquad
p_{i-1}'\beta_i\Bigl(1-\frac{\delta}{8}\Bigr)\Bigl(1-\frac{200}{r}\Bigr)\ge p_i',
\]
\[
\alpha_i\Bigl(1+\frac{\delta}{3}\Bigr)d_{i-1}'
  + \frac{200}{r}d_{i-1}'
  + \frac{\Delta}{r(\log\Delta)^8}
\le d_i',
\qquad
p_i'\ge T(\log\Delta)^6,
\]
and \(i=O(k+\log\log\Delta)=O(\log\Delta)\). In particular,
\(t_{i-1}'=(1+2\delta)^{i-1}t_{i-1}\le 2t_{i-1}\) for \(\Delta\ge \Delta_\eps\), and every concentration estimate below is applied in a polylog-large
range.

We also introduce the following bad events.
\begin{align*}
Q_i(u)
&:=
\left\{
|\widetilde P_i(u)|
<
|P_{i-1}(u)|\beta_i\Bigl(1-\frac{\delta}{8}\Bigr)
\right\},\\
M_i(u)
&:=
\left\{|N_i(u)|
  > \alpha_i\Bigl(1+\frac{\delta}{5}\Bigr)d_{i-1}'\right\}.
\end{align*}
Thus \(Q_i^+(u)\subseteq Q_i(u)\) and \(M_i^+(u)\subseteq M_i(u)\).
The following simple fact will be useful in the next proofs.

\begin{lemma}
\label{fact:palette-lower}
If \(F_{i-1}(u)\) holds, then
\[
|P_{i-1}(u)|\ge \left(1-\frac{(1+2\delta)^{i-1}}{2}\right)p_{i-1}'.
\]
In particular, \(|P_{i-1}(u)|\ge \left(\frac12-o(1)\right)p_{i-1}\).
\end{lemma}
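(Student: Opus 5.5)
The plan is to read the bound directly off the first clause of \(F_{i-1}(u)\), namely \(D_{i-1}(u)\le t_{i-1}'\), using that the padded average \(D_{i-1}(u)\) puts weight \(1-\lambda_{i-1}(u)\) on the value \(2t_{i-1}\). If \(|P_{i-1}(u)|\ge p_{i-1}'\) there is nothing to prove, since the claimed bound is at most \(p_{i-1}'\). So we assume \(\lambda:=\lambda_{i-1}(u)=|P_{i-1}(u)|/p_{i-1}'<1\).

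Since every \(c\)-degree is nonnegative, \(\overline n_{i-1}(u)\ge 0\). Hence
\[
t_{i-1}'\ge D_{i-1}(u)=\lambda\,\overline n_{i-1}(u)+(1-\lambda)2t_{i-1}\ge (1-\lambda)2t_{i-1}.
\]
Substituting \(t_{i-1}'=(1+2\delta)^{i-1}t_{i-1}\) and dividing by \(2t_{i-1}>0\) (recall \(t_{i-1}\ge T>0\); for \(i=1\) we have \(t_0=\Delta\)) gives
\[
1-\lambda\le \frac{(1+2\delta)^{i-1}}{2},
\qquad\text{that is,}\qquad
\lambda\ge 1-\frac{(1+2\delta)^{i-1}}{2}.
\]
Multiplying by \(p_{i-1}'\) yields the main inequality. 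The case where \(|P_{i-1}(u)|=0\) is covered by the same computation: then \(\lambda=0\), so the inequality forces \((1+2\delta)^{i-1}\ge 2\), and the claimed lower bound is nonpositive.

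For the ``in particular'' part, we invoke Theorem~\ref{thm:executed-round-bounds}. Since round \(i\) is executed, \(i=O(k+\log\log\Delta)=O(\log\Delta)\). Together with \(\delta=(\log\Delta)^{-4}\), this gives \((1+2\delta)^{i-1}\le e^{2\delta i}=1+o(1)\) and \((1-\delta/7)^{i-1}=1-o(1)\). Therefore \(p_{i-1}'=(1-o(1))p_{i-1}\), and the bound becomes \(\bigl(\tfrac12-o(1)\bigr)p_{i-1}\).

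There is no real obstacle here. The only point needing care is that the \(o(1)\) estimates require \(i-1=O(\log\Delta)\). This is exactly what the executed-round assumption \(d_{i-1}'>d_{\mathrm{stop}}\), via \(\tau_{\mathrm{end}}=O(k+\log\log\Delta)\), guarantees.
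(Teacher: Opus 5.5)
Your proposal is correct and follows the paper's proof essentially verbatim. You drop the nonnegative term $\lambda_{i-1}(u)\overline n_{i-1}(u)$ from $D_{i-1}(u)\le t_{i-1}'$ to get $\lambda_{i-1}(u)\ge 1-(1+2\delta)^{i-1}/2$, then use $i=O(\log\Delta)$ from Theorem~\ref{thm:executed-round-bounds} for the asymptotic form. Your explicit split on whether $\lambda_{i-1}(u)=1$, forced by the $\min$ in the definition of $\lambda_{i-1}(u)$, is a small clarification that the paper leaves implicit.
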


\begin{proof}
From \(F_{i-1}(u)\), we have \((1-\lambda_{i-1}(u))2t_{i-1}\le D_{i-1}(u)\le t_{i-1}'=(1+2\delta)^{i-1}t_{i-1}\), so \(\lambda_{i-1}(u)\ge 1-\frac{(1+2\delta)^{i-1}}{2}\), which gives the first bound.
Also \(p_{i-1}'=(1-\delta/7)^{i-1}p_{i-1}\). Since \(i<\tau_{\mathrm{end}}\), Theorem~\ref{thm:executed-round-bounds} from Section~\ref{sec:triangle-free-numerical-proof} gives \(i=O(k+\log\log\Delta)=O(\log\Delta)\), and with \(\delta=1/\log^4\Delta\) we obtain \((1+2\delta)^{i-1}=1+o(1)\) and \((1-\delta/7)^{i-1}=1-o(1)\). Therefore, \(|P_{i-1}(u)|\ge \left(\frac12-o(1)\right)p_{i-1}\).
\end{proof}

\begin{proof}[Proof of Proposition~\ref{prop:q-resilient}]
Fix a part \(\Phi_j\), and a subset \(S\subseteq \Phi_j\). 
Define \(\widetilde Q_i(u)\) in \(\mathcal V_i^1\) as follows:
\[
\widetilde Q_i(u)
:=
Q_i(u)
\cup
\left\{
\Prob_{\mathcal V_i^2}\left[
\bigl(Q_i(u)\bigr)^{\{u\}} \middle| \mathcal{V}_i^1
\right]
> d^{-20}
\right\}.
\]
The second term is included to handle the case in the resilience definition where the adversarial set \(S\) contains the variable \(X_i(u)\). 
By the notation of Section~\ref{sec:resilient-lll}, \(\bigl(Q_i(u)\bigr)^{\{u\}}\) is the event \(Q_i(u)\) evaluated with \(X_i(u)\) taken from the second sample and all other round-\(i\) variables taken from the first sample. 
Thus \(\widetilde Q_i(u)\) rules out, in advance, the bad first samples for which merely resampling \(K_i(u)\) would make the palette event likely.
By Lemma~\ref{lem:appendix-degree-tail} from
Appendix~\ref{sec:appendix-degree-tail},
\[
\Prob[Q_i(u)]\le \Delta^{-300} \le d^{-75}.
\]
Also, since $\Prob_{\mathcal{V}_i^1}[\cdot]$ and $\Prob_{\mathcal{V}_i^2}[\cdot]$ are the same distribution, this implies that
\[
\E_{\mathcal V_i^1}\left[
\Prob_{\mathcal V_i^2}\left[\bigl(Q_i(u)\bigr)^{\{u\}} \mid \mathcal{V}_i^1  \right]
\right]
=
\Prob[Q_i(u)]
\le
d^{-75},
\]
so by Markov's inequality
\[
\Prob_{\mathcal{V}_i^1}\left[
\Prob_{\mathcal V_i^2}\left[\bigl(Q_i(u)\bigr)^{\{u\}} \mid \mathcal{V}_i^1  \right]>d^{-20}
\right]
\le
d^{-55}.
\]
Hence \(\Prob_{\mathcal{V}_i^1}[\widetilde Q_i(u)]\le d^{-35}\).

Assume that \(\widetilde Q_i(u)\) does not occur for the variables in \(\mathcal V_i^1\). In particular, \(Q_i(u)\) does not occur.
If \(u\notin S\), then the set \(K_i(u)\) is determined by \(\mathcal V_i^1\), 
and the only additional palette loss due to resampling comes from vertices of \(S\cap N_{i-1}(u)\). For each color \(c\in \widetilde P_i^{1}(u)\), the event that \(c\) survives the resampling inside \(S\) depends only on the \(c\)-selection bits of the vertices in \(S\cap N_{i-1,c}(u)\). These events are independent across colors. Moreover, by the partition bound \(|S\cap N_{i-1,c}(u)|\le |\Phi_j\cap N_{i-1,c}(u)|\le 200t_{i-1}/r\) and the definition \(\pi_i=1/(2Kt_{i-1}+1)\),
\begin{align*}
\Prob_{\mathcal V_i^2}\left[
  c\notin \mathrm{Sel}_i^{2}(S\cap N_{i-1,c}(u))
  \mid \mathcal V_i^1
\right]
&=
(1-\pi_i)^{|S\cap N_{i-1,c}(u)|} \\
&\ge
\left(1-\frac{1}{2Kt_{i-1}+1}\right)^{200t_{i-1}/r} \\
&\ge
1-\frac{200t_{i-1}}{r(2Kt_{i-1}+1)}.
\end{align*}
Then \(|\widetilde P_i^S(u)|\) is a sum of independent indicators indexed by the colors in \(\widetilde P_i^1(u)\), 
and the calculation above shows that each indicator has conditional mean at least \(1-\frac{200t_{i-1}}{r(2Kt_{i-1}+1)}\).
Hence
\[
\mu:=\E_{\mathcal V_i^2}\left[|\widetilde P_i^S(u)|\mid \mathcal V_i^1,\ \neg\widetilde Q_i(u)\right]
\ge
|P_{i-1}(u)|\beta_i\Bigl(1-\frac{\delta}{8}\Bigr)\left(1-\frac{200t_{i-1}}{r(2Kt_{i-1}+1)}\right).
\]
By Lemma~\ref{fact:palette-lower}, \(|P_{i-1}(u)|\ge \left(\frac12-o(1)\right)p_{i-1}\). Also \(\frac{200t_{i-1}}{2Kt_{i-1}+1}\le \frac{100}{K}\le 25\), since \(K=4/\eps\ge 4\). Hence
\[
\mu
\ge
\left(\frac12-o(1)\right)e^{-1/K}\left(1-\frac{\delta}{8}\right)\left(1-\frac{25}{r}\right)p_{i-1}
=
\Omega(p_{i-1}).
\]
Hence \(\mu=\Omega(p_i)\) as well, since \(p_i=\beta_i p_{i-1}\le p_{i-1}\).
The bad event \((Q_i^+(u))^S\) is precisely
\[
|\widetilde P_i^S(u)|<
|P_{i-1}(u)|\beta_i\Bigl(1-\frac{\delta}{8}\Bigr)\Bigl(1-\frac{200}{r}\Bigr)
\le
\left(1-\frac{100}{r}\right)\mu.
\]
Since \(|\widetilde P_i^S(u)|\) is a sum of independent Bernoulli variables with mean \(\mu\), Chernoff's bound gives
\[
\Prob_{\mathcal V_i^2}\left[
|\widetilde P_i^S(u)|<\left(1-\frac{100}{r}\right)\mu
\ \middle|\ \mathcal V_i^1,\ \neg \widetilde Q_i(u)
\right]
\le
\exp\!\left(-\frac{10000\mu}{2r^2}\right)
=
\exp\!\left(-\frac{5000\,\mu}{r^2}\right).
\]
Finally, Theorem~\ref{thm:executed-round-bounds} gives \(p_{i-1}'\ge T(\log\Delta)^6\), and \(p_{i-1}'=(1-\delta/7)^{i-1}p_{i-1}\) with \(i=O(\log\Delta)\) implies \(p_{i-1}=(1+o(1))p_{i-1}'\). 
Since \(\mu=\Omega(p_{i-1})\), we obtain \(\exp\!\left(-\frac{5000\,\mu}{r^2}\right)\le d^{-20}\).

If \(u\in S\), then \(K_i(u)\) is also resampled. 
This is the only additional difficulty beyond the \(u\notin S\) case: resampling \(X_i(u)\) can change \(K_i(u)\) itself. 
On the complement of the second term in \(\widetilde Q_i(u)\), 
the palette \(\widetilde P_i^{\{u\}}(u)\) of $u$ after resampling $K_i(u)$ has size at least \(|P_{i-1}(u)|\beta_i(1-\delta/8)\) except with probability \(d^{-20}\). Condition on this good event. After that conditioning, the only remaining loss again comes from the vertices in \(S\cap N_{i-1}(u)\), so the same Chernoff calculation gives the stronger bound \(d^{-20}\) for the extra loss. Therefore
\[
\Prob_{\mathcal V_i^2}\left[(Q_i^+(u))^S \middle| \mathcal V_i^1,\ \neg \widetilde Q_i(u)\right]\le 2d^{-20}.
\]
For each \(j\), define
\[
E_j:=\bigcup_{S\subseteq \Phi_j}\left\{\Prob_{\mathcal V_i^2}\left[(Q_i^+(u))^S \middle| \mathcal V_i^1\right]>2d^{-20}\right\}.
\]
Crucially, no union bound over the exponentially many subsets \(S\subseteq\Phi_j\) is needed. 
The preceding argument was uniform in \(S\): after \(\widetilde Q_i(u)\) is excluded, for every possible $S$, the bad event under resampling has conditional probability at most \(2d^{-20}\).
By the bound above, if \(\widetilde Q_i(u)\) does not occur then \(E_j\) does not occur either, so \(E_j\subseteq \widetilde Q_i(u)\). The same inclusion holds for every \(j\) because the argument is uniform in \(j\). Therefore
\(\bigcup_{j\le r}E_j\subseteq \widetilde Q_i(u)\), so \(\Prob_{\mathcal V_i^1}\left[\bigcup_{j\le r}E_j\right]\le \Prob_{\mathcal V_i^1}[\widetilde Q_i(u)]\le d^{-35}\).
This proves the stronger statement in the proposition, and hence \(Q_i^+(u)\) is resilient.
\end{proof}

\begin{proof}
[Proof of Proposition~\ref{prop:r-resilient}]
Fix a vertex \(u\), and a part \(\Phi_j\) from $\Phi$.
Let
\[
H_i^+(u)
:=
\left\{\sum_{c\in P_{i-1}(u)} |\widetilde N_{i,c}(u)|
  > \beta_i^2 |P_{i-1}(u)|\Bigl(\overline n_{i-1}(u)+\frac{\delta}{4}t_{i-1}\Bigr)
    + \frac{300}{r}t_{i-1}|P_{i-1}(u)|\right\}.
\]
We will prove the following two claims.
\begin{enumerate}
  \item With probability at least \(1-2d^{-40}\) over the choice of the first sample \(\mathcal V_i^1\), 
  \begin{equation}\label{eq:H plus 0}
  \Prob_{\mathcal V_i^2}\left[(H_i^+(u))^S \mid \mathcal V_i^1\right]=0
  \qquad\text{for every }S\subseteq \Phi_j.
  \end{equation}
  \item For each $c\in P_{i-1}(u)$, with probability at least \(1-2d^{-34}\) over the choice of the first sample \(\mathcal V_i^1\), for every \(S\subseteq \Phi_j\) we have
  \[
  \Prob_{\mathcal V_i^2}\left[
  |N_{i,c}^S(u)| >  |\widetilde N_{i,c}^S(u)|\alpha_i\bigl(1+\frac{\delta}{3}\bigr)+\frac{\delta}{3}T \mid \mathcal V_i^1
  \right] < 3d^{-18},
  \]
\end{enumerate}
where we write \(\widetilde N_{i,c}^S(u)\) and $N_{i,c}^S(u)$ for the mixed-sample version of \(\widetilde N_{i,c}(u)\) and $N_{i,c}(u)$ respectively.
We will prove the two claims below. 
For each fixed \(j\), except on a first-sample event of probability at most \(2d^{-40}+2|P_{i-1}(u)|d^{-34}\), Claim~1 holds and Claim~2 holds simultaneously for all \(c\in P_{i-1}(u)\) by a union bound. 
Then, for every \(S\subseteq \Phi_j\), a union bound over colors gives
\[
\Prob_{\mathcal V_i^2}\left[(R_i^+(u))^S \mid \mathcal V_i^1\right]
\le 3|P_{i-1}(u)|d^{-18}\le d^{-10},
\]
since on the complement of the bad events from Claim~2 and Claim~1,  \((R_i^+(u))^S\) does not occur. 
Taking a final union bound over \(j=1,\ldots,r\),  
we obtain the resilience of \(R_i^+(u)\) with respect to \(\Phi\).

To prove the two claims, we organize the random variables $\{X_i(v)\}$ in the set $S$ into two parts. 
Let
\[
S_a:= \bigcup_{v\in N_{i-1}(u)\cap S} K_i(v) \cup \left(\bigcup_{v\in N_{i-1}(u)}\bigcup_{w\in N_{i-1}(v)\cap S} \mathrm{Sel}_i(w)\right),
\qquad
S_b:= \bigcup_{v\in N_{i-1}(u)\cap S} \mathrm{Sel}_i(v).
\]
Note that these are the only variables that can affect the event $R_i^+(u)$, and that $S_a$ and $S_b$ are disjoint since $G_{i-1}$ is triangle-free.
Moreover, for the resilience of $H_i^+(u)$, we only need to consider the variables in $S_a$. 
In particular,
$\widetilde N_{i,c}^S(u) = \widetilde N_{i,c}^{S_a}(u) $ and thus
$(H^+_i(u))^{S}=(H^+_i(u))^{S_a}$.

For the first claim, we first define an auxiliary event 
\[
H_i(u)
:=
\left\{\sum_{c\in P_{i-1}(u)} |\widetilde N_{i,c}(u)|
  > \beta_i^2 |P_{i-1}(u)|\Bigl(\overline n_{i-1}(u)+\frac{\delta}{4}t_{i-1}\Bigr)\right\},
\]
and note that \(H_i^+(u)\subseteq H_i(u)\).
By Lemma~\ref{lem:appendix-degree-tail} from Appendix~\ref{sec:appendix-degree-tail},
$
\Prob_{\mathcal{V}_i^1}[H_i(u)]\le \Delta^{-200}\le d^{-40}
$.
We will control how much the resampling inside \(\Phi_j\) can increase
\[
\sum_{c\in P_{i-1}(u)} |\widetilde N_{i,c}(u)|.
\]
Fix \(c\in P_{i-1}(u)\) and a vertex \(x\in N_{i-1,c}(u)\). 
If \(x\) belongs to \(\widetilde N_{i,c}^S(u)\) but not to \(\widetilde N_{i,c}(u)\), 
then this change must be caused by the resampling inside \(\Phi_j\). 
There are two possibilities. 
First, \(x\) itself may lie in \(\Phi_j\). 
Second, in the first sample some vertex of \(\Phi_j\) selected the color \(c\) and thereby prevented \(x\) from being counted, whereas after resampling no vertex of \(\Phi_j\) selects \(c\). 

The vertices of the first type contribute at most
\[
\sum_{c\in P_{i-1}(u)} |N_{i-1,c}(u)\cap \Phi_j|
\le
\frac{200}{r}t_{i-1}|P_{i-1}(u)|.
\]
To capture the second type, define
\[
A_{i,j}(u,c)
:=
\bigl\{
x\in N_{i-1,c}(u)\setminus \Phi_j:
 c\in \mathrm{Sel}_i^{1}(y)\text{ for some }y\in \Phi_j\cap N_{i-1,c}(x)
\bigr\}.
\]
If \(x\in (\widetilde N_{i,c}^S(u)\setminus \widetilde N_{i,c}(u))\setminus \Phi_j\), then \(x\) must arise from the second type above, and hence \(x\in A_{i,j}(u,c)\). Therefore
\[
\bigl(\widetilde N_{i,c}^S(u)\setminus \widetilde N_{i,c}(u)\bigr)\setminus \Phi_j
\subseteq A_{i,j}(u,c)
\]
for every \(S\subseteq \Phi_j\). 
For a fixed vertex \(x\in N_{i-1,c}(u)\setminus \Phi_j\), 
we have \(|\Phi_j\cap N_{i-1,c}(x)|\le 200t_{i-1}/r\), so
\[
\Prob_{\mathcal V_i^1}[x\in A_{i,j}(u,c)]
\le \frac{200\pi_i t_{i-1}}{r}
= \frac{200t_{i-1}}{r(2Kt_{i-1}+1)}
\le \frac{100}{Kr}.
\]
Summing over \(x\) and then over colors yields
\begin{align*}
\mu
:=\E\left[\sum_{c\in P_{i-1}(u)} |A_{i,j}(u,c)|\right]
&\le \frac{100}{Kr}\sum_{c\in P_{i-1}(u)} |N_{i-1,c}(u)| \\
&\le \frac{200}{Kr}\,t_{i-1}|P_{i-1}(u)|\\
&\le \frac{50}{r}\,t_{i-1}|P_{i-1}(u)|,
\qquad\text{since }K=4/\eps\ge 4.
\end{align*}
Let
\[
B_{i,j}(u):=
\left\{\sum_{c\in P_{i-1}(u)} |A_{i,j}(u,c)| > \frac{100}{r}t_{i-1}|P_{i-1}(u)|\right\}.
\]
The random variables \(|A_{i,j}(u,c)|\) are independent across colors, and each is bounded by \(|N_{i-1,c}(u)|\le 2t_{i-1}\), so Hoeffding's inequality gives
\begin{align*}
\Prob[B_{i,j}(u)]
&\le \exp\!\left(
-\frac{2\left(\frac{100}{r}t_{i-1}|P_{i-1}(u)|-\mu\right)^2}{4t_{i-1}^2|P_{i-1}(u)|}
\right) \\
&\le \exp\!\left(
-\frac{2\left(\frac{50}{r}t_{i-1}|P_{i-1}(u)|\right)^2}{4t_{i-1}^2|P_{i-1}(u)|}
\right)
=
\exp\!\left(-\frac{1250\,|P_{i-1}(u)|}{r^2}\right).
\end{align*}
By Lemma~\ref{fact:palette-lower}, \(|P_{i-1}(u)|/r^2\gg \log d\), so this is at most \(d^{-40}\).
Then \(\Prob_{\mathcal{V}_i^1}[B_{i,j}(u)]\le d^{-40}\), and on the complement of \(B_{i,j}(u)\) we have
\[
\sum_{c\in P_{i-1}(u)} |\widetilde N_{i,c}^S(u)|
\le
\sum_{c\in P_{i-1}(u)} |\widetilde N_{i,c}(u)|
\,+\,
\frac{300}{r}t_{i-1}|P_{i-1}(u)|
\]
for every \(S\subseteq \Phi_j\). Therefore, 
if we fix a realization of \(\mathcal V_i^1\) for which neither \(H_i(u)\) nor \(B_{i,j}(u)\) occurs, then the defining inequality of \(H_i^+(u)\) cannot hold in the mixed sample. Thus, for such $\mathcal V_i^1$,
\[
\Prob_{\mathcal V_i^2}\left[(H_i^+(u))^S \mid \mathcal{V}_i^1 \right]=0
\qquad\text{for every }S\subseteq \Phi_j.
\]

Next we will work on the second claim. 
Note that the events $\{Q^+_i(v)\}_{v\in N_{i-1}(u)}$ depend only on 
\[V^{1,a}_i:= \bigcup_{v\in N_{i-1}(u)}  \left(K_i^1(v) \cup\bigcup_{w\in N_{i-1}(v)} \mathrm{Sel}^1_i(w)\right),\] 
so $\bigl(\bigcup_{v\in N_{i-1}(u)} Q^+_i(v)\bigr)^S = \bigl(\bigcup_{v\in N_{i-1}(u)} Q^+_i(v)\bigr)^{S_a}$. 
By Proposition~\ref{prop:q-resilient} and a union bound over $v\in N_{i-1}(u)$, 
with probability at least \(1-d^{-34}\) over the choice of the first sample \(\mathcal V_i^1\), we have 
\begin{equation}\label{eq:Q plus bound}
\Prob_{\mathcal{V}_i^2}\left[\Bigl(\bigcup_{v\in N_{i-1}(u)} Q^+_i(v)\Bigr)^{S_a} \mid  \mathcal{V}_i^{1,a}\right]\le d^{-18}   \qquad\text{for every }S\subseteq \Phi_j.
\end{equation}

Fix a realization of $\mathcal{V}_i^{1,a}$ such that \eqref{eq:Q plus bound} holds.
For a given $S\subseteq \Phi_j$, assume that the assignment in $V_2^i$ on $S_a$ is such that $\bigl(\bigcup_{v\in N_{i-1}(u)} Q^+_i(v)\bigr)^{S_a}$ does not occur. 
We refer to this event as $\mathcal{E}(S_a)$.
Then under $\mathcal{E}(S_a)$, for each $v\in N_{i-1}(u)$, we have $|\widetilde P_i^{S_a}(v)|\ge  |P_{i-1}(v)|\beta_i(1-\delta/8)$. 
Since
\[
  |N_{i,c}^S(u)| \le |N_{i,c}^S(u) \cap S| + |N_{i,c}^S(u) \setminus S|,
\]
it suffices to derive upper bounds on both $|N_{i-1,c}(u) \cap S|$ and $|N_{i,c}^S(u) \setminus S|$.
Observe that a node $v\in\widetilde{N}_{i,c}^{S_a}(u)$ belongs to $N_{i,c}^S(u)$ if and only if $\mathrm{Sel}^{S_b}_i(v) \cap \widetilde{P}_i^{S_a}(v) = \emptyset$.
If $v\in S$, then $v\in N_{i,c}^S(u)$ if and only if $\mathrm{Sel}^{2}_i(v) \cap \widetilde{P}_i^{S_a}(v) = \emptyset$;
otherwise $v\in N_{i,c}^S(u)$ if and only if $\mathrm{Sel}^{1}_i(v) \cap \widetilde{P}_i^{S_a}(v) = \emptyset$.
The random variables $\bigcup_{v\in N_{i-1}(u)} \bigl( \mathrm{Sel}^1_i(v) \bigr)=:V_i^{1,b}$ are independent of $V_i^{1,a}$ and the event $\mathcal{E}(S_a)$.
Under $\mathcal{E}(S_a)$, by Lemma~\ref{fact:palette-lower}, the event $\mathrm{Sel}^{1}_i(v) \cap \widetilde{P}_i^{S_a}(v) = \emptyset$ happens with probability at most 
\begin{equation}
  \label{eq: ub by alpha_i}
  (1-\pi_i)^{|P_{i-1}(v)|\beta_i(1-\delta/8)}
  \le
  (1-\pi_i)^{\left(1-\frac{(1+2\delta)^{i-1}}{2}\right)p_{i-1}'\beta_i(1-\delta/8)}
  \le
  (1-\pi_i)^{\left(1-\frac{(1+2\delta)^{i-1}}{2}\right)p_i'}
  = \alpha_i.
\end{equation}
Therefore \(|N_{i,c}^S(u) \setminus S|\) is a sum of independent Bernoulli variables with mean \(\mu\le \alpha_i |\widetilde N_{i,c}^{S_a}(u) \setminus S|\). We claim
\[
\Prob_{\mathcal{V}_i^{1,b}}\left[|N_{i,c}^S(u) \setminus S| > \alpha_i \bigl|\widetilde N_{i,c}^{S_a}(u) \setminus S\bigr|\bigl(1+\frac{\delta}{3}\bigr)+\frac{\delta}{6}T \mid \mathcal{V}_i^{1,a}, \mathcal{E}(S_a)\right] \le d^{-34}.
\]
Indeed, if \(\mu\ge T\), then
\[
\alpha_i \bigl|\widetilde N_{i,c}^{S_a}(u) \setminus S\bigr| \bigl(1+\frac{\delta}{3}\bigr)+\frac{\delta}{6}T \ge \mu\bigl(1+\frac{\delta}{3}\bigr),
\]
so the Chernoff bound gives
\[
\Prob\!\left[|N_{i,c}^S(u) \setminus S| > \alpha_i |\widetilde N_{i,c}^{S_a}(u) \setminus S|\bigl(1+\frac{\delta}{3}\bigr)+\frac{\delta}{6}T\right]
\le \exp\!\left(-\frac{\delta^2\mu}{27}\right)\le \exp\!\left(-\frac{\delta^2T}{27}\right).
\]
If \(\mu<T\), then
\[
\alpha_i |\widetilde N_{i,c}^{S_a}(u) \setminus S|\bigl(1+\frac{\delta}{3}\bigr)+\frac{\delta}{6}T \ge \mu+\frac{\delta}{6}T.
\]
Applying Chernoff bound of the form \(\Prob[X>\mu+t]\le \exp\!\left(-\frac{t^2}{2(\mu+t/3)}\right)\) with \(X=|N_{i,c}^S(u) \setminus S|\) and \(t=\delta T/6\), and using \(\mu<T\) and \(\delta<1\), we get
\[
\Prob\left[|N_{i,c}^S(u) \setminus S| > \alpha_i |\widetilde N_{i,c}^{S_a}(u) \setminus S|\bigl(1+\frac{\delta}{3}\bigr)+\frac{\delta}{3}T\right]
\le
\exp\left(-\frac{(\delta T/6)^2}{2(\mu+\delta T/18)}\right)
\le
\exp\left(-\frac{\delta^2T}{74}\right).
\]
Since \(\delta=\log^{-4}\Delta\), \(T=\Delta^{\eps_1/3}\), and \(d\le \Delta^4\), we have \(\delta^2T\gg \log d\), so for \(\Delta\ge \Delta_\eps\),
$
\exp\left(-\frac{\delta^2T}{74}\right)\le d^{-34}
$, 
and the claim follows.

Moreover, the variables in $\bigcup_{v\in N_{i-1}(u) \cap S} \bigl( \mathrm{Sel}^2_i(v) \bigr)$ 
are also independent of $\mathcal{V}_i^{1,a}$ and the event $\mathcal{E}(S_a)$, and by \eqref{eq: ub by alpha_i}, for each $v\in N_{i-1}(u) \cap S$, the event $\mathrm{Sel}^{2}_i(v) \cap \widetilde{P}_i^{S_a}(v) = \emptyset$ happens with probability at most $\alpha_i$.
Through a similar argument, we can also show that
\[
\Prob_{\mathcal{V}_i^{2}}\left[|N_{i,c}^S(u) \cap S| > 
  \alpha_i \bigl|\widetilde N_{i,c}^{S_a}(u) \cap S\bigr|\bigl(1+\frac{\delta}{3}\bigr)+\frac{\delta}{6}T \mid \mathcal{V}_i^{1,a}, \mathcal{E}(S_a)\right] \le d^{-18}.
\]

Therefore, by a union bound, with probability at least \(1-2d^{-34}\) over the choice of the first sample \(\mathcal V_i^1 = V_i^{1,a} \cup V_i^{1,b}\), 
for every \(S\subseteq \Phi_j\) we have
\[
\Prob_{\mathcal V_i^2}\left[|N_{i,c}^S(u)| >  \alpha_i |\widetilde N_{i,c}^{S}(u)|\bigl(1+\frac{\delta}{3}\bigr)+\frac{\delta}{3}T \mid \mathcal V_i^1\right] < 3d^{-18},
\]
which completes the proof of the second claim and hence of the proposition.
\end{proof}

Before proving Proposition~\ref{prop:m-resilient}, define a signature map \(\Gamma\) that assigns to
each nonempty color set \(L\) a witness color \(\Gamma(L)\in L\). In the
proof below, whenever a vertex \(x\) comes with a nonempty set
\(L_x\) of colors still available to it, we use \(\Gamma(L_x)\) as the witness
color of \(x\).

\begin{proof}[Proof of Proposition~\ref{prop:m-resilient}]
Fix a vertex \(u\), a part \(\Phi_j\), and a subset \(S\subseteq \Phi_j\). 
Recall that we let \(N_i^\ell(u)\) be the residual neighborhood of \(u\) in the sample \(\mathcal V_i^\ell\), and let \(N_i^S(u)\) be the residual neighborhood of \(u\) in the mixed sample determined by \(S\). 
We consider the neighbors outside $\Phi_j$ that are newly added to the residual neighborhood by  resampling, namely
\(W_{i,j}^S(u):=N_i^S(u)\setminus \bigl(N_i^1(u)\cup \Phi_j\bigr)\).
Then every vertex of \(N_i^S(u)\) is either already in \(N_i^1(u)\), or lies in \(\Phi_j\cap N_{i-1}(u)\), or belongs to \(W_{i,j}^S(u)\). Therefore
\[
|N_i^S(u)|
\le
|N_i^1(u)|
\,+\,
|N_{i-1}(u)\cap \Phi_j|
\,+\,
|W_{i,j}^S(u)|.
\]

By Lemma~\ref{lem:appendix-degree-tail} from Appendix~\ref{sec:appendix-degree-tail},
\(\Prob_{\mathcal V_i^1}[M_i(u)]\le \Delta^{-200}\le d^{-40}\). 
Here \(M_i(u)\) is precisely the 
event \(\{|N_i^1(u)|>\alpha_i(1+\delta/5)d_{i-1}'\}\), so on its complement
the first term above satisfies
\(|N_i^1(u)|\le \alpha_i(1+\delta/5)d_{i-1}'\). Also, the partition bound gives
\(|N_{i-1}(u)\cap \Phi_j|\le 100d_{i-1}'/r\).

We next bound the size of \(W_{i,j}^S(u)\). 
For every vertex \(x\in N_{i-1}(u)\setminus \Phi_j\), define its first-sample certifying color
set \(C_x:=\mathrm{Sel}_i^1(x)\cap \widetilde P_i^1(x)\) and, whenever
\(C_x\neq\varnothing\), let \(\sigma(x)=\Gamma(C_x)\). If
\(x\in W_{i,j}^S(u)\), then \(x\notin N_i^1(u)\), so \(C_x\neq\varnothing\).
Moreover, \(x\) is uncolored in the mixed sample, so every color in \(C_x\) is
 selected by some vertex of
\(S\cap N_{i-1,\sigma(x)}(x)\).

For each color \(c'\), define
\begin{align*}
W_{i,j}^S(u,c')
:=
\bigl\{
x\in N_{i-1}(u)\setminus \Phi_j :\,&
C_x\neq\varnothing,\ \sigma(x)=c', \\
&c'\in \mathrm{Sel}_i^2(y)\text{ for some }
y\in S\cap N_{i-1,c'}(x)
\bigr\}.
\end{align*}
Then
\[
W_{i,j}^S(u)\subseteq \bigcup_{c'} W_{i,j}^S(u,c'),
\qquad\text{hence}\qquad
|W_{i,j}^S(u)|
\le
\sum_{c'} |W_{i,j}^S(u,c')|.
\]
For fixed \(c'\), the random variable \(|W_{i,j}^S(u,c')|\) depends only on the
\(c'\)-selection bits of the vertices in \(S\) from the second sample. Hence,
once the first sample \(\mathcal V_i^1\) has been fixed, the family
\(\{|W_{i,j}^S(u,c')|:c'\}\) is independent under
\(\Prob_{\mathcal V_i^2}[\cdot\mid \mathcal V_i^1]\).

To control the size of the set \(W_{i,j}^S(u,c')\),
define the following event in $\mathcal V_i^1$,
\[
O_i(u,c')
:=
\left\{
\bigl|\{x\in N_{i-1,c'}(u)\setminus \Phi_j:c'\in \mathrm{Sel}_i^1(x)\}\bigr|
>
(\log\Delta)^8
\right\}.
\]
Since \(|N_{i-1,c'}(u)|\le 2t_{i-1}\), the random variable inside
\(O_i(u,c')\) is binomial with mean at most \(2t_{i-1}\pi_i\le 1/K\). Thus a
standard binomial tail bound gives
\(\Prob_{\mathcal V_i^1}[O_i(u,c')]\le \exp(-\Omega((\log\Delta)^8))\).
There are at most \(\Delta\) relevant colors, so
\(\Prob_{\mathcal V_i^1}\left[\bigcup_{c'} O_i(u,c')\right]\le d^{-40}\) for \(\Delta\ge \Delta_\eps\). Moreover, if \(x\in W_{i,j}^S(u,c')\), then
\(\sigma(x)=c'\in C_x\subseteq \mathrm{Sel}_i^1(x)\) and \(x\in N_{i-1,c'}(u)\setminus
\Phi_j\). Therefore
\[
W_{i,j}^S(u,c')\subseteq
\{x\in N_{i-1,c'}(u)\setminus \Phi_j:\ c'\in \mathrm{Sel}_i^1(x)\},
\]
so on the complement of these events,
each set \(W_{i,j}^S(u,c')\) satisfies \(|W_{i,j}^S(u,c')|\le (\log\Delta)^8\).

Now fix a realization of the first sample \(\mathcal V_i^1\) on which neither
\(M_i(u)\) nor any event \(O_i(u,c')\) occurs. Under the conditional law
\(\Prob_{\mathcal V_i^2}[\cdot\mid \mathcal V_i^1]\), for a fixed vertex
\(x\in N_{i-1}(u)\setminus \Phi_j\), if \(\sigma(x)=c'\), then
\[
\Prob_{\mathcal V_i^2}
\left[x\in W_{i,j}^S(u,c')\mid \mathcal V_i^1\right]
\le
\pi_i\,|S\cap N_{i-1,c'}(x)|
\le
\frac{1}{2Kt_{i-1}+1}\cdot \frac{200t_{i-1}}{r}
\le
\frac{100}{Kr}.
\]
Moreover, since \(\sigma(x)\) is uniquely defined whenever \(C_x\neq\varnothing\),
the sets \(W_{i,j}^S(u,c')\) are pairwise disjoint in \(c'\). Hence
\[
\sum_{c'} |W_{i,j}^S(u,c')|
=
\sum_{\substack{x\in N_{i-1}(u)\setminus \Phi_j\\ C_x\neq\varnothing}}
\sum_{c'} \mathbf 1_{\{x\in W_{i,j}^S(u,c')\}},
\]
and for each \(x\) the inner sum has at most one nonzero term. Summing the
conditional probabilities over all \(x\in N_{i-1}(u)\setminus \Phi_j\) gives
\begin{align*}
\mu
:=
\E_{\mathcal V_i^2}\left[\sum_{c'} |W_{i,j}^S(u,c')| \,\middle|\, \mathcal V_i^1\right] 
\le \frac{100}{Kr}\,|N_{i-1}(u)|
\le \frac{100}{Kr}d_{i-1}'
\le \frac{25}{r}d_{i-1}'.
\end{align*}
Under the conditional law \(\Prob_{\mathcal V_i^2}[\cdot\mid \mathcal V_i^1]\),
the random variables \(\{|W_{i,j}^S(u,c')|\}_{c'}\) are independent, each lies
in \([0,(\log\Delta)^8]\) because \(\mathcal V_i^1\) avoids every event
\(O_i(u,c')\), and there are at most \(\Delta\) relevant colors \(c'\).
Applying Hoeffding's inequality to their sum therefore gives
\begin{align*}
\Prob_{\mathcal V_i^2}\left[\sum_{c'} |W_{i,j}^S(u,c')| > \frac{100}{r}d_{i-1}'+\frac{\Delta}{r(\log\Delta)^8} \middle|\, \mathcal V_i^1\right]
&\le
\exp\!\left(
-\frac{2\left(\frac{75}{r}d_{i-1}'+\frac{\Delta}{r(\log\Delta)^8}\right)^2}
{\Delta(\log\Delta)^{16}}
\right) \\
&\le d^{-20},
\end{align*}
because \(d_{i-1}'>d_{\mathrm{stop}}=\Delta/(\log\Delta)^6\) and
\(r=(\log\Delta)^{20}\).

Therefore, for this fixed \(\mathcal V_i^1\) and every
\(S\subseteq \Phi_j\),
\begin{align*}
\Prob_{\mathcal V_i^2}\left[(M_i^+(u))^S \,\middle|\, \mathcal V_i^1\right] 
&\le
\Prob_{\mathcal V_i^2}\left[
|N_i^S(u)|
>
\alpha_i\Bigl(1+\frac{\delta}{5}\Bigr)d_{i-1}'
+
\frac{200}{r}d_{i-1}'
+
\frac{\Delta}{r(\log\Delta)^8}
\middle|\, \mathcal V_i^1
\right]\\
&\le d^{-20}<d^{-10}.
\end{align*}
It follows that, for each fixed part \(\Phi_j\),
\[
\bigcup_{S\subseteq \Phi_j}
\left\{
\Prob_{\mathcal V_i^2}\left[(M_i^+(u))^S \,\middle|\, \mathcal V_i^1\right]
> d^{-10}
\right\}
\subseteq
M_i(u)\cup \bigcup_{c'} O_i(u,c'),
\]
whose probability is at most \(2d^{-40}\). Since
\(r=(\log\Delta)^{20}=d^{o(1)}\), a union bound over \(j\le r\) still leaves
total first-sample probability at most \(d^{-30}\), proving the resilience
condition for \(M_i^+(u)\).
\end{proof}

\subsection{Proof of Lemma~\ref{lem:triangle-free-round}}
\label{sec:triangle-free-round-proof}

\begin{lemma}
\label{lem:filtered-average}
With the parameter schedule fixed in Section~\ref{sec:triangle-free-parameters}, assume \(F_{i-1}\) holds and \(d_{i-1}'>d_{\mathrm{stop}}\). If \(Q_i^+(u)\) and \(R_i^+(u)\) fail to occur, then
\[
\widetilde n_i(u)\le \alpha_i\beta_i \overline n_{i-1}(u)+2\delta t_i.
\]
\end{lemma}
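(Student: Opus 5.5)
The plan is to derive the bound directly from the two non-occurring events by dividing the numerator bound supplied by $\neg R_i^+(u)$ by the denominator bound supplied by $\neg Q_i^+(u)$. The main term should come out as $\alpha_i\beta_i\overline n_{i-1}(u)$, and all the slack should be absorbed into $2\delta t_i$.

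\emph{Reducing to a sum over $P_{i-1}(u)$.} Since $\widetilde P_i(u)=K_i(u)\setminus \mathrm{Sel}_i(N_{i-1}(u))$ and $K_i(u)\subseteq P_{i-1}(u)$, we have $\widetilde P_i(u)\subseteq P_{i-1}(u)$. All terms $|N_{i,c}(u)|$ are nonnegative, so
\[
\sum_{c\in \widetilde P_i(u)}|N_{i,c}(u)|\le \sum_{c\in P_{i-1}(u)}|N_{i,c}(u)| .
\]
This lets us use $\neg R_i^+(u)$ for the numerator. For the denominator, $\neg Q_i^+(u)$ gives
\[
|\widetilde P_i(u)|\ge |P_{i-1}(u)|\beta_i(1-\delta/8)(1-200/r)>0,
\]
so the average is well defined; if $\widetilde P_i(u)$ were empty the claim would be trivial anyway. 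Dividing, we get
\[
\widetilde n_i(u)\le \frac{\alpha_i(1+\delta/3)\bigl[\beta_i^2(\overline n_{i-1}(u)+\tfrac{\delta}{4}t_{i-1})+\tfrac{300}{r}t_{i-1}\bigr]+\tfrac{\delta}{3}T}{\beta_i(1-\delta/8)(1-200/r)}.
\]

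\emph{Comparison facts.} Next we need three comparisons. First, $\overline n_{i-1}(u)\le t_{i-1}'$. Indeed, $D_{i-1}(u)$ is a convex combination of $\overline n_{i-1}(u)$ and $2t_{i-1}$, so $D_{i-1}(u)\ge \min\{\overline n_{i-1}(u),2t_{i-1}\}$. By $F_{i-1}(u)$ we also have $D_{i-1}(u)\le t_{i-1}'=(1+o(1))t_{i-1}<2t_{i-1}$, and together these force $\overline n_{i-1}(u)\le t_{i-1}'\le(1+o(1))t_{i-1}$. Second, by the definition $t_i=\max\{\alpha_i\beta_i t_{i-1},T\}$, both $\alpha_i\beta_i t_{i-1}\le t_i$ and $T\le t_i$. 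Third, Theorem~\ref{thm:executed-round-bounds} gives $\beta_i\ge e^{-1/K}\ge e^{-1/4}$. Also $200/r=o(\delta)$, so the factor $(1+\delta/3)/((1-\delta/8)(1-200/r))$ is at most $1+\delta/2$ for large $\Delta$.

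\emph{Collecting terms.} The main term is $\alpha_i\beta_i\overline n_{i-1}(u)(1+\delta/2)$. Its excess $\tfrac{\delta}{2}\alpha_i\beta_i\overline n_{i-1}(u)$ is at most $(\tfrac12+o(1))\delta t_i$. The term $\tfrac{\delta}{4}\alpha_i\beta_i t_{i-1}(1+\delta/2)$ is at most $(\tfrac14+o(1))\delta t_i$. The term $\tfrac{300}{r}\alpha_i t_{i-1}/\beta_i$ equals $\tfrac{300}{r\beta_i^2}\alpha_i\beta_i t_{i-1}$, which is at most $300e^{2/K}t_i/r=o(\delta t_i)$. Finally, $\tfrac{\delta}{3}T/(\beta_i(1-o(1)))$ is at most $(e^{1/4}/3+o(1))\delta t_i$, which is below $0.45\,\delta t_i$. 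Summing the constants gives about $0.5+0.25+0.45<2$, so the total slack is below $2\delta t_i$, which proves the lemma.

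\emph{Main obstacle.} The only delicate point is the bound $\overline n_{i-1}(u)\lesssim t_{i-1}$, which has to be extracted from the padded invariant $D_{i-1}(u)\le t'_{i-1}$ via the convex-combination argument above. The rest is bookkeeping of $1\pm o(1)$ factors, using $i=O(\log\Delta)$ and $\delta=\log^{-4}\Delta$.
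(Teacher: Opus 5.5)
Your proposal is correct and follows essentially the same route as the paper: it bounds the numerator over $P_{i-1}(u)$ via $\neg R_i^+(u)$, divides by the palette lower bound from $\neg Q_i^+(u)$, absorbs $(1+\delta/3)/\bigl((1-\delta/8)(1-200/r)\bigr)$ into $1+\delta/2$, and charges each slack term to $\delta t_i$ using $t_i\ge \max\{\alpha_i\beta_i t_{i-1},T\}$ and $\beta_i\ge e^{-1/K}$. The only difference is how you bound $\overline n_{i-1}(u)$: you get $\overline n_{i-1}(u)\le t_{i-1}'$ from the padded invariant $D_{i-1}(u)\le t_{i-1}'$ by a convex-combination argument, while the paper just uses the filtering rule (every $c\in P_{i-1}(u)$ has $|N_{i-1,c}(u)|\le 2t_{i-1}$), which gives the weaker but sufficient bound $\overline n_{i-1}(u)\le 2t_{i-1}$ without using $F_{i-1}$, so this step is less delicate than you suggest.
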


\begin{proof}
Fix \(u\in U_i\). 
Since  \(\widetilde P_i(u)\subseteq P_{i-1}(u)\),
\[
\widetilde n_i(u)
\;=\frac{1}{|\widetilde P_i(u)|}\sum_{c\in \widetilde P_i(u)} |N_{i,c}(u)|
\le \frac{1}{|\widetilde P_i(u)|}\sum_{c\in P_{i-1}(u)} |N_{i,c}(u)|.
\]
Since \(Q_i^+(u)\) and \(R_i^+(u)\) do not occur, we have
\[
|\widetilde P_i(u)|\ge |P_{i-1}(u)|\beta_i\bigl(1-\frac{\delta}{8}\bigr)\bigl(1-\frac{200}{r}\bigr)
\]
and
\[
\frac{\sum_{c\in P_{i-1}(u)} |N_{i,c}(u)|}{|P_{i-1}(u)|}
\le \alpha_i\Bigl(1+\frac{\delta}{3}\Bigr)
\left[\beta_i^2 \Bigl(\overline n_{i-1}(u)+\frac{\delta}{4}t_{i-1}\Bigr)
     + \frac{300}{r}t_{i-1}\right]
+ \frac{\delta}{3}T.
\]
Therefore
\[
\widetilde n_i(u)
\le \frac{\alpha_i\bigl(1+\frac{\delta}{3}\bigr)\left[\beta_i^2 \bigl(\overline n_{i-1}(u)+\frac{\delta}{4}t_{i-1}\bigr)+ \frac{300}{r}t_{i-1}\right]
+ \frac{\delta}{3}T}
{\beta_i(1-\delta/8)(1-200/r)}.
\]
Because \(d_{i-1}'>d_{\mathrm{stop}}\), we have \(i<\tau_{\mathrm{end}}\) by definition. Theorem~\ref{thm:executed-round-bounds} therefore applies to round \(i\), so \(i=O(k+\log\log\Delta)=O(\log\Delta)\), and hence
\[
t_{i-1}'=(1+2\delta)^{i-1}t_{i-1}\le 2t_{i-1}
\]
for \(\Delta\ge \Delta_\eps\). Also, Theorem~\ref{thm:executed-round-bounds} gives \(\beta_i\ge e^{-1/K}\), while the recurrence gives \(t_i\ge T\) and \(t_i\ge \alpha_i\beta_i t_{i-1}\). Note that
\[
\frac{1+\delta/3}{(1-\delta/8)(1-200/r)}\le 1+\frac{\delta}{2}.
\]
Also, the contribution of the \(\frac{\delta}{3}T\) term is
\[
\frac{\delta T}{3\beta_i(1-\delta/8)(1-200/r)}
\le
\frac{\delta}{3}e^{1/K}(1+o(1))\,T
\le
\frac{\delta}{2}T
\le
\frac{\delta}{2}t_i,
\]
where we use \(\beta_i\ge e^{-1/K}\), \((1-\delta/8)^{-1}(1-200/r)^{-1}=1+o(1)\), \(t_i\ge T\), and \(e^{1/K}\le e^{1/4}<3/2\) since \(K=4/\eps\ge 4\). 
Likewise,
the contribution of the \(\frac{300}{r}t_{i-1}\) term is at most \(\frac{\delta}{10}\,t_i\), because \(\beta_i\ge e^{-1/K}\) and \(1/r=o(\delta)\). Using also \(\overline n_{i-1}(u)\le 2t_{i-1}\), since every color in \(P_{i-1}(u)\) has residual \(c\)-degree at most \(2t_{i-1}\), we obtain
\begin{align*}
\widetilde n_i(u)
&\le \left(1+\frac{\delta}{2}\right)\alpha_i\beta_i
   \left(\overline n_{i-1}(u)+\frac{\delta}{4}t_{i-1}\right)
   + \frac{\delta}{10}t_i
   + \frac{\delta}{2}t_i \\
&\le \alpha_i\beta_i \overline n_{i-1}(u)
   + \frac{\delta}{2}\,\alpha_i\beta_i \overline n_{i-1}(u)
   + \frac{3\delta}{10}\,\alpha_i\beta_i t_{i-1}
   + \frac{\delta}{10}t_i
   + \frac{\delta}{2}t_i \\
&\le \alpha_i\beta_i \overline n_{i-1}(u)
   + \delta t_i
   + \frac{3\delta}{10}t_i
   + \frac{\delta}{10}t_i
   + \frac{\delta}{2}t_i \\
&\le \alpha_i\beta_i \overline n_{i-1}(u)+2\delta t_i,
\end{align*}
after enlarging \(\Delta\) if necessary. This is the required estimate.
\end{proof}

\begin{proof}[Proof of Lemma~\ref{lem:triangle-free-round}]
Assume \(F_{i-1}\) holds and \(d_{i-1}'>d_{\mathrm{stop}}\).
Also assume that the output of \(\mathrm{SolveRoundLLL}(i,r,\mathfrak L_i)\) avoids every bad event in \(\mathcal B_i\).
Fix \(u\in U_i\).

Because \(Q_i^+(u)\) does not occur and Theorem~\ref{thm:executed-round-bounds} gives \(p_{i-1}'\beta_i\Bigl(1-\frac{\delta}{8}\Bigr)\Bigl(1-\frac{200}{r}\Bigr)\ge p_i'\), we obtain
\[
\widetilde\lambda_i(u)
=\min\left\{1,\frac{|\widetilde P_i(u)|}{p_i'}\right\}
\ge \min\left\{1,\frac{|P_{i-1}(u)|}{p_{i-1}'}\right\}
= \lambda_{i-1}(u).
\]

Next, Lemma~\ref{lem:filtered-average} gives \(\widetilde n_i(u)\le \alpha_i\beta_i \overline n_{i-1}(u)+2\delta t_i\).
Since \(P_i(u)\) is obtained from \(\widetilde P_i(u)\) by first deleting every color whose residual degree exceeds \(2t_i\) and then truncating to at most \(p_i'\) colors, every color removed in this filtering-truncation step contributes at most the dummy value \(2t_i\). Therefore \(D_i(u)\le \widetilde D_i(u)\). Using also \(\widetilde\lambda_i(u)\ge \lambda_{i-1}(u)\), we may write
\begin{align*}
D_i(u)
&\le \widetilde D_i(u) \\
&= \widetilde\lambda_i(u)\widetilde n_i(u)+(1-\widetilde\lambda_i(u))2t_i \\
&\le \lambda_{i-1}(u)\widetilde n_i(u)+(1-\lambda_{i-1}(u))2t_i \\
&\le \lambda_{i-1}(u)\left(\alpha_i\beta_i \overline n_{i-1}(u)+2\delta t_i\right)
   +(1-\lambda_{i-1}(u))2t_i \\
&\le \frac{t_i}{t_{i-1}}\,D_{i-1}(u)+2\delta t_i \\
&\le (1+2\delta)^{i-1}t_i+2\delta t_i \\
&\le (1+2\delta)^i t_i=t_i',
\end{align*}
where the fourth line uses \(t_i\ge \alpha_i\beta_i t_{i-1}\), the next line uses \(F_{i-1}(u)\), and the last step uses \((1+2\delta)^{i-1}\ge 1\).

Finally, since \(M_i^+(u)\) does not occur, 
by Theorem~\ref{thm:executed-round-bounds}
\[
|N_i(u)|\le \alpha_i\Bigl(1+\frac{\delta}{3}\Bigr)d_{i-1}'
   + \frac{200}{r}d_{i-1}'
   + \frac{\Delta}{r(\log\Delta)^8}
\le d_i'.
\] 
Thus both parts of \(F_i(u)\) hold. Since \(u\) was arbitrary, \(F_i\) holds after the deterministic post-processing step.
\end{proof}

\section{Girth-Five Graphs}

This section proves Theorem~\ref{thm:girth 5} stated in the introduction. 
The proof follows the same resilient-\LLL framework as Section~3, 
but random coloring on girth-$5$ graphs gives enough local independence to maintain 
direct per-color \(c\)-degree upper bounds instead of the average \(c\)-degree upper bounds used in the triangle-free case. 
We therefore define the more suitable round-\(i\) instance, invariant, and simplified iterative algorithm together, then prove the theorem.

\subsection{The Round-\texorpdfstring{$i$}{i} Instance, Invariant, and Algorithm}
\label{sec:g5-round-instance}

Fix a round \(i\ge 1\). Let \(G_{i-1}\) be the graph induced by the vertices still uncolored at the start of round \(i\). Let \(P_{i-1}(u)\) be the palette of an active vertex \(u\), let \(N_{i-1}(u)\) be its neighborhood in \(G_{i-1}\), and let \(N_{i-1,c}(u)=\{v\in N_{i-1}(u): c\in P_{i-1}(v)\}\). As in Section~3, the round variables are \(X_i(u)=(\mathrm{Sel}_i(u),K_i(u))\), sampled independently by Algorithm~\ref{alg:select}, and the round-\(i\) variable family is \(\mathcal V_i=\{X_i(u):u\in V(G_{i-1})\}\). Let \(\theta_i=(\pi_i,\beta_i,\alpha_i,p_i,p_i',t_i,t_i',d_i,d_i')\) denote the deterministic parameter bundle for round \(i\), and let \(\mathfrak L_i=(\mathcal B_i,\mathcal V_i,\theta_i)\) be the associated local \LLL instance.

For a vertex \(u\) still active after round \(i\), let \(N_i(u)\) be its neighborhood in \(G_i\), and let \(F_i(u)\) be the event that
\begin{enumerate}[leftmargin=*,itemsep=0.25em]
  \item \(|P_i(u)|\ge p_i'\),
  \item \(|N_{i,c}(u)|\le t_i'\) for every color \(c\in P_i(u)\),
  \item \(|N_i(u)|\le d_i'\).
\end{enumerate}
Let \(F_i\) be the event that \(F_i(u)\) holds for every vertex still present in \(G_i\). 

We now state the streamlined iterative algorithm~\ref{alg:g5-framework}. 
In each round, the vertices sample \((\mathrm{Sel}_i(u),K_i(u))\), solve the associated local \LLL instance, delete direct color conflicts, color every vertex that still keeps one of its selected colors, and truncate each surviving palette to the target size \(p_i'\). 

\begin{algorithm}[t]
\caption{IterativeColoringFramework$(G,\{\theta_i\}_{i\ge 0},r)$}
\label{alg:g5-framework}
\begin{algorithmic}[1]
\State Set $G_0=G$ and initialize the palettes $P_0(u)$ to be of $p_0$ colors for every vertex $u$
\State $d_{\mathrm{stop}}\gets \Delta/(\log\Delta)^6$
\State $i\gets 1$
\While{$d_{i-1}'>d_{\mathrm{stop}}$}
  \State construct the round-$i$ \LLL instance $\mathfrak L_i$
  \State $X_i\gets \mathrm{SolveRoundLLL}(i,r,\mathfrak L_i)$
  \ForAll{vertices $u\in V(G_{i-1})$}
    \State $\widehat P_i(u)\gets K_i(u)\setminus \mathrm{Sel}_i(N_{i-1}(u))$
    \If{$\mathrm{Sel}_i(u)\cap \widehat P_i(u)\neq\emptyset$}
      \State color $u$ with an arbitrary color in $\mathrm{Sel}_i(u)\cap \widehat P_i(u)$
    \EndIf
  \EndFor
  \State let $U_i$ be the vertices of $G_{i-1}$ still uncolored after the round-$i$ coloring decisions
  \ForAll{vertices $u\in U_i$}
    \State set $P_i(u)$ to an arbitrary subset of $\widehat P_i(u)$ of size exactly $p_i'$
  \EndFor
  \State let $G_i$ be the graph induced by $U_i$
  \State $i\gets i+1$
\EndWhile
\State color the remaining graph $G_{i-1}$ using Theorem~\ref{thm:clp20} with $d_{\mathrm{stop}}+1$ fresh colors
\end{algorithmic}
\end{algorithm}

The family \(\mathcal B_i\) of bad events is the following. 
\begin{align*}
Q_i^+(u)
& := \left\{|\widehat P_i(u)|<p_{i-1}'\beta_i\Bigl(1-\frac{\delta}{8}\Bigr)\Bigl(1-\frac{200}{r}\Bigr)\right\},\\
H_i^+(u,c)
& := \left\{|N_{i,c}(u)|>t_{i-1}'\left((1+\delta)\alpha_i\beta_i'+\frac{300}{r}\right)+\delta T\right\},\\
M_i^+(u)
& := \left\{|N_i(u)|
  > \alpha_i\Bigl(1+\frac{\delta}{3}\Bigr)d_{i-1}'
    + \frac{200}{r}d_{i-1}'
    + \frac{\Delta}{r(\log\Delta)^8}\right\}.
\end{align*}

Fix a constant \(\eps>0\), choose a parameter \(k\) with \(1\le k\le (1-\eps)\ln\Delta\), and define
\[
\eps_2=\frac{\eps}{5},
\qquad
\eps_1=\eps_2\left(1-\frac{\eps}{3}\right),
\qquad
K=\frac{4}{\eps}.
\]
Set
\[
\delta=\frac{1}{\log^4\Delta},
\qquad
r=(\log\Delta)^{20},
\]
and let \(T=\Delta^{\eps_1}/3\). 
The target palette size and its relaxed lower bound are
\[
\qquad
p_0=\frac{\Delta}{k}-\frac{\Delta}{(\log\Delta)^6}-1,
\qquad
p_i=\beta_i p_{i-1},
\qquad
p_i'=(1-\delta/7)^i p_i.
\]
The sampling probability is
\[
\pi_i=\frac{1}{Kt_{i-1}'+1},
\]
the color-retention factor is
\[
\beta_i=(1-\pi_i)^{t_{i-1}'},
\qquad
\beta_i'=\frac{\beta_i}{1-\pi_i},
\]
and the residual-degree attenuation factor is
\[
\alpha_i=(1-\pi_i)^{p_i'}.
\]
We then define
\[
t_0=\Delta,\qquad t_i=\max\{\alpha_i\beta_i' t_{i-1},T,p_i\},\qquad t_i'=(1+2\delta)^i t_i,
\]
and
\[
d_0=d_0'=\Delta,\qquad d_i=\max\!\left\{\alpha_i d_{i-1},\frac{\Delta}{(1+\delta)^i(\log\Delta)^6}\right\},\qquad d_i'=(1+\delta)^i d_i.
\]

\subsection{Main Lemmas and Proof of the Girth-\texorpdfstring{$5$}{5} Theorem}

We now state the three ingredients needed for Theorem~\ref{thm:girth 5}. 
They are the girth-$5$ counterparts of Lemmas~\ref{lem:triangle-free-instance}, \ref{lem:triangle-free-round}, and~\ref{lem:parameter-evolution} from Section~\ref{sec:triangle-free-main-lemmas}. 
As in Section~\ref{sec:triangle-free}, the first lemma gives a fast solver for each round-\(i\) local \LLL instance, the second shows that avoiding the bad events propagates the round invariant \(F_i\), and the third shows that the degree-bound sequence \(d_i'\) reaches the stopping threshold quickly. 

\begin{lemma}
\label{lem:g5-instance}
With the parameter schedule fixed in Section~\ref{sec:g5-round-instance}, for every fixed \(C>0\), the following holds whenever \(\Delta\le(\log n)^C\). Suppose \(F_{i-1}\) holds and \(d_{i-1}'>d_{\mathrm{stop}}\). Then, with high probability, Algorithm~\ref{alg:solve-lll} solves the round-\(i\) instance \(\mathfrak L_i\) in \((\log\log n)^{O(1)}\) rounds.
\end{lemma}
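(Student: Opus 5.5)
The proof mirrors that of Lemma~\ref{lem:triangle-free-instance}. First we build the partition. Fix \(C\) with \(\Delta\le(\log n)^C\) and let \(b\) be the constant that Theorem~\ref{thm:color-balanced-partition} supplies for \(a=C\). We apply that theorem to \(H=G_{i-1}\) with the palettes \(P_{i-1}\). The max-degree parameter is \(\min\{d_{i-1}',\Delta\}\) and the \(c\)-degree parameter is \(t_{i-1}'\); if Algorithm~\ref{alg:solve-lll} is invoked verbatim, its parameter is \(2t_{i-1}\), which dominates \(t_{i-1}'\). The hypotheses come directly from \(F_{i-1}\). Part~3 gives \(|N_{i-1}(u)|\le d_{i-1}'\). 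Part~2 gives \(|N_{i-1,c}(u)|\le t_{i-1}'\le 2t_{i-1}\), using \((1+2\delta)^{i-1}\le 2\) because \(i=O(\log\Delta)\). The lower bounds hold because \(d_{i-1}'>d_{\mathrm{stop}}=\Delta/(\log\Delta)^6\) and \(t_{i-1}\ge T=\Delta^{\eps_1}/3\), and both exceed \((\log\Delta)^b\). So in \(\log^{O(1)}\log n\) rounds we obtain w.h.p.\ a color-balanced \(r\)-partition \(\Phi\) with \(|N_{i-1}(u)\cap\Phi_j|\le 100d_{i-1}'/r\) and \(|N_{i-1,c}(u)\cap\Phi_j|\le 200t_{i-1}/r\). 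Since \(r=(\log\Delta)^{20}=(\log\log n)^{O(1)}\), Theorem~\ref{thm:davies12} then finishes within the claimed time, provided every event of \(\mathcal B_i\) is \(\Phi\)-resilient.

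Second, we need a girth-\(5\) analogue of Theorem~\ref{thm:executed-round-bounds} for the new schedule. It should show that \(\beta_i\ge e^{-1/K}\), that \(p_{i-1}'\beta_i(1-\delta/8)(1-200/r)\ge p_i'\), that the \(d_i'\) inequality holds exactly as before, that \(p_i'\ge T(\log\Delta)^6\), and that the number of executed rounds is \(O(\log\Delta)\). The \(\beta_i\) and \(d_i'\) parts are copied from the earlier proof. The lower bound on \(p_i'\) needs \(k\le(1-\eps)\ln\Delta\) together with \(\beta_i\) being close to \(e^{-1/K}\).

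Third, we check resilience of each event against \(\Phi\).
\begin{itemize}
\item \(Q_i^+(u)\): the proof of Proposition~\ref{prop:q-resilient} goes through verbatim. Its only inputs are the \(c\)-balance bound, \(\pi_i t_{i-1}'=O(1/K)\), and \(|P_{i-1}(u)|\ge p_{i-1}'\). The last now comes directly from \(F_{i-1}\), so Lemma~\ref{fact:palette-lower} is not needed.
\item \(M_i^+(u)\): the proof of Proposition~\ref{prop:m-resilient} is unchanged. It uses only degree lightness, the \(c\)-balance bound through the witness-color argument, and the tail bound for \(M_i(u)\) from Lemma~\ref{lem:appendix-degree-tail}.
\item \(H_i^+(u,c)\): this replaces the average-degree event, and I expect it to be the main work. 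Decompose \(N_{i,c}(u)\). A vertex \(x\in N_{i-1,c}(u)\) survives as a \(c\)-neighbor only if (a) \(c\in K_i(x)\) and no \(w\in N_{i-1,c}(x)\) selects \(c\), which happens with probability \(\beta_i/(1-\pi_i)=\beta_i'\) once we account for \(u\) itself not selecting \(c\); and (b) \(x\) stays uncolored, which happens with probability at most \(\alpha_i\) given \(|\widehat P_i(x)|\ge p_i'\).

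Girth \(5\) is what makes these events independent across the different \(x\). The sets \(N_{i-1}(x)\setminus\{u\}\) are pairwise disjoint, and no \(x\) is adjacent to any other \(x'\). The proof splits into two claims, following Proposition~\ref{prop:r-resilient}. The first claim controls the count of \(x\) satisfying (a). The first-sample concentration bound comes from Lemma~\ref{lem:appendix-degree-tail}. Resampling in \(\Phi_j\) adds at most \(|N_{i-1,c}(u)\cap\Phi_j|\le 200t_{i-1}/r\) vertices lying in \(\Phi_j\), plus the set \(A_{i,j}(u,c)\) of \(x\) blocked in the first sample by a selector in \(\Phi_j\). That set has mean \(O(t_{i-1}/(Kr))\) and is a sum of independent indicators, again by disjointness of the second neighborhoods. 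The bound must hold for a single color, not averaged over colors, so the additive slack \(\delta T\) is used to absorb small means. Chernoff applies since \(T/r\gg\log d\).

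The second claim controls (b) conditional on (a). We invoke the strengthened \(Q\)-resilience of Proposition~\ref{prop:q-resilient} for every \(x\in N_{i-1}(u)\), and then use the fresh \(\mathrm{Sel}(x)\) bits with the \(\mu<T\) and \(\mu\ge T\) Chernoff split from before. This gives the threshold \(t_{i-1}'((1+\delta)\alpha_i\beta_i'+300/r)+\delta T\).
\end{itemize}

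The main obstacle is the first claim for \(H_i^+\). Its additive error is only \(O(t_{i-1}/r)+\delta T\) per color, and it must hold uniformly over all \(S\subseteq\Phi_j\). If that fails, I would add a \(\log^{O(1)}\Delta\) slack term analogous to the \(\Delta/(r(\log\Delta)^8)\) term in \(M_i^+\), controlled by an \(O_i\)-type event bounding first-sample \(c\)-selections by \((\log\Delta)^8\).
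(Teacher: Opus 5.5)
Your partition step and the reduction to $\Phi$-resilience of the three event families match the paper. Your first claim for $H_i^+(u,c)$ handles the vertices of $\Phi_j$ plus the set $A_{i,j}(u,c)$ of vertices unblocked by resampling, with $u$'s own selection excluded so that girth $5$ gives independence. That is essentially the paper's bookkeeping for $\Delta_1^S$ and $\Delta_2^S$, and contrary to your last paragraph, it is not where the difficulty lies.

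The gap is in your second claim. For $x\in N_{i-1,c}(u)\setminus S$, the bits $\mathrm{Sel}_i(x)$ that decide whether $x$ stays uncolored come from $\mathcal V_i^1$. So, conditional on $\mathcal V_i^1$, the event $\{\mathrm{Sel}_i^1(x)\cap\widehat P_i^S(x)=\emptyset\}$ contains no fresh randomness from $x$'s own selection. Resilience asks for a single first-sample exceptional event outside of which $\Prob_{\mathcal V_i^2}[(H_i^+(u,c))^S\mid\mathcal V_i^1]$ is small for \emph{every} $S\subseteq\Phi_j$, and $S$ is chosen after $\mathcal V_i^1$ is revealed. Your Chernoff bound over $\mathrm{Sel}_i^1(x)$ is conditioned on a mixed $a$-part, which itself depends on $S$ and on $\mathcal V_i^2$. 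It therefore controls only one $S$ at a time. There are exponentially many relevant $S$, far too many to union-bound against a failure probability of $e^{-\Omega(\delta^2T)}$. Strengthened $Q$-resilience does not help: it makes $|\widehat P_i^S(x)|\ge p_i'$ likely, but says nothing about whether the already-fixed $\mathrm{Sel}_i^1(x)$ meets $\widehat P_i^S(x)$. This is the step you import from the second claim in the proof of Proposition~\ref{prop:r-resilient}. It has the same quantifier problem there, and the paper's girth-$5$ proof does not reuse it.

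The paper never separates survival from uncoloredness. It takes a first-sample tail bound for the whole quantity $|\widehat N_{i,c}(u)|$ (Lemma~\ref{lem:appendix-g5-degree-tail}, Pettie--Su's per-color bound). It then bounds the increase under resampling by three effects:
\begin{enumerate}
\item $\Delta_1^S$, the vertices of $\Phi_j$ themselves.
\item $\Delta_2^S\subseteq A_{i,j}(u,c)$, controlled by the $S$-independent first-sample event $J_{i,j}(u,c)$.
\item $\Delta_3^S$, the vertices colored before resampling but uncolored after.
\end{enumerate}
Each vertex of $\Delta_3^S$ is charged to a witness color, determined before $S\setminus\{u\}$ is resampled. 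That color must then be selected in the \emph{second} sample by some $y\in S\setminus\{u\}$. So the indicators $Y_x^S$ use only fresh bits, are independent across $x$ by girth $5$, and the Chernoff bound holds for every $S$ simultaneously.

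The case $u\in S$ is treated separately, because $u$'s fresh selection affects every $x$ at once. The paper puts $\{\Prob_{\mathcal V_i^2}[(\widehat H_i(u,c))^{\{u\}}\mid\mathcal V_i^1]\ge d^{-15}\}$ into the first-sample exceptional event via Markov's inequality and compares against the $\star$-sample. No $Q$-resilience is needed.

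Two smaller points:
\begin{itemize}
\item The first-sample bound in your first claim is a direct Chernoff bound from girth-$5$ independence. It does not come from Lemma~\ref{lem:appendix-degree-tail}, which is a triangle-free, color-averaged statement.
\item Your fallback of adding slack to $H_i^+$ would change the fixed family $\mathcal B_i$, and hence the lemma being proved.
\end{itemize}
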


\begin{lemma}
\label{lem:g5-round}
With the parameter schedule fixed in Section~\ref{sec:g5-round-instance}, assume \(F_{i-1}\) holds and \(d_{i-1}'>d_{\mathrm{stop}}\). For any realization \(X_i\) of the round-\(i\) variables in which no bad event of \(\mathcal B_i\) occurs, the deterministic post-processing step in Algorithm~\ref{alg:g5-framework} yields \(F_i\).
\end{lemma}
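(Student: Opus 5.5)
The proof is a direct check of the three clauses of \(F_i(u)\) for a fixed \(u\in U_i\), using that no event of \(\mathcal B_i\) occurs together with the girth-\(5\) analogue of Theorem~\ref{thm:executed-round-bounds}. That analogue should supply \(\beta_i\ge e^{-1/K}\), \(i=O(\log\Delta)\), and \(p_i'\ge T(\log\Delta)^6\) in every executed round. It should also supply the three one-step parameter inequalities: the palette inequality \(p_{i-1}'\beta_i(1-\delta/8)(1-200/r)\ge p_i'\), the degree inequality for \(d_i'\), and a \(c\)-degree inequality
\[
t_{i-1}'\Bigl((1+\delta)\alpha_i\beta_i'+\frac{300}{r}\Bigr)+\delta T\le t_i'.
\]
I would establish these first, reusing the computations of Theorem~\ref{thm:executed-round-bounds}.

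\textbf{Palette clause.} Since \(Q_i^+(u)\) fails, \(|\widehat P_i(u)|\ge p_{i-1}'\beta_i(1-\delta/8)(1-200/r)\). The palette inequality turns this into \(|\widehat P_i(u)|\ge p_i'\), so the truncation to exactly \(p_i'\) colors in Algorithm~\ref{alg:g5-framework} is well defined, and \(|P_i(u)|\ge p_i'\).

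\textbf{Degree clause.} Since \(M_i^+(u)\) fails, \(|N_i(u)|\) is at most the left side of the degree inequality. That inequality has the same form as in Theorem~\ref{thm:executed-round-bounds}, because \(d_i,d_i'\) are defined identically and the identical case split on whether \(d_i=\alpha_id_{i-1}\) applies. Hence \(|N_i(u)|\le d_i'\).

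\textbf{\(c\)-degree clause.} For \(c\in P_i(u)\subseteq\widehat P_i(u)\subseteq P_{i-1}(u)\), the failure of \(H_i^+(u,c)\) bounds \(|N_{i,c}(u)|\), where the final palettes are subsets of the tentative ones and so only decrease \(c\)-degrees. It remains to prove the \(c\)-degree inequality, and this is the step I expect to be the main obstacle. Write \(t_{i-1}'=(1+2\delta)^{i-1}t_{i-1}\) and \(t_i\ge\alpha_i\beta_i't_{i-1}\).
\begin{itemize}
\item The multiplicative term gives \((1+\delta)\alpha_i\beta_i't_{i-1}'\le(1+\delta)(1+2\delta)^{i-1}t_i\).
\item The remaining error is \(\frac{300}{r}t_{i-1}'+\delta T\), which must be at most \(\delta(1+2\delta)^{i-1}t_i\).
\end{itemize}
The \(\delta T\) part is fine since \(t_i\ge T\). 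The \(\frac{300}{r}t_{i-1}'\) part needs \(t_{i-1}/t_i=O(\log^{O(1)}\Delta)\), and I would take the ratio in two regimes.
\begin{itemize}
\item While \(t_i=\alpha_i\beta_i't_{i-1}\), bound \(\alpha_i\) from below. From \(\pi_i\approx 1/(Kt_{i-1}')\) we get \(\alpha_i\ge\exp(-O(p_i'/t_{i-1}))\), and \(p_i\le t_{i-1}\) because \(t_{i-1}\ge p_{i-1}\ge p_i\) by the \(\max\) in the definition. Hence \(\alpha_i=\Omega(1)\) and \(t_{i-1}/t_i=O(1)\).
\item Otherwise \(t_i\ge p_i\ge\beta_ip_{i-1}\). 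Using \(t_{i-1}\le t_{i-2}\) together with the same \(\Omega(1)\) bound on attenuation from the previous round, again \(t_{i-1}/t_i=O(1)\).
\end{itemize}
Thus \(\frac{300}{r}t_{i-1}'=O(t_i/r)=o(\delta t_i)\) because \(1/r=o(\delta)\). Summing both parts, \(|N_{i,c}(u)|\le(1+2\delta)^it_i=t_i'\).

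If the one-step ratio bound on \(t_{i-1}/t_i\) fails at the transition rounds, I would fall back to absorbing the error using \(t_i\ge T\) and \(t_{i-1}\le\Delta\). That requires checking \(\frac{\Delta}{r}\le\delta T\) is not available, so the ratio argument must be made to work. This careful verification of the \(t\)-recurrence is the part requiring real care.
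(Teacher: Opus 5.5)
Your proof is correct, and for the palette and $c$-degree clauses it is essentially the paper's argument. The key fact in both is that the term $p_{i-1}$ in the $\max$ defining $t_{i-1}$ gives $p_i'\le p_{i-1}\le t_{i-1}\le t_{i-1}'$. Hence $\alpha_i=(1-\pi_i)^{p_i'}\ge e^{-2/K}$ and $\alpha_i\beta_i'\ge e^{-4/K}$, so $\frac{300}{r}t_{i-1}'=o(\delta)\,\alpha_i\beta_i't_{i-1}'$. The paper absorbs this into the multiplicative factor, while you phrase it as $t_{i-1}/t_i=O(1)$. For the degree clause you take a slightly different route, rerunning the case split of Theorem~\ref{thm:executed-round-bounds} on whether $d_i=\alpha_id_{i-1}$; this needs only $\alpha_i\le 1$ and $i=O(\log\Delta)$. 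The paper instead uses $\alpha_i\ge e^{-2/K}$ (already available from the $c$-degree clause) together with $d_{i-1}'>d_{\mathrm{stop}}$ to make the additive terms $o(\delta\alpha_i d_{i-1}')$. It then concludes from $(1+\delta)\alpha_i d_{i-1}'=(1+\delta)^i\alpha_i d_{i-1}\le d_i'$, which is shorter. Both routes work.

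Two small repairs are needed. First, your two-regime analysis and the fallback paragraph are unnecessary. Since $t_i\ge\alpha_i\beta_i' t_{i-1}$ holds in every round by the $\max$, you have $t_{i-1}/t_i\le e^{4/K}$ always. Your second-regime reasoning via $t_{i-1}\le t_{i-2}$ and the previous round's attenuation does not, as written, bound $t_{i-1}$ by $O(t_i)$. Second, $t_i\ge T$ alone does not justify $\delta T+o(\delta t_i)\le\delta(1+2\delta)^{i-1}t_i$, since for $i=1$ the right-hand side is exactly $\delta t_1$. Instead, use $t_i\ge p_i\ge\Delta^{\eps_2-o(1)}$ from the proof of Lemma~\ref{lem:g5-parameter}, together with $\eps_1<\eps_2$, to get $T=o(t_i)$. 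The paper's last step, justified only by $t_i\ge\alpha_i\beta_i't_{i-1}$ and $t_i\ge T$, has the same looseness and is fixed the same way.
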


\begin{lemma}
\label{lem:g5-parameter}
With the parameter schedule fixed in Section~\ref{sec:g5-round-instance}, there is \(\ell=O(k+\log\log\Delta)\) such that
\[
p_j>T\quad\text{for every } 1\le j\le \ell
\qquad\text{and}\qquad
d_\ell'\le d_{\mathrm{stop}}=\frac{\Delta}{(\log\Delta)^6}.
\]
\end{lemma}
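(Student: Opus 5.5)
The proof is numerical and I will model it on the proof of Theorem~\ref{thm:executed-round-bounds}, adjusted to the girth-$5$ schedule. First I record the per-round facts: $\beta_i=(1-\pi_i)^{t_{i-1}'}\ge e^{-1/K}$ from $\ln(1-x)\ge -x/(1-x)$ with $x=1/(Kt_{i-1}'+1)$, and $\beta_i'=\beta_i/(1-\pi_i)\le(1+o(1))\beta_i$ because $t_{i-1}'\ge T$ is large. As long as $i=O(\log\Delta)$ we also have $(1+2\delta)^i=1+o(1)$ and $(1-\delta/7)^i=1-o(1)$, so $t_i'=(1+o(1))t_i$ and $p_i'=(1-o(1))p_i$. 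I track the ratio $s_i=t_i/p_i$. While the first argument of the max defining $t_i$ is the active one, we get $s_i=\alpha_i\beta_i' t_{i-1}/(\beta_i p_{i-1})=(1+o(1))\alpha_i s_{i-1}$. Moreover $\alpha_i=(1-\pi_i)^{p_i'}\le\exp(-(1-o(1))p_i/(Kt_{i-1}))\le\exp(-(1-o(1))e^{-1/K}/(K s_{i-1}))$. The constant here is larger than the $1/4$-type constant in the triangle-free case, and this is exactly what permits $k$ up to $(1-\eps)\ln\Delta$.

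With $C'=e^{-1/K}/K$, I run the same two-phase analysis as before. Starting from $s_0=(1+o(1))k$, each step decreases $s$ by about $C'$ while $s\ge k^{2/3}$; the dyadic-block argument then handles $s\le k^{2/3}$ in $O(k^{2/3})$ further steps. So $\tau=\min\{i: p_i\ge t_i\}$ satisfies $\tau\le(1+o(1))k/C'$. The $(1+o(1))$ factor from $\beta_i'$ and the $T$ floor only help, since the floor just makes $t_i$ larger until $p_i$ dominates. After $\tau$, the definition forces $t_i\ge p_i$, so $t_i=\Theta(p_i)$ once the $p_i$ term in the max activates. Then $p_i/t_{i-1}'\ge(1-o(1))\beta_i$, which gives $\alpha_i\le\exp(-c)$ for a constant $c>0$. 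Hence $d_i$ decays geometrically, and after $A\log\log\Delta$ more rounds with $A>7/c$ we get $d_\ell'\le 2\Delta(\log\Delta)^{-cA}\le d_{\mathrm{stop}}$, exactly as in the triangle-free proof. This gives $\ell=O(k+\log\log\Delta)$.

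The main obstacle is the lower bound $p_j>T$ for all $j\le\ell$. Since $p_j\ge(1-o(1))e^{-j/K}\Delta/k$, we need $j/(K\ln\Delta)\le 1-\eps_1-o(1)$. With $j\le(1+o(1))Ke^{1/K}k+A\log\log\Delta$, this reduces to $e^{1/K}k/\ln\Delta<1-\eps_1$. That holds because $k\le(1-\eps)\ln\Delta$, $e^{\eps/4}(1-\eps)\le 1-\tfrac{3\eps}{4}+O(\eps^2)$, and $\eps_1<\eps/5$. For small constant $\eps$ this is a straightforward inequality check. I expect the delicate point to be making the phase-one constant sharp enough: the bound $\tau\le(1+o(1))k/C'$ must really carry $C'=e^{-1/K}/K$, with only $(1+o(1))$ losses from $p_i'$ versus $p_i$, $t_{i-1}'$ versus $t_{i-1}$, and $\beta_i'$ versus $\beta_i$. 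I would verify each of these by the same $i=O(\log\Delta)$, $\delta=\log^{-4}\Delta$ bookkeeping already used.
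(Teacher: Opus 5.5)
Your proposal is correct and follows essentially the same route as the paper: you track $s_i=t_i/p_i$ to the crossing time $\tau$, then use $\alpha_i\le e^{-c}$ and geometric decay of $d_i'$, and you bound $p_j$ from below via $\beta_j\ge e^{-1/K}$; usefully, you make explicit the girth-$5$ constant $e^{-1/K}/K$ and the check $e^{\eps/4}(1-\eps)<1-\eps_1$, which the paper covers only by ``exactly as in'' the triangle-free theorem. Two small fixes: after $\tau$ you need $t_i=p_i$ to persist, by induction from $\alpha_i\beta_i'p_{i-1}=\alpha_ip_i/(1-\pi_i)\le p_i$ and $p_i>T$ (an upper bound on $t_{i-1}$, not the lower bound $t_i\ge p_i$); and the $\beta_i'$ factor and the $T$ floor do not ``help'' but are respectively negligible and inactive while $p_i>T$.
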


Theorem~\ref{thm:girth 5} is deduced from these lemmas in exactly the same way that Theorem~\ref{thm:triangle-free-main} is deduced from Lemmas~\ref{lem:triangle-free-instance}, 
\ref{lem:triangle-free-round}, and~\ref{lem:parameter-evolution} in the proof from Section~\ref{sec:triangle-free-main-lemmas}. 
First, Lemma~\ref{lem:g5-parameter} gives a round \(\ell=O(k+\log\log\Delta)\) with \(d_\ell'\le d_{\mathrm{stop}}\). 
Next, Lemmas~\ref{lem:g5-instance} and~\ref{lem:g5-round} show that the rounds \(1,\dots,\ell\) can be executed while maintaining \(F_i\). 
Consequently the residual graph \(G_\ell\) has maximum degree at most \(d_{\mathrm{stop}}\), so Theorem~\ref{thm:clp20} completes the coloring.

We therefore omit the proof of Theorem~\ref{thm:girth 5} here and instead prove the three lemmas.

For Lemma~\ref{lem:g5-instance}, the key ingredient is that the three girth-$5$ bad-event families are resilient with respect to a suitable partition. Accordingly, fix an \(r\)-partition \(\Phi\) with degree parameters \(\min\{d_{i-1}',\Delta\}\) and \(t_{i-1}'\) in the sense of Definition~\ref{def:r-partition-degree-params} from Section~\ref{sec:color-balanced-partition}. We first state the corresponding resilience propositions, whose proofs are deferred to the next subsection; once they are proved, the proof of Lemma~\ref{lem:g5-instance} is the same as the proof of Lemma~\ref{lem:triangle-free-instance} from Section~\ref{sec:triangle-free-lemma-proofs}, so we omit it.

\begin{proposition}
\label{prop:g5-q-resilient}
Assume \(F_{i-1}\) holds and \(d_{i-1}'>d_{\mathrm{stop}}\). Then for every active vertex \(u\), the bad event \(Q_i^+(u)\) is resilient with respect to \(\Phi\).
\end{proposition}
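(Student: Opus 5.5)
We adapt the proof of Proposition~\ref{prop:q-resilient} almost verbatim; only the normalisation changes, because the girth-$5$ version of \(Q_i^+(u)\) compares \(|\widehat P_i(u)|\) with \(p_{i-1}'\) rather than with \(|P_{i-1}(u)|\). Under \(F_{i-1}\), the palette satisfies \(|P_{i-1}(u)|= p_{i-1}'\), since palettes are truncated to exactly \(p_{i-1}'\) and item~1 of the invariant gives the lower bound. Every \(c\in P_{i-1}(u)\) has \(|N_{i-1,c}(u)|\le t_{i-1}'\). The partition \(\Phi\) was built with \(c\)-degree parameter \(t_{i-1}'\), so \(|N_{i-1,c}(u)\cap\Phi_j|\le 100t_{i-1}'/r\).

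We first introduce the unprimed auxiliary event
\[
Q_i(u)=\bigl\{|\widehat P_i(u)|<p_{i-1}'\beta_i(1-\delta/8)\bigr\}.
\]
We bound \(\Prob[Q_i(u)]\le \Delta^{-300}\le d^{-75}\) by the same appendix tail bound, Lemma~\ref{lem:appendix-degree-tail}. The mean is right: a color \(c\) survives with probability \(r_c(1-\pi_i)^{|N_{i-1,c}(u)|}=\beta_i\), and \(r_c\le1\) holds because \(|N_{i-1,c}(u)|\le t_{i-1}'\) and \(\beta_i=(1-\pi_i)^{t_{i-1}'}\). The survival events are independent across colors, so Chernoff applies with mean \(\beta_i p_{i-1}'=\Omega(\Delta^{\eps_2})\), which is polynomially large by Lemma~\ref{lem:g5-parameter}.

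Next we define
\[
\widetilde Q_i(u)=Q_i(u)\cup\bigl\{\Prob_{\mathcal V_i^2}[(Q_i(u))^{\{u\}}\mid\mathcal V_i^1]>d^{-20}\bigr\},
\]
and use Markov's inequality to get \(\Prob[\widetilde Q_i(u)]\le d^{-35}\).

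Now fix a first sample outside \(\widetilde Q_i(u)\), a part \(\Phi_j\), and any \(S\subseteq\Phi_j\). We treat two cases.
\begin{itemize}
\item If \(u\notin S\), then \(K_i(u)\) is fixed. Each surviving color \(c\) is killed only if some vertex of \(S\cap N_{i-1,c}(u)\) selects \(c\) in the second sample. This has probability at most \(\pi_i\cdot 100t_{i-1}'/r\le 100/(Kr)\), using \(\pi_i=1/(Kt_{i-1}'+1)\), independently across colors. So \(|\widehat P_i^S(u)|\) is a sum of independent indicators with mean \(\mu\ge p_{i-1}'\beta_i(1-\delta/8)(1-25/r)\).
\item If \(u\in S\), we first condition on the good event, of probability at least \(1-d^{-20}\), that resampling \(X_i(u)\) alone leaves \(|\widehat P_i^{\{u\}}(u)|\ge p_{i-1}'\beta_i(1-\delta/8)\). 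The remaining loss from \(S\cap N_{i-1}(u)\) is then handled exactly as in the first case.
\end{itemize}
In both cases the threshold \(p_{i-1}'\beta_i(1-\delta/8)(1-200/r)\) is at most \((1-100/r)\mu\). Chernoff then gives failure probability \(\exp(-\Omega(\mu/r^2))\le d^{-20}\), since \(\mu=\Omega(p_{i-1}')\ge\Delta^{\Omega(\eps)}\gg r^2\log d\). Hence \(\Prob_{\mathcal V_i^2}[(Q_i^+(u))^S\mid\mathcal V_i^1]\le 2d^{-20}\le d^{-10}\).

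This bound is uniform in \(S\) and \(j\), so the exceptional first-sample set is contained in \(\widetilde Q_i(u)\), whose probability is at most \(d^{-35}\le d^{-30}\). That gives resilience.

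The main obstacle is to confirm the quantitative inputs specific to this schedule, namely \(r_c\le1\), \(\beta_i\ge e^{-1/K}\), and \(p_{i-1}'\ge\Delta^{\Omega(\eps)}\). These should follow from the definitions and from Lemma~\ref{lem:g5-parameter}, which gives \(p_j>T\).
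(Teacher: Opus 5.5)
Your proposal is correct and takes essentially the paper's route: the paper proves this proposition only by saying it follows the proof of Proposition~\ref{prop:q-resilient} with adjusted constants, and you carry out exactly that adaptation. That means the auxiliary event \(Q_i(u)\), a Markov-controlled term for resampling \(X_i(u)\) alone, and then a Chernoff bound uniform in \(S\subseteq\Phi_j\) on the extra loss from \(S\cap N_{i-1,c}(u)\) via the bound \(100t_{i-1}'/r\). One small remark: Lemma~\ref{lem:appendix-degree-tail} is stated for the triangle-free schedule, so the justification that actually applies is your direct computation (each color survives with probability exactly \(\beta_i\) since \(r_c\le 1\), independently across colors), and the lower bound it needs is \(p_{i-1}'\ge(1-o(1))T=\Omega(\Delta^{\eps_1})\) from Lemma~\ref{lem:g5-parameter}, not \(\Omega(\Delta^{\eps_2})\).
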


\begin{proposition}
\label{prop:g5-h-resilient}
Assume \(F_{i-1}\) holds and \(d_{i-1}'>d_{\mathrm{stop}}\). Then for every active vertex \(u\) and every color \(c\in P_{i-1}(u)\), the bad event \(H_i^+(u,c)\) is resilient with respect to \(\Phi\).
\end{proposition}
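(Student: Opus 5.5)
We will prove resilience of $H_i^+(u,c)$ by decomposing the resampling effect into a two-hop part and a one-hop part, as in the proof of Proposition~\ref{prop:r-resilient}. Here the per-color structure and girth~$5$ give more independence than in the triangle-free case. Fix $u$, $c\in P_{i-1}(u)$, a part $\Phi_j$, and $S\subseteq\Phi_j$. A vertex $x\in N_{i-1,c}(u)$ lies in $N_{i,c}(u)$ only if three things hold: $c\in K_i(x)$; no neighbor of $x$ selects $c$ (the $c$-selection bits of $N_{i-1,c}(x)$); and $x$ fails to color itself ($\mathrm{Sel}_i(x)\cap\widehat P_i(x)=\emptyset$). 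Because the graph has girth~$5$, the sets $N_{i-1}(x)\setminus\{u\}$ for $x\in N_{i-1}(u)$ are pairwise disjoint. Hence the indicators $\mathbf 1[x\in N_{i,c}(u)]$ are independent across $x$ once we fix the selection of $u$ (which can only remove color $c$ from everyone, and so only helps). This is what lets us work with a per-color count rather than an average.

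We organize the argument as follows. Let $S_a$ denote the variables $K_i(x)$ for $x\in N_{i-1,c}(u)\cap S$ together with $\mathrm{Sel}_i(w)$ for $w\in S$ at distance two. Let $S_b$ denote $\mathrm{Sel}_i(x)$ for $x\in N_{i-1,c}(u)\cap S$. We then argue in three steps.

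First, the vertices $x\in N_{i-1,c}(u)\cap\Phi_j$ contribute at most $200t_{i-1}'/r$ by the color-balance property of $\Phi$. This is absorbed by the $\frac{300}{r}t_{i-1}'$ slack.

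Second, for $x\notin\Phi_j$ we use the two-sample structure. Define an auxiliary first-sample event $\widetilde H$: the number of $x\in N_{i-1,c}(u)$ with $c\in\widetilde P^1$, together with the candidates that can newly enter after resampling, exceeds $\beta_i'(1+\delta/3)t_{i-1}'+\frac{100}{r}t_{i-1}'$. A vertex $x$ can newly enter only if, in the first sample, some $y\in\Phi_j\cap N_{i-1,c}(x)$ selected $c$. Each $x$ has probability at most $\pi_i\cdot 200t_{i-1}'/r\le 200/(Kr)$ of this, independently across $x$ by girth~$5$. A Chernoff bound with mean $\Omega(T)$ and polylog slack, using $T=\Delta^{\eps_1}/3\gg r^2\log d$, makes this fail with probability $d^{-40}$. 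This step is uniform in $S$, exactly as for $B_{i,j}(u)$ in the triangle-free proof.

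Third, we handle the one-hop coloring step conditionally. Using Proposition~\ref{prop:g5-q-resilient} in its strengthened form (proved as in Proposition~\ref{prop:q-resilient}) plus a union bound over $x\in N_{i-1}(u)$, with first-sample probability $1-d^{-34}$, every $S$ keeps each $|\widehat P_i^{S_a}(x)|\ge p_{i-1}'\beta_i(1-\delta/8)$ except with conditional probability $d^{-18}$. On that event, each $x$ stays uncolored with probability at most $(1-\pi_i)^{p_i'}=\alpha_i$. This holds whether $\mathrm{Sel}(x)$ comes from the first or the second sample, because $\mathrm{Sel}_i(x)$ is independent of $S_a$; this is where girth~$5$ is used a second time, since $x$'s own selection does not affect its neighbors' palettes relevant to $u$'s color $c$. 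We split into $x\in S$ and $x\notin S$ and apply the two-regime Chernoff bound ($\mu\ge T$ versus $\mu<T$) as in Proposition~\ref{prop:r-resilient}. This gives $|N_{i,c}^S(u)|\le(1+\delta)\alpha_i\beta_i' t_{i-1}'+\frac{300}{r}t_{i-1}'+\delta T$ except with probability $O(d^{-18})$. A final union bound over $j\le r$ bounds the first-sample failure probability by $d^{-30}$.

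The main obstacle is the third step: making the conditional Chernoff bound uniform over all $S$ without a union bound over subsets. Selections of $x\in S$ are drawn from the second sample while the palettes $\widehat P_i^{S_a}(x)$ depend on both samples. The fix we would try first is to condition on $\mathcal V^{1,a}$ together with the good event $\mathcal E(S_a)$, and observe that every relevant $\mathrm{Sel}(x)$, from either sample, is a fresh independent variable given these. The per-$x$ bound $\alpha_i$ then holds pointwise in $S$. A secondary subtlety is the factor $\beta_i'=\beta_i/(1-\pi_i)$, which accounts for excluding $u$'s own selection from the $c$-neighbors of $x$. Since the target threshold already uses $\beta_i'$, this should match.
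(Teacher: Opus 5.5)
Your proposal has a genuine gap in two places. Both are exactly where the paper's proof departs from the triangle-free template you follow.

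\textbf{Step 2 fails when $u\in\Phi_j$.} Every $x\in N_{i-1,c}(u)$ has $u\in N_{i-1,c}(x)$. So your candidate events all share $u$'s $c$-selection bit and are not independent across $x$. With first-sample probability $\pi_i$, which is at least of order $1/\Delta$ and hence far above $d^{-40}$, $u$ selects $c$. Then every $x\in N_{i-1,c}(u)\setminus\Phi_j$ is a candidate. Whenever $|N_{i-1,c}(u)|$ is close to $t'_{i-1}$ (allowed by $F_{i-1}$), your auxiliary event therefore occurs, because $\beta_i'\approx e^{-1/K}<1$.

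Your remark that $u$'s selection ``only helps'' holds within one sample but not under resampling. If $u\in S$, a first-sample selection of $c$ by $u$ suppresses all of $u$'s $c$-neighbors, and the second sample releases them. The paper isolates $u$ in three ways:
\begin{itemize}
\item it compares with the $\star$-sample, in which $u$ alone is resampled when $u\in S$;
\item it puts the Markov event $\{\Prob_{\mathcal V_i^2}[(\widehat H_i(u,c))^{\{u\}}]\ge d^{-15}\}$ into the first-sample exceptional set;
\item it lets only $y\in\Phi_j\setminus\{u\}$ generate candidates in $A_{i,j}(u,c)$.
\end{itemize}
A base count that simply ignores $u$'s bit would also repair this step.

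\textbf{Step 3 does not give uniformity in $S$, and your proposed fix does not supply it.} This is the more serious problem. In the resilience condition the inner probability is over $\mathcal V_i^2$ given all of $\mathcal V_i^1$. So for $x\notin S$, the selection $\mathrm{Sel}_i^1(x)$ is fixed, not fresh. Your bound ``$x$ stays uncolored with probability at most $\alpha_i$'' is therefore a first-sample statement for such $x$. The exceptional first-sample set produced by your Chernoff bound depends on $S$ and on the second-sample values on $S_a$. Turning it into one set valid for all $S\subseteq\Phi_j$ needs a union bound over exponentially many subsets, which is exactly what must be avoided.

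The paper never re-randomizes $\mathrm{Sel}_i(x)$. It applies Lemma~\ref{lem:appendix-g5-degree-tail}, which bounds the whole count $|\widehat N_{i,c}(u)|$ and already carries the $\alpha_i\beta_i'$ factor, to the $\star$-sample. It then bounds only the additive increase caused by resampling $S\setminus\{u\}$, from three sources:
\begin{itemize}
\item vertices of $\Phi_j$;
\item vertices of $A_{i,j}(u,c)$, controlled by the $S$-independent first-sample event $J_{i,j}(u,c)$;
\item vertices colored in the $\star$-sample but uncolored in the $S$-sample.
\end{itemize}
For the last kind, the paper fixes from the first sample the set $L_x$ of $x$'s selected colors not deleted by $u$ or by neighbors outside $\Phi_j$, together with a signature color $\sigma(L_x)$. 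Such an $x$ can flip only if some $y\in(S\setminus\{u\})\cap N_{i-1,\sigma(x)}(x)$ selects $\sigma(x)$ in the second sample. For every fixed first sample (and second-sample value at $u$) and every $S$, these indicators are independent by girth $5$, each with probability at most $100/(Kr)$. A purely second-sample Chernoff bound therefore covers all $S$ at once. This route also needs no strengthened $Q$-resilience.

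Minor point: the partition bound here is $100t'_{i-1}/r$, not $200t'_{i-1}/r$, since the $c$-degree parameter is $t'_{i-1}$.
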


\begin{proposition}
\label{prop:g5-m-resilient}
Assume \(F_{i-1}\) holds and \(d_{i-1}'>d_{\mathrm{stop}}\). Then for every active vertex \(u\), the bad event \(M_i^+(u)\) is resilient with respect to \(\Phi\).
\end{proposition}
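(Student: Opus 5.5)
The plan is to mirror the proof of Proposition~\ref{prop:m-resilient} from the triangle-free case almost verbatim; the bad event $M_i^+(u)$ has the same form in the girth-$5$ instance, and the argument there used triangle-freeness nowhere essential. Fix $u$, a part $\Phi_j$ and $S\subseteq\Phi_j$. As before, I will decompose
\[
|N_i^S(u)|\le |N_i^1(u)|+|N_{i-1}(u)\cap\Phi_j|+|W_{i,j}^S(u)|,
\]
where $W_{i,j}^S(u)=N_i^S(u)\setminus(N_i^1(u)\cup\Phi_j)$ is the set of neighbors outside $\Phi_j$ that were colored in the first sample but become uncolored under resampling. The first term is handled by the auxiliary event $M_i(u)=\{|N_i(u)|>\alpha_i(1+\delta/5)d_{i-1}'\}$. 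I need a tail bound $\Prob[M_i(u)]\le d^{-40}$ for the girth-$5$ parameters, which I would obtain from the girth-$5$ analogue of Lemma~\ref{lem:appendix-degree-tail}. For each $x\in N_{i-1}(u)$, the probability that $x$ stays uncolored is at most $(1-\pi_i)^{|\widehat P_i(x)|}$. Given $|\widehat P_i(x)|\gtrsim p_i'$ (from the palette bounds using $F_{i-1}$), this is $\lesssim\alpha_i$. Concentration follows by the same Chernoff/Talagrand-type argument, since $d_{i-1}'>d_{\mathrm{stop}}$ makes the mean polylog-large. The second term is at most $100d_{i-1}'/r$ by the partition property.

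For the third term I reuse the witness-color construction. For $x\in N_{i-1}(u)\setminus\Phi_j$, set $C_x=\mathrm{Sel}_i^1(x)\cap\widehat P_i^1(x)$ and $\sigma(x)=\Gamma(C_x)$. If $x\in W^S_{i,j}(u)$, then some $y\in S\cap N_{i-1,\sigma(x)}(x)$ selects $\sigma(x)$ in the second sample. Grouping by $c'=\sigma(x)$ gives disjoint sets $W^S_{i,j}(u,c')$, and these are independent across $c'$ under $\Prob_{\mathcal V_i^2}[\cdot\mid\mathcal V_i^1]$. Each is contained in $\{x\in N_{i-1,c'}(u): c'\in\mathrm{Sel}_i^1(x)\}$. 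Under $F_{i-1}$ this set has mean at most $t_{i-1}'\pi_i\le 1/K$, so excluding the events $O_i(u,c')$ that it exceeds $(\log\Delta)^8$ costs $d^{-40}$ in total. The color-balance bound $|S\cap N_{i-1,c'}(x)|\le 100t_{i-1}'/r$ (the $c$-degree parameter is now $t_{i-1}'$ rather than $2t_{i-1}$) together with $\pi_i=1/(Kt_{i-1}'+1)$ gives $\Prob[x\in W]\le 100/(Kr)$ for each $x$. Hence the conditional mean is at most $25d_{i-1}'/r$. Hoeffding with summands in $[0,(\log\Delta)^8]$ over at most $\Delta$ colors then bounds the probability of exceeding $\frac{100}{r}d_{i-1}'+\frac{\Delta}{r(\log\Delta)^8}$ by $d^{-20}$, again using $d_{i-1}'>\Delta/(\log\Delta)^6$ and $r=(\log\Delta)^{20}$.

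Combining the three terms, outside $M_i(u)\cup\bigcup_{c'}O_i(u,c')$ every $S\subseteq\Phi_j$ satisfies $\Prob_{\mathcal V_i^2}[(M_i^+(u))^S\mid\mathcal V_i^1]\le d^{-20}$. Here I use $\alpha_i(1+\delta/5)\le\alpha_i(1+\delta/3)$ and note that the constants $100+100\le200$ fit the slack in $M_i^+$. A union bound over the $r=d^{o(1)}$ parts finishes the proof.

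I expect the main obstacle to be the tail bound for $M_i(u)$ itself, i.e.\ the girth-$5$ version of the appendix lemma. Whether $x$ remains uncolored depends on $\widehat P_i(x)$, which depends on the neighbors of $x$, so the indicators over $x\in N_{i-1}(u)$ are not independent. My first attempt will condition on the variables $K_i(x)$ and $\mathrm{Sel}_i(w)$ for $w\in N(x)\setminus\{u\}$. Girth $5$ makes these sets disjoint for distinct $x$, so after this conditioning the events ``$x$ uncolored'' depend only on the independent $\mathrm{Sel}_i(x)$, with $u$'s own selection having negligible influence. I will then apply Chernoff, plus a union bound on the palette events $Q_i(x)$ to guarantee $|\widehat P_i(x)|\ge p_{i-1}'\beta_i(1-\delta/8)\ge p_i'$ and hence probability $\le\alpha_i$.
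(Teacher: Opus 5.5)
You follow the paper's route. The paper proves this proposition only by pointing back to the proof of Proposition~\ref{prop:m-resilient} ``with the necessary changes in the constants,'' and you carry out that transfer. You are also right that a girth-$5$ tail bound for $M_i(u)$ is needed; the appendix states one only for the $c$-degree event.

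One step, however, does not hold as written. You exclude the events $O_i(u,c')$ to keep the Hoeffding summands in $[0,(\log\Delta)^8]$, and you assert that under $F_{i-1}$ the set $\{x\in N_{i-1,c'}(u):c'\in\mathrm{Sel}_i^1(x)\}$ has mean at most $t_{i-1}'\pi_i\le 1/K$. But clause~2 of $F_{i-1}(u)$ bounds $|N_{i-1,c'}(u)|$ only for $c'\in P_{i-1}(u)$. The colors you actually need are the witnesses $c'=\sigma(x)\in P_{i-1}(x)$, and these need not lie in $P_{i-1}(u)$. Once a color has left $u$'s palette (via $K_i(u)$, a neighbor's selection, or the arbitrary truncation), no bad event controls that $c$-degree of $u$. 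With lists, the color may never have been in $u$'s palette at all. Nothing in $F_{i-1}$ then bounds $|N_{i-1,c'}(u)|$ by $O(t_{i-1}')$. The available bounds are $d_{i-1}'$, or the $t'$-value from the round in which $c'$ left $P(u)$. In later rounds either can exceed $t_{i-1}'$ by far more than a $(\log\Delta)^8$ factor; for instance, before $\tau$ the ratio $d'/t'$ grows by about $1/\beta_i'$ per round. For such $c'$ the events $O_i(u,c')$ need not be rare, so the bounded-summand Hoeffding step is unsupported. The triangle-free template makes the same unqualified claim ($|N_{i-1,c'}(u)|\le 2t_{i-1}$), so mirroring it verbatim does not close this gap.

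In the girth-$5$ setting the fix is to concentrate over vertices rather than over colors, as the paper does with $Y_x^S$ in the proof of Proposition~\ref{prop:g5-h-resilient}.
\begin{itemize}
\item Fix $\mathcal V_i^1$ outside $M_i(u)$, and outside $O_i(u,c)$ only for $c\in P_{i-1}(u)$, where these events are genuinely rare. Condition also on $X_i^2(u)$.
\item For $x\in N_{i-1}(u)\setminus\Phi_j$ with $C_x\neq\emptyset$, let $Z_x$ indicate that some $y\in(S\setminus\{u\})\cap N_{i-1,\sigma(x)}(x)$ selects $\sigma(x)$ in the second sample.
\item Distinct $x,x'\in N_{i-1}(u)$ have only $u$ as a common neighbor, so the $Z_x$ are independent.
\item Each $Z_x$ has probability at most $\pi_i\cdot 100t_{i-1}'/r\le 100/(Kr)$ by color balance at $x$. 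This is legitimate because $\sigma(x)\in P_{i-1}(x)$.
\item Chernoff then gives $\sum_x Z_x\le 100d_{i-1}'/r$ except with probability $\exp(-\Omega(d_{i-1}'/r))$.
\item Every other $x\in W_{i,j}^S(u)$ has $\sigma(x)\in\mathrm{Sel}_i^2(u)\subseteq P_{i-1}(u)$, so there are at most $|\mathrm{Sel}_i^2(u)|(\log\Delta)^8$ of them. Since $\pi_i|P_{i-1}(u)|\le 1/K$, this is polylogarithmic except with probability $\Delta^{-\omega(1)}$, and the $\Delta/(r(\log\Delta)^8)$ slack absorbs it.
\end{itemize}

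Two smaller points on your $M_i(u)$ tail bound. First, do not argue that $u$'s selection has negligible influence. Instead condition on all variables outside $N_{i-1}(u)$ (including $X_i(u)$) and on the $K_i(x)$ for $x\in N_{i-1}(u)$, as in Lemma~\ref{lem:appendix-degree-tail}. Triangle-freeness alone then makes the indicators independent, with no appeal to girth $5$. Second, the Chernoff exponent is $\Omega(\delta^2\alpha_i d_{i-1}')$. It is large because $\alpha_i\ge e^{-2/K}$ in this schedule (as $p_i'\le t_{i-1}'$), not merely because $d_{i-1}'>d_{\mathrm{stop}}$.
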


\begin{proof}[Proof of Lemma~\ref{lem:g5-round}]
Assume \(F_{i-1}\) holds, \(d_{i-1}'>d_{\mathrm{stop}}\), and no bad event of \(\mathcal B_i\) occurs. We verify the three clauses of \(F_i\).

First, \(Q_i^+(u)\) does not occur, so
\[
|\widehat P_i(u)|
\ge
p_{i-1}'\beta_i\Bigl(1-\frac{\delta}{8}\Bigr)\Bigl(1-\frac{200}{r}\Bigr).
\]
By the definition \(p_i'=(1-\delta/7)^ip_i\) and \(p_i=\beta_i p_{i-1}\), the right-hand side is at least \(p_i'\) for \(\Delta\ge \Delta_\eps\), because \(200/r=o(\delta)\). Therefore the truncation step in Algorithm~\ref{alg:g5-framework} is well-defined, and after truncation \(|P_i(u)|=p_i'\).

Second, let \(c\in P_i(u)\). 
Because \(H_i^+(u,c)\) does not occur, we obtain
\begin{equation}
\label{eq:g5-round-c-degree-threshold}
|N_{i,c}(u)|
\le
t_{i-1}'\left((1+\delta)\alpha_i\beta_i'+\frac{300}{r}\right)+\delta T.
\end{equation}
To compare the right-hand side of \eqref{eq:g5-round-c-degree-threshold} with \(t_i'\), note that \(t_{i-1}'=(1+2\delta)^{i-1}t_{i-1}\) and \(t_i'=(1+2\delta)^it_i\). Also
\(\pi_i=(Kt_{i-1}'+1)^{-1}\le (KT)^{-1}=o(\delta)\). Hence, for \(\Delta\ge \Delta_\eps\), we have \(\pi_i\le 1/2\),
\[
\beta_i'=\frac{\beta_i}{1-\pi_i}\le \beta_i e^{2\pi_i}=\beta_i(1+o(\delta))\]
Moreover,
\[
\beta_i=(1-\pi_i)^{t_{i-1}'}
\ge
e^{-2\pi_i t_{i-1}'}
\ge
e^{-2/K}
\qquad\text{and}\qquad
\alpha_i=(1-\pi_i)^{p_i'}
\ge
e^{-2\pi_i p_i'}
\ge
e^{-2/K},
\]
because \(t_{i-1}'\pi_i\le 1/K\) and \(p_i'\le t_{i-1}'\). Hence \(\alpha_i\beta_i'\ge \alpha_i\beta_i\ge e^{-4/K}\).
Therefore,
\begin{align*}
|N_{i,c}(u)|
&\le
t_{i-1}'\left((1+\delta)\alpha_i\beta_i'+\frac{300}{r}\right)+\delta T,
&& \text{by \eqref{eq:g5-round-c-degree-threshold},}\\
&\le
(1+2\delta)^{i-1}\left((1+\delta)\alpha_i\beta_i' t_{i-1}+\frac{300t_{i-1}}{r}\right)+\delta T,
&& \text{using } t_{i-1}'=(1+2\delta)^{i-1}t_{i-1},\\
&\le
(1+2\delta)^{i-1}\bigl((1+2\delta)\alpha_i\beta_i' t_{i-1}+\delta T\bigr),
&& \text{since } \alpha_i\beta_i'\ge e^{-4/K} \\
& &&\text{and } 300r^{-1}=o(\delta),\\
&\le
(1+2\delta)^it_i,
&& \text{because } t_i\ge \alpha_i\beta_i' t_{i-1},\ T,\\
&=
t_i',
&& \text{by definition of } t_i'.
\end{align*}

Finally, we consider the upper bound of \(|N_i(u)|\). Since \(M_i^+(u)\) does not occur, we have
\begin{align*}
|N_i(u)|
&\le
\alpha_i\Bigl(1+\frac{\delta}{3}\Bigr)d_{i-1}'
+\frac{200}{r}d_{i-1}'
+\frac{\Delta}{r(\log\Delta)^8},\\
&\le
\alpha_i\Bigl(1+\frac{\delta}{3}\Bigr)d_{i-1}'
+o(\delta\alpha_i d_{i-1}'),\\
&\le
(1+\delta)\alpha_i d_{i-1}',\\
&\le
d_i'.
\end{align*}

All three defining properties of \(F_i(u)\) hold for every active vertex \(u\), so \(F_i\) holds.
\end{proof}

\begin{proof}[Proof of Lemma~\ref{lem:g5-parameter}]
Let \(s_i=t_i/p_i\), and let \(\tau=\min\{i\ge 0:p_i\ge t_i\}\).
Since \(t_i\ge p_i\) for every \(i\) by definition of \(t_i\), this is equivalently
\(\tau=\min\{i\ge 0:t_i=p_i\}\).
Exactly as in the proof of Theorem~\ref{thm:executed-round-bounds} from Section~\ref{sec:triangle-free-numerical-proof}, 
the same recurrence for \(s_i\) yields
\(\tau=O(k+\log\log\Delta)\),
and, for every fixed constant \(A\) and every \(j\in[\tau,\tau+A\log\log\Delta]\),
\begin{equation}
\label{eq:g5-post-tau-p-lower}
p_j\ge \Delta^{\eps_2-o(1)}>T.
\end{equation}
We do not repeat the same analysis here. 

We now analyze the rounds after \(\tau\). By definition of \(\tau\), we have \(t_\tau=p_\tau\). We claim that
\(t_j=p_j\) for every \(j\in[\tau,\tau+A\log\log\Delta]\).
The base case \(j=\tau\) is immediate. If \(\tau\le j<\tau+A\log\log\Delta\) and \(t_j=p_j\), then \eqref{eq:g5-post-tau-p-lower} gives \(p_{j+1}>T\), and therefore
\[
t_{j+1}
=\max\{\alpha_{j+1}\beta_{j+1}t_j,T,p_{j+1}\}
=\max\{\alpha_{j+1}p_{j+1},T,p_{j+1}\}
=p_{j+1},
\]
since \(\alpha_{j+1}\le 1\). This proves the claim.

Now fix \(i\) with \(\tau+1\le i\le\tau+A\log\log\Delta\). Since \(i=O(k+\log\log\Delta)=O(\log\Delta)\), we have \(p_i'=(1-o(1))p_i\) and \(t_{i-1}'=(1+o(1))t_{i-1}\). Using \(t_{i-1}=p_{i-1}\), and writing \(\pi_i=(Kt_{i-1}'+1)^{-1}\), we get \(\pi_i=o(1)\) and therefore
\[
t_{i-1}'\log(1-\pi_i)
=-t_{i-1}'\pi_i+O(t_{i-1}'\pi_i^2)
=-\frac{t_{i-1}'}{Kt_{i-1}'+1}+o(1)
=-\frac{1}{K}+o(1).
\]
Therefore
\[
\frac{p_i}{t_{i-1}}
=\beta_i
=(1-\pi_i)^{t_{i-1}'}
=\exp\!\left(-\frac{1}{K}+o(1)\right).
\]
Consequently,
\[
\alpha_i
=(1-\pi_i)^{p_i'}
\le \exp\!\left(-\frac{(1-o(1))p_i}{Kt_{i-1}}\right)
\le \exp\!\left(-\frac{e^{-1/K}-o(1)}{K}\right).
\]
Therefore, there is a constant \(c:=e^{-1/K}/(2K)>0\) such that \(\alpha_i\le e^{-c}\) throughout this interval. Since also \((1+\delta)e^{-c}\le e^{-c/2}\) for \(\Delta\ge \Delta_\eps\), we get
\[
d_i'
=\max\!\left\{(1+\delta)\alpha_i d_{i-1}',d_{\mathrm{stop}}\right\}
\le \max\!\left\{e^{-c/2}d_{i-1}',d_{\mathrm{stop}}\right\}.
\]
for every \(\tau+1\le i\le\tau+A\log\log\Delta\). Consequently, until the sequence first reaches \(d_{\mathrm{stop}}\), it shrinks geometrically by the factor \(e^{-c/2}\). Since \(\alpha_j\le 1\) for every \(j\), the recurrence for \(d_j'\) gives \(d_j'\le (1+\delta)^j\Delta\) for all \(j\ge 0\). Because \(\tau=O(k+\log\log\Delta)=O(\log\Delta)\) and \(\delta=\log^{-4}\Delta\), we have \(d_\tau'\le (1+\delta)^\tau\Delta=(1+o(1))\Delta\). Therefore
\[
d_{\tau+s}'\le (1+o(1))e^{-cs/2}\Delta
\]
for every \(s\ge 0\) for which the rounds \(\tau+1,\dots,\tau+s\) are still executed. Choosing \(A>14/c\), we obtain
\[
d_{\tau+A\log\log\Delta}'
\le \frac{\Delta}{(\log\Delta)^7}
< \frac{\Delta}{(\log\Delta)^6}
=d_{\mathrm{stop}}.
\]
Let \(\ell\) be the first round with \(d_\ell'\le d_{\mathrm{stop}}\). Then \(\ell\le \tau+A\log\log\Delta=O(k+\log\log\Delta)\), and \eqref{eq:g5-post-tau-p-lower} implies \(p_j>T\) for every \(j\in[\tau,\ell]\). Since \(p_j=\beta_jp_{j-1}\) and \(0<\beta_j\le 1\) for every \(j\), the sequence \((p_j)_{j\ge 0}\) is nonincreasing. Therefore \(p_j\ge p_\ell>T\) for every \(1\le j\le \ell\). This proves the lemma.
\end{proof}

\subsection{Proofs of the Resilience Propositions}
\label{sec:g5-resilience}

Throughout this subsection, fix a round \(i\), assume \(F_{i-1}\) holds and \(d_{i-1}'>d_{\mathrm{stop}}\), and let \(\Phi=(\Phi_1,\ldots,\Phi_r)\) be an \(r\)-partition with parameters \(\min\{d_{i-1}',\Delta\}\) and \(t_{i-1}'\) in the sense of Definition~\ref{def:r-partition-degree-params} from Section~\ref{sec:color-balanced-partition}. Thus, for every active vertex \(u\), every color \(c\in P_{i-1}(u)\), and every part \(\Phi_j\),
\[
|N_{i-1}(u)\cap \Phi_j|
\le \frac{100\min\{d_{i-1}',\Delta\}}{r}
\le \frac{100d_{i-1}'}{r}
\qquad\text{and}\qquad
|N_{i-1,c}(u)\cap \Phi_j|
\le \frac{100t_{i-1}'}{r}.
\]
Let \(\mathcal V_i^1\) and \(\mathcal V_i^2\) be independent copies of the round-\(i\) variables, and for \(S\subseteq \Phi_j\) let \(Z_i^S\) denote the mixed-sample version of any round-\(i\) random object \(Z_i\), as in Section~\ref{sec:resilient-lll}. We write \(d\) for the dependency degree of the round-\(i\) event graph, so \(d\le \Delta^4\).

The proofs of Propositions~\ref{prop:g5-q-resilient} and~\ref{prop:g5-m-resilient}
follow the proofs of Propositions~\ref{prop:q-resilient}
and~\ref{prop:m-resilient} from Section~\ref{sec:triangle-free-lemma-proofs},
with the necessary changes in the constants involved. We therefore focus only
on Proposition~\ref{prop:g5-h-resilient}, which is the only genuinely new point
in the girth-$5$ case.
Fix once and for all a signature map \(\sigma\) that chooses one color from each
nonempty set of colors.
\begin{proof}[Proof of Proposition~\ref{prop:g5-h-resilient}]
Fix \(u\), \(c\in P_{i-1}(u)\), a part \(\Phi_j\), and \(S\subseteq \Phi_j\).
For this proof only, let
\[
\widehat N_{i,c}(u):=
\{x\in N_{i-1,c}(u)\cap U_i:c\in \widehat P_i(x)\}.
\]
Since every vertex of \(N_{i,c}(u)\) lies in \(N_{i-1,c}(u)\cap U_i\) and
satisfies \(P_i(x)\subseteq \widehat P_i(x)\), we have
\(N_{i,c}(u)\subseteq \widehat N_{i,c}(u)\). Hence it is enough to prove
resilience for
\[
  \widehat H_i^+(u,c):=
  \left\{
  |\widehat N_{i,c}(u)|
  >
  t_{i-1}'\left((1+\delta)\alpha_i\beta_i'+\frac{300}{r}\right)+\delta T
  \right\}.
\]
Let
\[
\widehat H_i(u,c):=
\left\{
|\widehat N_{i,c}(u)|
>
t_{i-1}'(1+\delta)\alpha_i\beta_i'+\delta T
\right\}.
\]
By Lemma~\ref{lem:appendix-g5-degree-tail} from
Appendix~\ref{sec:appendix-degree-tail}, we have
\(\Prob[\widehat H_i(u,c)]\le \Delta^{-200}\) for \(\Delta\ge \Delta_\eps\). Define
\[
\widetilde H_i(u,c):=
\widehat H_i(u,c)\cup
\left\{
\Prob_{\mathcal V_i^2}\left[(\widehat H_i(u,c))^{\{u\}}\right]\ge d^{-15}
\right\}.
\]
Since the mixed sample with resampling set \(\{u\}\) has the same distribution
as the original round-\(i\) sample, we have
\[
\E_{\mathcal V_i^1}
\left[
\Prob_{\mathcal V_i^2}\left[(\widehat H_i(u,c))^{\{u\}} \middle| \mathcal{V}_i^1\right]
\right]
=
\Prob[\widehat H_i(u,c)]
\le \Delta^{-200}.
\]
Hence Markov's inequality gives
\[
\Prob_{\mathcal V_i^1}\left[
\Prob_{\mathcal V_i^2}
\left[(\widehat H_i(u,c))^{\{u\}} \middle| \mathcal{V}_i^1\right]\ge d^{-15}
\right]
\le d^{15}\Delta^{-200}
\le d^{-35},
\]
because \(d\le \Delta^4\). Therefore
\[
\Prob_{\mathcal V_i^1}[\widetilde H_i(u,c)]
\le \Prob[\widehat H_i(u,c)] + d^{-35}
\le 2d^{-35}.
\]

For any round-\(i\) random object \(Z_i\), let \(Z_i^\star\) denote the sample
obtained by first resampling \(u\) if \(u\in S\), and otherwise leaving the
first sample unchanged. Thus \(Z_i^\star\) is either the first sample itself or
the mixed sample in which only \(u\) is resampled. 
If \(u\in S\), fix an arbitrary realization \(\xi_u\) of \(X_i^2(u)\); if
\(u\notin S\), no additional conditioning is needed. We compare
\(\widehat N_{i,c}^S(u)\) with \(\widehat N_{i,c}^\star(u)\). Every vertex in
\(\widehat N_{i,c}^S(u)\setminus \widehat N_{i,c}^\star(u)\) is produced by one
of the following three effects.

The first effect is the direct contribution of the resampled block itself:
\[
\Delta_1^S:=\Phi_j\cap \widehat N_{i,c}^S(u).
\]
By the definition of an $r$-partition,
\[
|\Delta_1^S|
\le |\Phi_j\cap N_{i-1,c}(u)|
\le \frac{100t_{i-1}'}{r}.
\]
The second effect is that a vertex \(x\in N_{i-1,c}(u)\setminus \Phi_j\) may
satisfy \(c\in \widehat P_i^S(x)\setminus \widehat P_i^\star(x)\). Define
\[
\Delta_2^S:=
\left\{
x\in \widehat N_{i,c}^S(u)\setminus \widehat N_{i,c}^\star(u):
x\notin \Phi_j,\ c\notin \widehat P_i^\star(x)
\right\},
\]
and
\[
A_{i,j}(u,c):=
\left\{
x\in N_{i-1,c}(u)\setminus \Phi_j:
c\in \mathrm{Sel}_i^1(y)\text{ for some }y\in
(\Phi_j\setminus\{u\})\cap N_{i-1,c}(x)
\right\}.
\]
Then \(\Delta_2^S\subseteq A_{i,j}(u,c)\), because the \(\star\)-sample and the
\(S\)-sample agree outside \(S\setminus\{u\}\subseteq \Phi_j\setminus\{u\}\), and therefore
the only way for a vertex \(x\notin \Phi_j\) to satisfy
\(c\in \widehat P_i^S(x)\setminus \widehat P_i^\star(x)\) is that some vertex
of \((\Phi_j\setminus\{u\})\cap N_{i-1,c}(x)\) selected \(c\) in the
\(\star\)-sample.

For each fixed \(x\in N_{i-1,c}(u)\setminus \Phi_j\),
\[
\Prob_{\mathcal V_i^1}[x\in A_{i,j}(u,c)]
\le
\pi_i\cdot |(\Phi_j\setminus\{u\})\cap N_{i-1,c}(x)|
\le
\frac{100}{Kr}.
\]
Moreover, if \(x\neq x'\) are distinct
vertices in \(N_{i-1,c}(u)\setminus \Phi_j\), then
\[
\bigl(N_{i-1,c}(x)\cap(\Phi_j\setminus\{u\})\bigr)
\cap
\bigl(N_{i-1,c}(x')\cap(\Phi_j\setminus\{u\})\bigr)
=\emptyset,
\]
for otherwise some vertex \(z\in \Phi_j\setminus\{u\}\) adjacent to both \(x\)
and \(x'\) would create a cycle of length 4 on the vertices \(\{u,x,z,x'\}\). Hence the indicators of
the events \(x\in A_{i,j}(u,c)\) are independent. 
Let
\[
J_{i,j}(u,c):=
\left\{
  \left|A_{i,j}(u,c)\right|>\frac{100t_{i-1}'}{r}
\right\}.
\]
Since \(|N_{i-1,c}(u)|\le t_{i-1}'\), the random variable \(|A_{i,j}(u,c)|\) is
a sum of independent Bernoulli variables with mean
\(\mu_A:=\E_{\mathcal V_i^1}[|A_{i,j}(u,c)|]\le 100t_{i-1}'/(Kr)\). Hence
\(100t_{i-1}'/r\ge K\mu_A\). 
Applying the Chernoff bound in the
form \(\Prob[X\ge a\mu]\le (e/a)^{a\mu}\) for \(a\ge 1\), with
\(X=|A_{i,j}(u,c)|\), \(\mu=\mu_A\), and \(a=K\), gives
\[
\Prob_{\mathcal V_i^1}[J_{i,j}(u,c)]
\le \left(\frac{e}{K}\right)^{100t_{i-1}'/r}
\le d^{-35}
\]
for \(\Delta\ge \Delta_\eps\), because \(t_{i-1}'\ge T=\Delta^{\eps_1}/3\),
\(r=(\log\Delta)^{20}\), and \(d\le \Delta^4\).

The third effect is that a vertex \(x\in N_{i-1,c}(u)\setminus \Phi_j\) may
fail to belong to \(\widehat N_{i,c}^\star(u)\) but belong to
\(\widehat N_{i,c}^S(u)\) because the resampling of \(S\setminus\{u\}\) causes
\(x\) to become uncolored. Define
\[
\Delta_3^S:=
\left\{
x\in \widehat N_{i,c}^S(u)\setminus \widehat N_{i,c}^\star(u):
x\notin \Phi_j\cup A_{i,j}(u,c)
\right\}.
\]
For each \(x\in N_{i-1,c}(u)\setminus \Phi_j\), let
\[
L_x:=
\mathrm{Sel}_i^\star(x)\cap
\Bigl(
K_i^\star(x)\setminus
\mathrm{Sel}_i^\star\bigl((N_{i-1}(x)\setminus \Phi_j)\cup\{u\}\bigr)
\Bigr).
\]
Thus \(L_x\) is the set of colors selected by \(x\) that remain available after
deleting the conflicts created by the neighbors of \(x\) in
\((N_{i-1}(x)\setminus \Phi_j)\cup\{u\}\). Equivalently, the only vertices that
can still delete a color from \(L_x\) are the vertices of
\((\Phi_j\setminus\{u\})\cap N_{i-1}(x)\). Write \(\sigma(x):=\sigma(L_x)\)
whenever \(L_x\neq\emptyset\). If
\(L_x=\emptyset\), set \(Y_x^S:=0\); otherwise, let \(Y_x^S\) be the indicator
of the event that some vertex of \((S\setminus\{u\})\cap N_{i-1,\sigma(x)}(x)\) selects
\(\sigma(x)\) in the second sample; that is,
\[
Y_x^S=
\mathbf 1\left[\sigma(x)\in \mathrm{Sel}_i^S(y)\text{ for some }y\in (S\setminus\{u\})\cap N_{i-1,\sigma(x)}(x)\right].
\]

If \(x\in \Delta_3^S\), then \(x\notin A_{i,j}(u,c)\), so \(c\in \widehat
P_i^\star(x)\). Since also \(x\notin \widehat N_{i,c}^\star(u)\), it follows
that \(x\) is colored in the \(\star\)-sample. Hence
\(\mathrm{Sel}_i^\star(x)\cap \widehat P_i^\star(x)\neq\emptyset\), and every
color in this set belongs to \(L_x\); in particular, \(L_x\neq\emptyset\).
On the other hand, \(x\in \widehat N_{i,c}^S(u)\), so \(x\) is uncolored in the
\(S\)-sample. Therefore the color \(\sigma(x)\in L_x\) is deleted in the
\(S\)-sample, which means that
\(\sigma(x)\in \mathrm{Sel}_i^S(y)\) for some
\(y\in (S\setminus\{u\})\cap N_{i-1,\sigma(x)}(x)\). Thus \(Y_x^S=1\), and
hence
\[
\Delta_3^S\subseteq \{x:Y_x^S=1\}.
\]

For each \(x\) with \(L_x\neq\emptyset\), we have
\[
\Prob_{\mathcal V_i^2}\left[
Y_x^S=1
\mid
\mathcal V_i^1,\ X_i^2(u)=\xi_u
\right]
\le
\pi_i\cdot |(S\setminus\{u\})\cap N_{i-1,\sigma(x)}(x)|
\le
\pi_i\cdot \frac{100t_{i-1}'}{r}
\le
\frac{100}{Kr}.
\]
Also, since $G$ is girth-5, the indicators \(Y_x^S\) are independent. Let
\[
Y^S:=\sum_{x\in N_{i-1,c}(u)\setminus \Phi_j} Y_x^S.
\]
Then
\(\E_{\mathcal V_i^2}[Y^S\mid \mathcal V_i^1,\ X_i^2(u)=\xi_u]
\le 100t_{i-1}'/(Kr)\). Writing
\(\mu_Y:=\E_{\mathcal V_i^2}[Y^S\mid \mathcal V_i^1,\ X_i^2(u)=\xi_u]\), we
therefore have \(100t_{i-1}'/r\ge K\mu_Y\). Applying the same
Chernoff bound with \(X=Y^S\), \(\mu=\mu_Y\), and \(a=K\), we obtain
\[
\Prob_{\mathcal V_i^2}\left[
Y^S>\frac{100t_{i-1}'}{r}
\;\middle|\;
\mathcal V_i^1,\ X_i^2(u)=\xi_u
\right]
\le \left(\frac{e}{K}\right)^{100t_{i-1}'/r}
\le d^{-15}
\]
again because
\(t_{i-1}'\ge T=\Delta^{\eps_1}/3\), \(r=(\log\Delta)^{20}\), and
\(d\le \Delta^4\).

Now assume that \(\widetilde H_i(u,c)\) and \(J_{i,j}(u,c)\) do not occur.
Then \(|\Delta_2^S|\le |A_{i,j}(u,c)|\le 100t_{i-1}'/r\). Also,
\[
\Prob_{\mathcal V_i^2}\left[
(\widehat H_i(u,c))^\star
\mid
\mathcal V_i^1,\neg\widetilde H_i(u,c)
\right]
<d^{-15}.
\]
On the event
\[
\neg(\widehat H_i(u,c))^\star
\cap
\left\{Y^S\le \frac{100t_{i-1}'}{r}\right\},
\]
the three effects give
\begin{align*}
|\widehat N_{i,c}^S(u)|
&\le
|\widehat N_{i,c}^\star(u)|+|\Delta_1^S|+|\Delta_2^S|+|\Delta_3^S| \\
&\le
t_{i-1}'(1+\delta)\alpha_i\beta_i'+\delta T
+ \frac{100t_{i-1}'}{r}
+ \frac{100t_{i-1}'}{r}
+ \frac{100t_{i-1}'}{r} \\
&=
t_{i-1}'\left((1+\delta)\alpha_i\beta_i'+\frac{300}{r}\right)+\delta T.
\end{align*}
Therefore \((\widehat H_i^+(u,c))^S\) can occur only if either
\((\widehat H_i(u,c))^\star\) occurs or \(Y^S>100t_{i-1}'/r\). Since
\(\xi_u\) was arbitrary, averaging over \(X_i^2(u)\) when \(u\in S\) and taking
a union bound
gives
\[
\Prob_{\mathcal V_i^2}\left[
(\widehat H_i^+(u,c))^S
\mid
\mathcal V_i^1,\ \neg\widetilde H_i(u,c),\ \neg J_{i,j}(u,c)
\right]
\le 2d^{-15}<d^{-10}.
\]
Finally,
\[
\Prob_{\mathcal V_i^1}\left[
\widetilde H_i(u,c)\cup J_{i,j}(u,c)
\right]
\le 3d^{-35}<d^{-30}.
\]
Thus \(\widehat H_i^+(u,c)\), and hence \(H_i^+(u,c)\), is resilient with
respect to \(\Phi\).
\end{proof}

\section{Conclusions}

We have shown improved distributed algorithms for coloring triangle-free graphs, by improving the distributed Lovász Local Lemma steps that were previously the bottleneck in the algorithm of Pettie and Su \cite{PettieSu15}. This was achieved by adapting the algorithm and analysis of Pettie and Su to fit the LLL calls into the resilient partition framework of \cite{Davies23}. The LLL instances we solved here are arguably the most complex and difficult such instances yet to be solved in $\log^{O(1)}\log n$ rounds of \textsc{LOCAL}, and mark another step towards efficient distributed algorithms for the general LLL. For future work there are several interesting open questions:

\subsection{General LLL.}
Despite recent progress, an exponential gap still remains in the complexity of the general distributed LLL. Can a $\log^{O(1)}\log n$ round algorithm that solves all LLL instances (perhaps with polynomially weakened criteria, as in \cite{FischerGhaffari17,Davies23}, or can the $\Omega(\log\log n)$-round lower bound of \cite{BrandtEtAl16} be improved?

\subsection{Special cases of the LLL.}
If no efficient general LLL algorithm can be found, can we further identify applications with efficiently-solvable LLL instances, such as in other vertex coloring or degree splitting problems? In particular, this work indicates that LLL instances in which the bad events are failure to satisfy some concentration bounds, and in which we have some slack in the amount of allowable error, are prime candidates for solving quickly using \cite{Davies23}.

\subsection{Triangle-free coloring.}
Molloy \cite{Molloy19} showed that the constant in the number of colors needed for coloring triangle-free coloring can be improved from $4$ to $1$, but no (nontrivial) distributed algorithm yet exists to match this bound. Can such an algorithm be devised, and if so (since it will almost certinly involve LLL calls as in \cite{Molloy19}) can the LLL instances therein be efficiently solved, similarly to this work?

\bibliographystyle{plainurl}
\bibliography{refs}

@article{PettieSu15,
  author = {Seth Pettie and Hsin-Hao Su},
  title = {Distributed coloring algorithms for triangle-free graphs},
  journal = {Information and Computation},
  volume = {243},
  pages = {263--280},
  year = {2015},
  doi = {10.1016/j.ic.2014.12.018}
}

@misc{JamallColoring11,
  author = {Mohammad Shoaib Jamall},
  title = {A Coloring Algorithm for Triangle-Free Graphs},
  year = {2011},
  note = {Preprint}
}

@article{AjtaiKomlosSzemeredi80,
  author = {Mikl{\'o}s Ajtai and J{\'a}nos Koml{\'o}s and Endre Szemer{\'e}di},
  title = {A Note on Ramsey Numbers},
  journal = {Journal of Combinatorial Theory, Series A},
  volume = {29},
  number = {3},
  pages = {354--360},
  year = {1980},
  doi = {10.1016/0097-3165(80)90030-8}
}

@article{CHLPU19,
  title={Distributed edge coloring and a special case of the constructive Lov{\'a}sz local lemma},
  author={Chang, Yi-Jun and He, Qizheng and Li, Wenzheng and Pettie, Seth and Uitto, Jara},
  journal={ACM Transactions on Algorithms (TALG)},
  volume={16},
  number={1},
  pages={1--51},
  year={2019},
  publisher={ACM New York, NY, USA}
}

@inproceedings{Davies23,
  author = {Peter Davies},
  title = {Improved Distributed Algorithms for the Lov{\'a}sz Local Lemma and Edge Coloring},
  booktitle = {Proceedings of the 2023 Annual ACM-SIAM Symposium on Discrete Algorithms (SODA)},
  pages = {4273--4295},
  year = {2023},
  doi = {10.1137/1.9781611977554.ch163}
}

@inproceedings{Davies-Peck25,
  title={On the Locality of the Lov{\'a}sz Local Lemma},
  author={Davies-Peck, Peter},
  booktitle={Proceedings of the 57th Annual ACM Symposium on Theory of Computing},
  pages={1271--1282},
  year={2025}
}

@incollection{Kostochka78,
  author = {A. V. Kostochka},
  title = {Degree, girth and chromatic number},
  booktitle = {Combinatorics (Proc. Fifth Hungarian Colloq., Keszthely, 1976), Vol. II},
  series = {Colloq. Math. Soc. J{\'a}nos Bolyai},
  volume = {18},
  pages = {679--696},
  publisher = {North-Holland},
  year = {1978}
}

@article{Bollobas78,
  author = {B{\'e}la Bollob{\'a}s},
  title = {Chromatic number, girth and maximal degree},
  journal = {Discrete Mathematics},
  volume = {24},
  number = {3},
  pages = {311--314},
  year = {1978},
  doi = {10.1016/0012-365X(78)90102-4}
}

@article{GrablePanconesi00,
  author = {David A. Grable and Alessandro Panconesi},
  title = {Fast distributed algorithms for {Brooks-Vizing} colorings},
  journal = {Journal of Algorithms},
  volume = {37},
  number = {1},
  pages = {85--120},
  year = {2000}
}

@article{ChungPettieSu17,
  title={Distributed algorithms for the Lov{\'a}sz local lemma and graph coloring},
  author={Chung, Kai-Min and Pettie, Seth and Su, Hsin-Hao},
  journal={Distrib. Comput},
  volume={30},
  pages={261--280},
  year={2017}
}

@inproceedings{FischerGhaffari17,
  author = {Manuela Fischer and Mohsen Ghaffari},
  title = {Sublogarithmic Distributed Algorithms for Lov{\'a}sz Local Lemma, and the Complexity Hierarchy},
  booktitle = {31st International Symposium on Distributed Computing (DISC 2017)},
  series = {LIPIcs},
  volume = {91},
  pages = {18:1--18:16},
  year = {2017},
  doi = {10.4230/LIPIcs.DISC.2017.18}
}

@inproceedings{GhaffariHarrisKuhn18,
  author = {Mohsen Ghaffari and David G. Harris and Fabian Kuhn},
  title = {On Derandomizing Local Distributed Algorithms},
  booktitle = {2018 IEEE 59th Annual Symposium on Foundations of Computer Science (FOCS)},
  pages = {662--673},
  year = {2018},
  doi = {10.1109/FOCS.2018.00069}
}

@inproceedings{RozhonGhaffari20,
  author = {V{\'a}clav Rozho{\v{n}} and Mohsen Ghaffari},
  title = {Polylogarithmic-Time Deterministic Network Decomposition and Distributed Derandomization},
  booktitle = {Proceedings of the 52nd Annual ACM SIGACT Symposium on Theory of Computing (STOC)},
  pages = {350--363},
  year = {2020},
  doi = {10.1145/3357713.3384298}
}

@inproceedings{GhaffariGrunauRozhon21,
  author = {Mohsen Ghaffari and Christoph Grunau and V{\'a}clav Rozho{\v{n}}},
  title = {Improved Deterministic Network Decomposition},
  booktitle = {Proceedings of the 2021 ACM-SIAM Symposium on Discrete Algorithms (SODA)},
  pages = {2211--2231},
  year = {2021},
  doi = {10.1137/1.9781611976465.131}
}

@article{ChangPettie19,
  author = {Yi-Jun Chang and Seth Pettie},
  title = {A Time Hierarchy Theorem for the {LOCAL} Model},
  journal = {SIAM Journal on Computing},
  volume = {48},
  number = {1},
  pages = {33--69},
  year = {2019},
  doi = {10.1137/17M1157957}
}

@article{Molloy19,
  author = {Michael Molloy},
  title = {The list chromatic number of graphs with small clique number},
  journal = {Journal of Combinatorial Theory, Series B},
  volume = {134},
  pages = {264--284},
  year = {2019},
  doi = {10.1016/j.jctb.2018.06.007}
}

@article{Bernshteyn19,
  author = {Anton Bernshteyn},
  title = {The Johansson-Molloy theorem for DP-coloring},
  journal = {Random Structures \& Algorithms},
  volume = {54},
  number = {4},
  pages = {653--664},
  year = {2019},
  doi = {10.1002/rsa.20811}
}

@article{Linial92,
  author = {Nathan Linial},
  title = {Locality in Distributed Graph Algorithms},
  journal = {SIAM Journal on Computing},
  volume = {21},
  number = {1},
  pages = {193--201},
  year = {1992},
  doi = {10.1137/0221015}
}

@inproceedings{BrandtEtAl16,
  author = {Sebastian Brandt and Orr Fischer and Juho Hirvonen and Barbara Keller and Tuomo Lempi{\"a}inen and Joel Rybicki and Jukka Suomela and Jara Uitto},
  title = {A Lower Bound for the Distributed Lov{\'a}sz Local Lemma},
  booktitle = {Proceedings of the 48th Annual ACM SIGACT Symposium on Theory of Computing},
  pages = {479--488},
  year = {2016},
  doi = {10.1145/2897518.2897570}
}

@article{Johansson99,
  author = {{\"O}jvind Johansson},
  title = {Simple distributed {$\Delta +1$-coloring} of graphs},
  journal = {Information Processing Letters},
  volume = {70},
  number = {5},
  pages = {229--232},
  year = {1999}
}

@inproceedings{BarenboimEPS12,
  author = {Leonid Barenboim and Michael Elkin and Seth Pettie and Johannes Schneider},
  title = {The Locality of Distributed Symmetry Breaking},
  booktitle = {2012 IEEE 53rd Annual Symposium on Foundations of Computer Science (FOCS)},
  pages = {321--330},
  year = {2012},
  doi = {10.1109/FOCS.2012.60}
}

@article{BarenboimElkinKuhn14,
  author = {Leonid Barenboim and Michael Elkin and Fabian Kuhn},
  title = {Distributed {$(\Delta+1)$}-Coloring in Linear (in {$\Delta$}) Time},
  journal = {SIAM Journal on Computing},
  volume = {43},
  number = {1},
  pages = {72--95},
  year = {2014},
  doi = {10.1137/12088848X}
}

@article{HarrisSchneiderSu18,
  author = {David G. Harris and Johannes Schneider and Hsin-Hao Su},
  title = {Distributed {$(\Delta+1)$}-Coloring in Sublogarithmic Rounds},
  journal = {Journal of the ACM},
  volume = {65},
  number = {4},
  pages = {19:1--19:21},
  year = {2018},
  doi = {10.1145/3178120}
}

@article{ChangLiPettie20,
  author = {Yi-Jun Chang and Wenzheng Li and Seth Pettie},
  title = {Distributed {$(\Delta+1)$}-Coloring via Ultrafast Graph Shattering},
  journal = {SIAM Journal on Computing},
  volume = {49},
  number = {3},
  pages = {497--539},
  year = {2020},
  doi = {10.1137/19M1249527}
}

@techreport{JohanssonTriangleFree96,
  author = {{\"O}jvind Johansson},
  title = {Asymptotic Choice Number for Triangle-Free Graphs},
  institution = {DIMACS},
  number = {91-5},
  year = {1996}
}

@article{Kim95,
  author = {Jeong Han Kim},
  title = {On Brooks' Theorem for Sparse Graphs},
  journal = {Combinatorics, Probability and Computing},
  volume = {4},
  number = {2},
  pages = {97--132},
  year = {1995},
  doi = {10.1017/S0963548300001638}
}

@article{MoserTardos10,
  author = {Robin A. Moser and G{\'a}bor Tardos},
  title = {A constructive proof of the general Lov{\'a}sz local lemma},
  journal = {Journal of the ACM},
  volume = {57},
  number = {2},
  pages = {11:1--11:15},
  year = {2010},
  doi = {10.1145/1667053.1667060}
}

@inproceedings{HKNT22,
  author = {Magn{\'u}s M. Halld{\'o}rsson and Fabian Kuhn and Alexandre Nolin and Tigran Tonoyan},
  title = {Near-optimal distributed degree+1 coloring},
  booktitle = {Proceedings of the 54th Annual ACM SIGACT Symposium on Theory of Computing (STOC)},
  pages = {450--463},
  year = {2022},
  doi = {10.1145/3519935.3520023}
}

@inproceedings{HMN22,
  title={Fast Distributed Vertex Splitting with Applications},
  author={Halld{\'o}rsson, Magn{\'u}s M and Maus, Yannic and Nolin, Alexandre},
  booktitle={36th International Symposium on Distributed Computing (DISC 2022)},
  pages={26--1},
  year={2022},
  organization={Schloss Dagstuhl--Leibniz-Zentrum f{\"u}r Informatik}
}

@inproceedings{HM24,
  title={Distributed delta-coloring under bandwidth limitations},
  author={Halld{\'o}rsson, Magn{\'u}s M and Maus, Yannic},
  booktitle={38th International Symposium on Distributed Computing (DISC 2024)},
  pages={31--1},
  year={2024},
  organization={Schloss Dagstuhl--Leibniz-Zentrum f{\"u}r Informatik}
}

@INPROCEEDINGS{GG24,
  author={Ghaffari, Mohsen and Grunau, Christoph},
  booktitle={2024 IEEE 65th Annual Symposium on Foundations of Computer Science (FOCS)}, 
  title={Near-Optimal Deterministic Network Decomposition and Ruling Set, and Improved MIS}, 
  year={2024},
  volume={},
  number={},
  pages={2148-2179},
  doi={10.1109/FOCS61266.2024.00007}}
\appendix

\section{Auxiliary One-Round Concentration Bounds}
\label{sec:appendix-degree-tail}

Throughout this appendix, after fixing $\epsilon>0$, we assume
\(\Delta\ge \Delta_{\eps}\) is large enough for all asymptotic inequalities
below.

This appendix records the one-round concentration bounds used in Section~\ref{sec:triangle-free-instance-proof} and in Section~\ref{sec:g5-resilience}.
The first lemma concerns triangle-free graphs from
Section~\ref{sec:triangle-free-resilience}, where the relevant one-round events
are \(Q_i(u)\), \(H_i(u)\), and \(M_i(u)\). The second lemma concerns the
girth-\(5\) graphs from Section~\ref{sec:g5-resilience}.

\begin{lemma}
\label{lem:appendix-degree-tail}
Assume \(G\) is triangle-free, and fix a round \(i\) of
Algorithm~\ref{alg:triangle-free}. Suppose \(F_{i-1}\) holds and
\(d_{i-1}'>d_{\mathrm{stop}}\).
Then, for every active vertex \(u\),
\[
\Prob[Q_i(u)],\ \Prob[H_i(u)],\ \Prob[M_i(u)]\le \Delta^{-300},
\]
\end{lemma}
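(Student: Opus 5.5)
We prove the three tail bounds separately. In each case we compute the expectation of the relevant quantity under one round of Algorithm~\ref{alg:select}, and then apply a concentration inequality whose error term is at most a \(\delta\)-fraction of polynomially large quantities. Throughout we condition on \(F_{i-1}\) and use Lemma~\ref{fact:palette-lower}, Theorem~\ref{thm:executed-round-bounds}, \(t_{i-1}\ge T=\Delta^{\eps_1/3}\) and \(\delta=\log^{-4}\Delta\). Together these give \(\delta^2 p_{i-1},\ \delta^2 T,\ \delta^2 d_{\mathrm{stop}}\gg \log\Delta\), which is what turns each tail into \(\Delta^{-300}\).

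\textbf{Palette bound \(Q_i(u)\).} A color \(c\in P_{i-1}(u)\) lies in \(\widetilde P_i(u)\) iff \(c\in K_i(u)\) and no neighbor in \(N_{i-1,c}(u)\) selects \(c\). Since \(|N_{i-1,c}(u)|\le 2t_{i-1}\), we have \(r_c\le 1\), and \(\Prob[c\in\widetilde P_i(u)]=r_c(1-\pi_i)^{|N_{i-1,c}(u)|}=\beta_i\) exactly. These indicators are independent across colors, because different colors use disjoint coins. Hence \(|\widetilde P_i(u)|\sim\mathrm{Bin}(|P_{i-1}(u)|,\beta_i)\), and Chernoff with relative error \(\delta/8\) and mean \(\Omega(p_{i-1})\) gives the bound.

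\textbf{Degree bound \(M_i(u)\).} Here \(|N_i(u)|=\sum_{v\in N_{i-1}(u)}\mathbf 1[v\text{ uncolored}]\). We first bound each term. Conditional on the choices outside \(v\) (namely \(K_i(v)\) and the neighbors' selections), the set \(\widetilde P_i(v)\) is fixed, and \(v\) stays uncolored with probability \((1-\pi_i)^{|\widetilde P_i(v)|}\). On \(\neg Q_i(v)\) this is at most \(\alpha_i\), exactly as in \eqref{eq: ub by alpha_i}.

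The indicators are not independent, because neighbors of \(u\) share second neighbors. We therefore use a typical-bounded-differences / Talagrand-type argument, or equivalently a martingale exposing \(\mathrm{Sel}_i(w)\) for second-neighbors \(w\) color by color. Triangle-freeness means that \(\mathrm{Sel}_i(v)\) for \(v\in N(u)\) are mutually independent given the \(S_a\)-type variables, as in the proof of Proposition~\ref{prop:r-resilient}. So we expose the \(S_a\) variables \(K_i(v)\) and \(\mathrm{Sel}_i(w)\) first. We intersect with \(\bigcap_v\neg Q_i(v)\) at cost \(\Delta\cdot\Delta^{-\Omega(\cdot)}\); this needs a version of the \(Q\)-bound with a stronger exponent, which is fine since the Chernoff exponent is \(\Omega(\delta^2 p_{i-1})\). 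After that, the events \(\{v\text{ uncolored}\}\) depend only on the independent \(\mathrm{Sel}_i(v)\), each with probability \(\le\alpha_i\). Chernoff then gives \(|N_i(u)|\le\alpha_i(1+\delta/5)d_{i-1}'\), using \(\alpha_i d_{i-1}'\ge \alpha_i d_{\mathrm{stop}}\). We check that \(\alpha_i\) is bounded below by a constant (\(e^{-O(1/K)}\)-type, from \(p_i\le O(t_{i-1})\) up to the \(\tau\) regime; otherwise add the additive slack). If \(\alpha_i\) is tiny, we instead use the additive Chernoff form, as in the \(\mu<T\) case of Proposition~\ref{prop:r-resilient}.

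\textbf{Average \(c\)-degree bound \(H_i(u)\).} This is the main obstacle. We have \(\sum_c|\widetilde N_{i,c}(u)|=\sum_{c}\sum_{x\in N_{i-1,c}(u)}\mathbf 1[c\in\widetilde P_i(u)\text{-type survival at }x]\). More precisely, the indicator for \(x\in\widetilde N_{i,c}(u)\) is \(c\in K_i(x)\) and no neighbor of \(x\) selects \(c\); its probability is exactly \(\beta_i\). The extra factor \(\beta_i\) in the threshold \(\beta_i^2|P_{i-1}(u)|\overline n_{i-1}(u)\) must come from \(u\) itself. The events \(H_i\) are used in contexts where \(u\) keeps \(c\), i.e.\ the sum is effectively over colors in \(\widetilde P_i(u)\). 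So we restrict to \(c\in \widetilde P_i(u)\), or note that \(u\)'s own \(K_i(u)\) and its neighbors' selections of \(c\) multiply in.

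By triangle-freeness the vertices \(x\in N_{i-1,c}(u)\) are pairwise nonadjacent, so the event "\(c\) avoided by \(N(u)\)" is independent of "\(x\) keeps \(c\)" except through common second neighbors. We handle this as Pettie--Su do: the expectation is \(\le\beta_i^2\sum_c|N_{i-1,c}(u)|(1+o(\delta))\). The sum decomposes by color, and different colors are independent. Each color's summand is bounded by \(2t_{i-1}\). Hoeffding over \(|P_{i-1}(u)|\ge(\frac12-o(1))p_{i-1}\) colors then gives deviation \(\frac{\delta}{4}t_{i-1}|P_{i-1}(u)|\) with probability \(\exp(-\Omega(\delta^2|P_{i-1}(u)|))\le\Delta^{-300}\), since \(p_{i-1}\ge T(\log\Delta)^6\) is polynomial in \(\Delta\).

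The delicate point I expect to be hardest is justifying the expectation bound \(\beta_i^2\) per pair \((c,x)\) despite the correlation through shared second neighbors. I would try FKG/Harris first: both "\(u\) keeps \(c\)" and "\(x\) keeps \(c\)" are decreasing in the selection coins for \(c\), so the correlation is nonnegative, which is the wrong direction for an upper bound. If that fails, I would fall back on exact computation: the two events share only \(c\)-coins of common neighbors, and triangle-freeness means \(x\) and \(u\) themselves are not common neighbors. The product of the \(r_c\) normalizations then gives exactly \(\beta_i^2\) times a factor \((1-\pi_i)^{-|N_c(u)\cap N_c(x)|}\le e^{O(\pi_i t_{i-1})}\). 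That factor has to be absorbed, possibly via the \(\delta/4\) slack after rescaling constants.
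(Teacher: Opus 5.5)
Your treatment of $Q_i(u)$ is correct and more self-contained than the paper's, which only cites Pettie--Su (Lemma~5): $|\widetilde P_i(u)|$ is exactly $\mathrm{Bin}(|P_{i-1}(u)|,\beta_i)$, and Chernoff finishes. Your $M_i(u)$ argument is essentially the paper's own. Expose $K_i(v)$ for $v\in N_{i-1}(u)$ and everything outside $N_{i-1}(u)$; triangle-freeness then makes the $\mathrm{Sel}_i(v)$ conditionally independent; and you pay $\Prob[\bigcup_v Q_i(v)]$. It also inherits the paper's weak spot. The conditional Chernoff exponent is $\Omega(\delta^2\alpha_i d_{i-1}')$, not $\Omega(\delta^2 d_{i-1}')$ as the paper writes, and nothing in the schedule bounds $\alpha_i$ below by a constant after $\tau$. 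Your ``additive Chernoff'' fallback does not help, because $M_i(u)$ has the purely multiplicative threshold $\alpha_i(1+\delta/5)d_{i-1}'$; adding slack would change the event.

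For $H_i(u)$ the paper again just cites Pettie--Su (Lemma~6). You rightly observe that each pair $(c,x)$ with $x\in N_{i-1,c}(u)$ lies in $\widetilde N_{i,c}(u)$ with probability exactly $\beta_i$, so the factor $\beta_i^2$ can only come from $u$ retaining $c$. But restricting to $c\in\widetilde P_i(u)$ is a correction of the statement, not a reading of it. With the sum over all of $P_{i-1}(u)$, as written, the mean is $\beta_i|P_{i-1}(u)|\overline n_{i-1}(u)$. This is about $e^{1/K}$ times the threshold once $\overline n_{i-1}(u)$ is a constant fraction of $t_{i-1}$ (e.g.\ round $1$, identical palettes, a vertex of degree $\Delta$). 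The colorwise Hoeffding bound then makes $H_i(u)$ hold with probability close to $1$. It is also not true that the paper only uses $H_i$ where $u$ keeps $c$: Lemma~\ref{lem:filtered-average} and the $S_a$-only dependence in the proof of Proposition~\ref{prop:r-resilient} rely on the unrestricted sum. Say explicitly that you prove the restricted event; it is the only version for which the claimed bound can hold.

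Within that version, the step you single out as hardest is immediate, and your proposed fallback would have failed. Since $x\in N_{i-1,c}(u)$ is adjacent to $u$, triangle-freeness gives $N_{i-1}(u)\cap N_{i-1}(x)=\emptyset$.
\begin{itemize}
\item The event $c\in\widetilde P_i(u)$ uses $u$'s keep-coin for $c$ and the $c$-selection coins of $N_{i-1,c}(u)$, including $x$'s.
\item The event $c\in\widetilde P_i(x)$ uses $x$'s keep-coin for $c$ and the $c$-selection coins of $N_{i-1,c}(x)$, including $u$'s.
\end{itemize}
These coin sets are disjoint, so the joint probability is exactly $\beta_i^2$, and your factor $(1-\pi_i)^{-|N_c(u)\cap N_c(x)|}$ is $1$. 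Had it been of order $e^{\Theta(\pi_i t_{i-1})}=e^{\Theta(1/K)}$, it could not have been absorbed into slack of order $\delta=\log^{-4}\Delta$. Correlations through common second neighbors couple only distinct $x,x'\in N_{i-1,c}(u)$, and they do not matter here. The summand $Y_c=\mathbf 1[c\in\widetilde P_i(u)]\,|\widetilde N_{i,c}(u)|$ depends only on color-$c$ coins and lies in $[0,2t_{i-1}]$. So Hoeffding across colors with deviation $\beta_i^2\frac{\delta}{4}t_{i-1}|P_{i-1}(u)|$ gives $\exp(-\Omega(\delta^2|P_{i-1}(u)|))\le\Delta^{-300}$, as you intended.
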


\begin{proof}
Fix an active vertex \(u\). In our notation, the palette lower-tail event
\(Q_i(u)\) is exactly the event treated in \cite[Lemma~5]{PettieSu15}, and the
averaged \(c\)-degree event \(H_i(u)\) is exactly the event treated in
\cite[Lemma~6]{PettieSu15}. Therefore
\[
\Prob[Q_i(u)],\ \Prob[H_i(u)]\le \exp\bigl(-\Omega(\delta^2p_i')\bigr).
\]

For \(M_i(u)\), let \(Z_u:=|N_i(u)|=\sum_{v\in N_{i-1}(u)} I_v\), where
\(I_v:=\mathbf 1_{\{v\in U_i\}}\), and expose
\[
\mathcal E_u:=\bigcap_{v\in N_{i-1}(u)} \neg Q_i(v),
\qquad
\mathcal G:=\sigma\bigl(X_i(x):x\notin N_{i-1}(u),\ K_i(v):v\in N_{i-1}(u)\bigr).
\]
Because \(G_{i-1}\) is triangle-free, vertices of \(N_{i-1}(u)\) are pairwise non-adjacent. Hence, for each \(v\in N_{i-1}(u)\), once \(\mathcal G\) is fixed, the indicator \(I_v\) depends only on \(X_i(v)\); therefore \(\{I_v:v\in N_{i-1}(u)\}\) are independent Bernoulli variables conditioned on \(\mathcal G\).

By the bound for \(Q_i(\cdot)\) established above and a union bound over
\(|N_{i-1}(u)|\le d_{i-1}'\le 2\Delta\), we have
\[
\Prob[\neg \mathcal E_u]\le \exp\bigl(-\Omega(\delta^2p_i')\bigr).
\]
Now fix a realization of \(\mathcal G\) for which \(\mathcal E_u\) holds. Then
for every \(v\in N_{i-1}(u)\),
\[
|\widetilde P_i(v)|\ge |P_{i-1}(v)|\beta_i\Bigl(1-\frac{\delta}{8}\Bigr).
\]
Also, Fact~\ref{fact:palette-lower} from
Section~\ref{sec:triangle-free-lemma-proofs} gives
\(|P_{i-1}(v)|\ge \left(\frac12-o(1)\right)p_{i-1}\), and
\(p_i'=(1-\delta/7)^i\beta_i p_{i-1}\) with \(i=O(\log\Delta)\). Therefore,
\[
|P_{i-1}(v)|\beta_i\Bigl(1-\frac{\delta}{8}\Bigr)
\ge
\left(1-\frac{(1+\delta)^{i-1}}{2}\right)p_i'.
\]
Since \(\widetilde P_i(v)\) is \(\mathcal G\)-measurable and
\(\mathrm{Sel}_i(v)\) is independent of \(\mathcal G\), we obtain
\[
\Prob[I_v=1\mid \mathcal G]
=
(1-\pi_i)^{|\widetilde P_i(v)|}
\le \alpha_i.
\]
Hence \(\mu_u:=\E[Z_u\mid \mathcal G,\mathcal E_u]\le \alpha_i d_{i-1}'\). Applying
the Chernoff bound conditional on \(\mathcal G\), with \(\eta=\delta/40\),
we obtain
\[
\Prob\left[M_i(u)\mid \mathcal G,\mathcal E_u\right]
\le \Prob\left[Z_u>(1+\eta)\mu_u\mid \mathcal G,\mathcal E_u\right]
\le \exp\bigl(-\Omega(\delta^2 d_{i-1}')\bigr),
\]
since \((1+\eta)\mu_u\le \alpha_i(1+\delta/5)d_{i-1}'\). Averaging over
\(\mathcal G\) and adding the failure probability of
\(\mathcal E_u\) yields
\[
\Prob[M_i(u)]
\le
\exp\bigl(-\Omega(\delta^2p_i')\bigr)+\exp\bigl(-\Omega(\delta^2 d_{i-1}')\bigr).
\]

As \(\delta=1/\log^4\Delta\),
Theorem~\ref{thm:executed-round-bounds} gives \(p_i'=\Omega(T)\) with
\(T=\Delta^{\eps_1/3}\), and \(d_{i-1}'>d_{\mathrm{stop}}=\Delta/(\log\Delta)^6\).
Hence \(\delta^2p_i'=\Omega(\Delta^{\eps_1/3}/\log^8\Delta)\) and
\(\delta^2d_{i-1}'\ge \Delta/(\log\Delta)^{14}\), so both exponents dominate
\(302\log\Delta\). 

Therefore \(\Prob[Q_i(u)]\), \(\Prob[H_i(u)]\), and
\(\Prob[M_i(u)]\le \Delta^{-300}\).
\end{proof}

\begin{lemma}
\label{lem:appendix-g5-degree-tail}
Assume \(G\) has girth at least \(5\), and fix a round \(i\) of
Algorithm~\ref{alg:g5-framework}. Suppose \(F_{i-1}\) holds and
\(d_{i-1}'>d_{\mathrm{stop}}\).
For every active vertex \(u\) and color \(c\in P_{i-1}(u)\), let
\[
\widehat N_{i,c}(u):=\{x\in N_{i-1,c}(u)\cap U_i:c\in \widehat P_i(x)\}.
\]
Then
\[
\Prob\left[
|\widehat N_{i,c}(u)|>t_{i-1}'(1+\delta)\alpha_i\beta_i'+\delta T
\right]
\le \Delta^{-200},
\]
\end{lemma}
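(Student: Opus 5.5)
The plan is a two-stage exposure argument, mirroring the treatment of \(M_i(u)\) in Lemma~\ref{lem:appendix-degree-tail}. Write \(Z:=|\widehat N_{i,c}(u)|=\sum_{x\in N_{i-1,c}(u)} I_x J_x\), where \(J_x=\mathbf 1[c\in\widehat P_i(x)]\) and \(I_x=\mathbf 1[x\in U_i]\). Because \(G\) has girth at least \(5\), two distinct \(x,x'\in N_{i-1,c}(u)\) are non-adjacent and share no neighbor other than \(u\).

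\textbf{Stage 1: the palette-retention indicators \(J_x\).} Each \(J_x\) is determined by
\begin{itemize}
\item the coin for \(c\) in \(K_i(x)\), and
\item the \(c\)-selection bits of \(N_{i-1}(x)\), which include \(u\).
\end{itemize}
Condition on \(X_i(u)\). If \(c\in\mathrm{Sel}_i(u)\), then every \(J_x=0\) and we are done. Otherwise the \(J_x\) are independent, since the sets \(N_{i-1}(x)\setminus\{u\}\) are pairwise disjoint. Moreover,
\[
\Prob[J_x=1\mid c\notin\mathrm{Sel}_i(u)]=r_c(1-\pi_i)^{|N_{i-1,c}(x)|-1}=\beta_i/(1-\pi_i)=\beta_i'.
\]
This is exactly why \(\beta_i'\) appears in the bound. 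Here \(r_c\le 1\) holds because \(|N_{i-1,c}(x)|\le t_{i-1}'\) by \(F_{i-1}\).

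\textbf{Stage 2: the survival indicators \(I_x\).} Let \(\mathcal G\) be generated by \(X_i(u)\), by all \(K_i(\cdot)\), and by all \(\mathrm{Sel}_i(w)\) with \(w\notin N_{i-1,c}(u)\). Given \(\mathcal G\), the set \(\widehat P_i(x)\) is not yet fixed, because the neighbors of \(x\) other than \(u\) may lie in \(N_{i-1,c}(u)\). Girth \(5\) rules this out: a neighbor of \(x\) in \(N_{i-1,c}(u)\) would close a triangle. So \(\widehat P_i(x)\), and hence \(J_x\), is \(\mathcal G\)-measurable. The indicator \(I_x=\mathbf 1[\mathrm{Sel}_i(x)\cap\widehat P_i(x)=\emptyset]\) then depends only on \(\mathrm{Sel}_i(x)\), and these are independent across \(x\).

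Consider the good event \(\mathcal E\) that \(|\widehat P_i(x)|\ge p_i'\) for every \(x\in N_{i-1}(u)\).
\begin{itemize}
\item The probability of \(\neg\mathcal E\) is at most \(\exp(-\Omega(\delta^2 p_i'))\). This follows from the palette concentration of \cite[Lemma~5]{PettieSu15} together with \(|P_{i-1}(x)|\ge p_{i-1}'\) and \(p_{i-1}'\beta_i(1-\delta/8)\ge p_i'\), and a union bound over neighbors.
\item On \(\mathcal E\), we have \(\Prob[I_x=1\mid\mathcal G]\le(1-\pi_i)^{p_i'}=\alpha_i\).
\end{itemize}
There is a subtlety: \(I_x\) must not be tested on color \(c\) itself in a way that correlates with \(J_x\). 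Conditioning on \(\mathcal G\) handles this, since the only remaining randomness in \(I_x\) is \(\mathrm{Sel}_i(x)\), and \(c\in\mathrm{Sel}_i(x)\) with \(c\in\widehat P_i(x)\) simply colors \(x\).

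\textbf{Combining.} Given \(\mathcal G\) on \(\mathcal E\), \(Z\) is a sum of independent Bernoulli variables with mean at most \(\alpha_i\sum_x J_x\). From Stage 1, a Chernoff bound gives \(\sum_x J_x\le(1+\delta/3)\beta_i' t_{i-1}'+\delta T/3\), except with probability \(\exp(-\Omega(\delta^2T))\). I would handle the two cases (mean above or below \(T\)) exactly as in the proof of Proposition~\ref{prop:r-resilient}, using the additive tail \(\exp(-t^2/2(\mu+t/3))\) when the mean is small. A second conditional Chernoff bound then gives
\[
Z\le(1+\delta/3)^2\alpha_i\beta_i't_{i-1}'+\tfrac{2}{3}\delta T\le t_{i-1}'(1+\delta)\alpha_i\beta_i'+\delta T,
\]
again with failure probability \(\exp(-\Omega(\delta^2 T))\). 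Since \(T=\Delta^{\eps_1}/3\), \(\delta=\log^{-4}\Delta\), and \(p_i'>T\) by Lemma~\ref{lem:g5-parameter}, every exponent exceeds \(201\ln\Delta\), so the total failure probability is at most \(\Delta^{-200}\).

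\textbf{Main obstacle.} The delicate step is the measurability and independence claim in Stage 2: the filtration \(\mathcal G\) must fix each \(\widehat P_i(x)\) while leaving the \(\mathrm{Sel}_i(x)\) free, and the \(\Prob[\neg\mathcal E]\) bound must not be conditioned in a way that breaks this. I would first verify carefully that, under girth \(5\), \(N_{i-1}(x)\) is disjoint from \(N_{i-1,c}(u)\) for every \(x\in N_{i-1,c}(u)\).
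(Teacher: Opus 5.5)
Your proof is correct. It takes a genuinely different route from the paper, whose proof of this lemma is a pure citation: it identifies $|\widehat N_{i,c}(u)|$ with the quantity controlled in \cite[Lemma~10]{PettieSu15}, reads off the tail $\exp(-\Omega(\delta^2T))$ from \cite[Corollary~5]{PettieSu15}, and then only checks that $\delta^2T\gg\log\Delta$.

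You instead reprove the tail bound directly by a two-stage exposure.
\begin{enumerate}
\item Conditioned on $c\notin\mathrm{Sel}_i(u)$, the retention indicators $J_x$ are independent with mean exactly $\beta_i'$. This uses the invariant $|N_{i-1,c}(x)|\le t_{i-1}'$ to guarantee $r_c\le 1$, and the absence of $4$-cycles to make the sets $N_{i-1}(x)\setminus\{u\}$ pairwise disjoint.
\item Conditioning on $\mathcal G$ makes every $\widehat P_i(x)$ measurable, by the absence of triangles. The survival indicators then depend only on the independent sets $\mathrm{Sel}_i(x)$. On the palette-concentration event $\mathcal E$, each has conditional mean at most $(1-\pi_i)^{p_i'}=\alpha_i$.
\end{enumerate}

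The citation buys brevity. Your version checks the bound for exactly the modified schedule used here: $\pi_i=1/(Kt_{i-1}'+1)$, $\alpha_i$ defined through the truncated palette size $p_i'$, $\delta=\log^{-4}\Delta$, and $T=\Delta^{\eps_1}/3$. The reader no longer has to verify that these match the hypotheses of Pettie--Su's lemma. It also shows where each consequence of girth $5$ enters, using only Chernoff bounds and one conditioning step.

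Two cosmetic corrections are needed.
\begin{itemize}
\item Lemma~\ref{lem:g5-parameter} gives $p_i>T$, not $p_i'>T$. Since $p_i'=(1-o(1))p_i$, the exponent $\delta^2p_i'$ is still $\Omega(\delta^2T)$, so nothing changes.
\item After the second Chernoff step, the additive term is at most $(1+\delta/3)\delta T/3+\delta T/3=\tfrac23\delta T+\tfrac{\delta^2}{9}T$, not $\tfrac23\delta T$. This is still at most $\delta T$, just as $(1+\delta/3)^2\le 1+\delta$, so the final threshold $t_{i-1}'(1+\delta)\alpha_i\beta_i'+\delta T$ is reached as claimed.
\end{itemize}
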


\begin{proof}
In the notation of this paper, \(|\widehat N_{i,c}(u)|\) is exactly the quantity
controlled in \cite[Lemma~10]{PettieSu15}, and \cite[Corollary~5]{PettieSu15}
therefore gives
\[
\Prob\left[
|\widehat N_{i,c}(u)|>t_{i-1}'(1+\delta)\alpha_i\beta_i'+\delta T
\right]
\le \exp\bigl(-\Omega(\delta^2T)\bigr).
\]
In our settings, \(\delta=1/\log^4\Delta\) and
\(T=\Delta^{\eps_1}/3\), so \(\delta^2T=\Theta(\Delta^{\eps_1}/\log^8\Delta)\gg
\log\Delta\). Hence \(\exp(-\Omega(\delta^2T))\le \Delta^{-200}\), which proves
the lemma.
\end{proof}
\end{document}